\title{Expressivity of bisimulation pseudometrics over analytic state spaces}
\author{Daniel Luckhardt}{UCL, University College London, United Kingdom}{d.luckhardt@sheffield.ac.uk}{0000-0002-1886-5622}{}
\author{Harsh Beohar}{University of Sheffield, Sheffield, United Kingdom}{h.beohar@sheffield.ac.uk}{0000-0001-5256-1334}{}
\author{Clemens Kupke}{University of Strathclyde, Glasgow, United Kingdom}{clemens.kupke@strath.ac.uk}{0000-0002-0502-391X}{}
\authorrunning{D. Luckhardt, H. Beohar and C. Kupke}
 \keywords{Markov decision process, analytic space, perfect measure, Wasserstein distance, Kantorovic-Rubinstein duality
quantitative Hennessy-Milner theorem}
\begin{document}

\nolinenumbers

\maketitle

\begin{abstract}
    A Markov decision process (MDP) is a state-based dynamical system capable of describing probabilistic behaviour with rewards. In this paper, we view MDPs as coalgebras living in the category of analytic spaces, a very general class of measurable spaces. Note that analytic spaces were already studied in the literature on labelled Markov processes and bisimulation relations. \cktodo{some people might think we claim we are the fist to do view MDPs as coalg over Ana?} Our results are twofold. First, we define bisimulation pseudometrics over such coalgebras using the framework of fibrations. Second, we develop a quantitative modal logic for such coalgebras and prove a quantitative form of Hennessy-Milner theorem in this new setting stating that the bisimulation pseudometric corresponds to the logical distance induced by modal formulae.
\end{abstract}

\section{Introduction}
    \label{sec:intro}
Markov decision processes (MDPs) are a well known mathematical model for decision-theoretic planning \cite{DTP-MDP} and reinforcement learning \cite{RL-MDP}. Informally, an MDP can be seen as a generalisation of an automaton, where the transition function (for each action in the alphabet) gives a probability distribution over the state space together with a reward function that for each state and action gives a real-valued number.

Inspired from the previous work on bisimulation pseudometrics on labelled Markov processes \cite{DesharnaisEt-Concur99,DesharnaisEt2004:LMP} and probabilistic transition systems \cite{vBW-ICALP01,vBW-CONCUR01}, Ferns et al.\nolinebreak[3]\ \cite{FernsPanangadenPrecup04,FernsPanangadenPrecup11} defined (among other results) a notion of bisimulation pseudometric on the states of an MDP. Unlike \cite{FernsPanangadenPrecup04} and the previous work on bisimulation equivalence for MDPs \cite{GIVAN2003:MDP}, the systems considered in \cite{FernsPanangadenPrecup11} were MDPs with continuous state spaces.
Conformances over continuous state MDPs have found applications in representation learning \cite{pmlr-v97-gelada19a,zhang2021learning}  (a topic studied within the field of reinforcement learning).


In this paper, we propose a modal logic $\Sprache$ (cf.\nolinebreak[3]\ \cref{sec:log-MDPs}) with quantitative semantics, i.e. the semantics of each formula is given by a real-valued function for MDPs with continuous state space. We then prove a quantitative version of the Hennessy-Milner theorem (a well known result \cite{HML-1980} from concurrency theory), i.e. we show that the bisimulation pseudometrics on continuous state MDPs coincide with the logical distance in our logic. 
A major obstacle to overcome in the continuous setting is the definition of bisimulation pseudometrics itself. Moreover, the fundamental question ``what is a distance on a measurable space'' (besides the usual equations of a pseudometric, i.e.\nolinebreak[3]\ $d$ is reflexive, symmetric, and satisfies the triangle inequality) needs addressing. In \cite{FernsPanangadenPrecup11} the authors had to invoke an additional Polish structure inducing the \salg{} as their methods forced them to work with lower semi-continuous distance functions $d$. In this sequel, we work in a purely measure theoretic set-up with a far more general class of distances, universally measurable distance functions, cf.\ \cref{subsec:pred} for details.  \cktodo{what does general mean? more general?}\dltodo{I wouldn't put lower semi-measurability in the foreground, but universal measurability. The latter is a known concepts, while the former more auxiliary in this paper (perhaps we could say in future works that this concept may deserve more attention).}

Although our approach is rooted in the theory of fibrations \cite{JacobsCLTT}, the recent approaches \cite{beohar_et_al:LIPIcs.STACS.2024.10,KomoridaKatsumataKu-ExpressivityofQuant,KupkeRot-CoindPred-2020} to obtain expressive modal logic for coalgebras do not apply. For instance, our fibration $\predicates$ of predicates\footnote{Note that a fibration of predicates is more fundamental than a fibration of conformances like pseudometrics and equivalence relations, since the latter can be derived from the former.} over a state space has only countable many meets; thus, it is not a complete lattice fibration as required in \cite{beohar_et_al:LIPIcs.STACS.2024.10,KomoridaKatsumataKu-ExpressivityofQuant,KupkeRot-CoindPred-2020}. As a result, the codensity lifting used in \cite{beohar_et_al:LIPIcs.STACS.2024.10,KomoridaKatsumataKu-ExpressivityofQuant} to derive the Kantorovich lifting for the distribution endofunctor, cannot be used to derive the Kantorovich lifting for the Giry endofunctor over measurable spaces.

In the sequel, we recalibrate the fibration infrastructure in \cref{ssec:catnonsense}. Our inspiration is \cite{BonchiKonigPetrisan18} which presented a coupling-based lifting for an endofunctor that---when instantiated to distribution endofunctor---gives rise to the well-known Wasserstein lifting on probability distributions. Analogously, we will show in \cref{subsec:wlift} how to capture the Wasserstein lifting on probability measures. For our definition to work, we will restrict to a full subcategory of the category $\measSp$ of measurable spaces -- the category $\ana$ of analytic spaces.
Note that analytic spaces already appeared in the literature on labelled Markov processes (for instance, see \cite{DESHARNAIS2002163}) to show that logical equivalence induced by a modal logic given in \cite{DESHARNAIS2002163} coincide with probabilistic bisimilarity.

After having clarified our measure theoretic assumptions, 
we will define bisimulation metrics for MDPs as the least fixpoint of the following functional:
\[
\predicates (\measSp \times \measSp)
\xrightarrow{\sigma_X}
\predicates (\BfunctorMDP\measSp \times \BfunctorMDP\measSp)
\xrightarrow{\smash{(\gamma\times \gamma)^*}}
\predicates (\measSp \times \measSp),
\]
where $\BfunctorMDP$  is the endofunctor modelling MDPs as given in \cref{sec:MDP} and $\sigma$ is a lifting of distance functions (or put simply, a distance lifting)\cktodo{what is "the distance lifiting" referring to - completely unclear}\dltodo{Does "Wasserstein" improves the situation?}\hbtodo{No, it doesn't} 
for $\BfunctorMDP$ as given in \cref{sec:bdistance}.
The definition of our distance lifting $\sigma$ is parameterised by a discount factor $c\in[0,1]$.  Furthermore, thanks to the Kantorovich-Rubinstein duality for measurable spaces \cite[Theorem~5]{RamachandranRüschendorf95}, the above functional corresponds to the functional given in \cite[Theorem~3.12]{FernsPanangadenPrecup11} whose least fixpoint is the bisimulation pseudometric on the state space of an MDP.

Moving on to our modal logic and comparing with the expressive modal logic for probabilistic systems studied in \cite{ChenClercPanaganden25,DesharnaisEt2004:LMP,vBW-ICALP01}, the key distinguishing feature of our work\cktodo{please check} is the semantics of our diamond modality $\diamond_a \varphi$. Intuitively, the $\interpret{\diamond_a \varphi}(x)$ (for a state $x$) gives the expected value of landing in an $a$-successor from $x$ with some fixed probability $c \in [0,1]$ combined with the reward for $a$ when staying in the state $x$ with probability $1-c$. In other words, the semantics of $\diamond_a \varphi$ is a convex combination of expected value of moving to an $a$-successor and the reward for $a$ at a state. Unlike the above references, we were unable (without breaking the proof of the adequacy result) to further decompose this modality into the traditional diamond modality and 0-ary reward modality as defined in \cite{ChenClercPanaganden25,DesharnaisEt2004:LMP,vBW-ICALP01}.

This paper is organised as follows. In \cref{sec:prelim}, we recall the preliminaries from measure theory and calibrate our fibration setup for measurable spaces. In \cref{sec:MDP}, we give the concrete definition of behaviour endofunctors that model Markov reward processes (MRPs) and MDPs and establish a bifibration of predicates. The former can be seen as unlabelled version of an MDP. In \cref{sec:bdistance}, we capture the bisimulation pseudometrics for both MRPs and MDPs as least fixpoint of a functional as explained above. In \cref{sec:log-MDPs}, we define our modal logic and establish the adequacy and expressivity results. In \cref{sec:conc}, we end this paper by a discussion on related work and potential topics for future work. The proofs of all lemmas and theorems can be found in the clearly marked appendix.

\cktodo{Main "weakness" for me currently is that it's not clear how we improve the existing fibration-based approaches - is this "just" by weakening the order-theoretic assumptions? In what sense are we here inspired by Petrisan et al?}

\section{Preliminaries}\label{sec:prelim}

\subsection{Capturing behavioural conformances categorically}
\label{ssec:catnonsense}
%
In this subsection, we refine the construction \cite{BonchiKonigPetrisan18} of coupling-based lifting for an endofunctor on $\set$ by working with two different fibrations of predicates (cf.\nolinebreak[3]\ Assumptions~\cref{ass:a1} and \cref{ass:a2}). 
Moreover, our presentation works in a category $\cat$ having products; unlike, in \cite{BonchiKonigPetrisan18}, where the coalgebras were living in $\set$. 
This will provide us a blueprint to define a bisimulation distance for both MRPs and MDPs when viewed as coalgebras in \cref{sec:MDP}. 

Throughout this section, let $\pos$ be the category of posets and order preserving maps; and, let $B\colon \cat \to \cat$ be the functor modelling the branching type of systems of interest.

\begin{axiom}
    \item\label{ass:a1} There is an indexed category ${\predicates}\colon \cat^{\text{op}} \to \pos$ such that $\pred X$ (for $X\in\cat$) is a poset and $\pred f\colon \pred Y \to \pred X$ (for $f\colon X\to Y \in \cat$) is order preserving.
\end{axiom}
Henceforth we write the reindexing $f^*$ (instead of $\pred f$) which is customary in the literature on fibrations \cite{JacobsCLTT}. In the sequel, we will view an element $p\in \pred X$ intuitively as a predicate over an object $X\in\cat$. The idea is to view $\predicates$ as a semantic universe in which we interpret the formulae of a modal logic. Thus, the operators of a modal logic (like negation, conjunction etcetera) must be operators definable over the fibre $\pred X$. 

Furthermore, the authors in \cite{BonchiKonigPetrisan18}  required that $\predicates$ is rather a bifibration, which is difficult to obtain in general for arbitrary measurable spaces (cf.\nolinebreak[3]\ \cref{subsec:pred}). Our observation, which leads to \cref{ass:a2}, is that we can arrange both universally measurable predicates and lower semi-measurable predicates in such a way that the latter results in a bifibration structure and the former acts as a semantic universe to interpret our modal formulae.
\begin{axiom}
\setcounter{axiomctr}{1}
    \item\label{ass:a2} there is an indexed category $\spredicates$ such that $\spredicates$ is a subfunctor of $\predicates$. 
    Moreover, the indexed category $\spredicates$ has a bifibration structure, i.e.\nolinebreak[3]\ for every $f\colon X\to Y\in \cat$ the reindexing functor $f^*$ has a left adjoint $\exists_f\colon\spred X \to \spred Y$.
\end{axiom}

Now, following \cite{BonchiKonigPetrisan18}, one needs a predicate lifting to define a coupling-based lifting, which in our setting due to the presence of two fibrations of predicates takes the following shape.
\begin{axiom}
\setcounter{axiomctr}{2}
    \item\label{ass:a3} there is an indexed morphism $\sigma \colon {\predicates} \Rightarrow \spredicates \circ B^\text{op}$, i.e. $\sigma$ is a natural transformation of type ${\predicates} \Rightarrow \spredicates \circ B^\text{op}$.
\end{axiom}
Thanks to the above three assumptions, every predicate lifting $\sigma$ induces a lifting $\hat\sigma$, which simplifies to the composition given in \cite[Eq.~5]{BonchiKonigPetrisan18} when $\spredicates =\predicates$). 
\begin{equation}\label{eq:clift}
\pred {X\times X} \xrightarrow{\sigma_{X\times X}} \spred {B(X\times X)} 
\xrightarrow{\exists_{\pi_X}} \spred {BX \times BX} \hookrightarrow \pred {BX \times BX},
\end{equation}
where $\pi_X \colon B(X\times X) \to BX \times BX$ is the unique map such that $\proj i^{BX} \circ \pi_X = B(\proj i^X)$ and $\proj i \colon X \times X \to X $ are the obvious projection maps (for $i\in\{1,2\}$).
It is this lifting which will give us the usual Wasserstein lifting for a Giry functor $B$ defined in \cref{sec:MDP}. 
Now, for a given coalgebra $\gamma\colon X \to BX \in \cat$, simply take the greatest fixpoint of the functional given below to define a \emph{coupling-based lifting} for the endofunctor $B$.
\begin{equation}\label{eq:gfp-clift}
\pred{X \times X} \xrightarrow{\hat \sigma_X} \pred{ BX \times BX } \xrightarrow{(\gamma \times\gamma)^*} \pred{X \times X}.
\end{equation}
To ensure this fixpoint exists, we require the following assumption
\begin{axiom}
\setcounter{axiomctr}{3}
    \item\label{ass:a5} the indexed category $\predicates$ has countable fibred limits, i.e.\nolinebreak[3]\ each fibre of $\predicates$ has countable meets and these countable meets are preserved by the reindexing operation.
\end{axiom}
\begin{proposition}\label{prop:KFixpoint}
    If the induced lifting $\hat\sigma$ defined in \eqref{eq:clift} is (Scott) cocontiuous, then the greatest fixpoint of the functional given in \eqref{eq:gfp-clift} exists.
\end{proposition}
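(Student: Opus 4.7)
The plan is a Kleene-style co-iteration along $\omega$: start from the top element of $\pred{X \times X}$, iterate the functional, and use the cocontinuity hypothesis to show the descending chain stabilises at stage $\omega$. Write $F = (\gamma \times \gamma)^* \circ \hat\sigma_X$ for the functional in \eqref{eq:gfp-clift}.

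First I would check that $F$ is monotone. The reindexing $(\gamma \times \gamma)^*$ is order-preserving by Assumption~\ref{ass:a1}; the left adjoint $\exists_{\pi_X}$ is monotone by Assumption~\ref{ass:a2}; the fibrewise inclusion of $\spredicates$ into $\predicates$ is monotone since $\spredicates$ is a subfunctor; and the component $\sigma_{X \times X}$ is a morphism in $\pos$ by Assumption~\ref{ass:a3}. Assumption~\ref{ass:a5} supplies a top element $\top \in \pred{X\times X}$ as the empty countable meet, so the sequence $p_0 = \top$, $p_{n+1} = F(p_n)$ is a descending $\omega$-chain by monotonicity, and I can form $p_\omega = \bigwedge_{n \in \omega} p_n$ using Assumption~\ref{ass:a5} once more.

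The crux is to show $F(p_\omega) = p_\omega$. Scott cocontinuity of $\hat\sigma_X$ gives $\hat\sigma_X(\bigwedge_n p_n) = \bigwedge_n \hat\sigma_X(p_n)$, and Assumption~\ref{ass:a5} ensures $(\gamma\times\gamma)^*$ commutes with this countable meet, yielding
\[
F(p_\omega) = \bigwedge_{n \in \omega} (\gamma\times\gamma)^*(\hat\sigma_X(p_n)) = \bigwedge_{n \in \omega} p_{n+1} = p_\omega,
\]
where the last equality uses $p_0 \ge p_1$ so that dropping the initial term leaves the meet unchanged. That $p_\omega$ is the \emph{greatest} fixpoint follows by induction on $n$: any fixpoint $q$ satisfies $q \le p_0 = \top$, and $q \le p_n$ entails $q = F(q) \le F(p_n) = p_{n+1}$, hence $q \le p_\omega$.

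The one delicate point I anticipate is making precise what ``Scott cocontinuity'' means in a fibre that is only countably meet-complete rather than being a full lattice; the intended reading has to be preservation of meets of countable descending chains, which is exactly what the above argument requires and exactly what the chain produced by Assumption~\ref{ass:a5} furnishes.
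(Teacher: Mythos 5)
Your proof is correct and is exactly the argument the paper intends: the remark following \cref{prop:KFixpoint} states that the result is an application of the dual (co-iterative) version of Kleene's fixpoint theorem, which is precisely the $\omega$-chain construction you carry out, with Assumption~\cref{ass:a5} supplying the top element and the countable meets preserved by reindexing, and cocontinuity of $\hat\sigma$ closing the gap at stage $\omega$. Your closing caveat about reading ``Scott cocontinuity'' as preservation of meets of countable descending chains also matches the paper's usage.
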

\begin{remark}
It should be noted that, in the above proposition, we use the dual version to apply Kleene's fixpoint theorem on the lattice $\predicates$. However, our concrete predicates in Section~\ref{sec:MDP} will be ordered by pointwise lifting of the dual order (i.e.\nolinebreak[3]\ $\geq$) on the unit interval $[0,1]$, which then leads to the application of the usual Kleene's fixpoint theorem. 
\end{remark}

\subsection{Measurable spaces}
    \label{ssec:measurableSp}
A measurable space is a pair $ \measSp $ consisting of a set $\spSet$ thought of as a space, e.g.\nolinebreak[3]\ state space of an MDP, and a $\sigma$-algebra $\sAlg \subseteq \powerSet(\spSet)$ (whose elements are called \definiendum{measurable} sets) of subsets of $\spSet$ stable under the complement operation $\complOp{\blank}$ and countable union (including empty union). 
In applications the \salg{} often contains a given topology, i.e.\ the collection of open sets $\topy$, on the state space.
Often one considers the minimal such \salg{}, the \definiendum{Borel-\salg}, denoted $\Borel{\topy}$.
In this context, we use the notation $\pavGen\sigma{\paving P} $ to denote the minimal \salg{} containing a given $\paving P \subseteq \powerSet(\spSet)$.
The elements of $\Borel{\topy}=\pavGen\sigma{\topy}$ are called \emph{Borel} sets and $ \BorelSp{\topy} $ is called a \definiendum{Borel space}.
If a given measurable space $\measSp$ stems from a Polish space, a completely metrisable\cktodo{reference on basics on metric spaces?} separable topological space, then $\measSp$ is called \definiendum{standard}. 
The collection of all measurable spaces form a category $\Cat{Meas}$ when endowed with maps that inversely preserve measurable sets, so-called \definiendum{measurable maps}.
The term \definiendum{$\sAlg$-$\sAlg*$-measurable} for a measurable morphism $ \measSp \to \measSp* $ is also used to be specific about \salg{}s.
The category $\Meas$ has arbitrary products; in particular, $\measSp \times \measSp* = (\spSet\times \spSet*, \sAlg \measTimes \sAlg*)$ where $\sAlg\otimes \sAlg*$ is the $\sigma$-algebra generated by the set $\setBuilder{U \times V}{ U \in \sAlg \land V \in\sAlg*}$.

A \emph{probability measure} $\meas$ on a measurable space $\measSp$ is a function of type $\sAlg \to [0,1]$ such that $\meas(X) = 1$ and $\meas( \bigcup_{i\in\nat} A_i ) = \sum_{i\in \nat} \meas(A_i)$ for any sequence of pairwise disjoint sets $(A_i\in\sAlg)_{i\in\nat}$. 
We denote by $\giry\measSp$ the collection of all probability measures on $\measSp$ endowed---going back to \cite{Giry82}---with the \salg{} generated by all the evaluation maps $\eval{A}\colon \giry \measSp \to [0,1]$ (one for each $A \in \sAlg$) given by the mapping $\meas \mapsto \meas(A)$, i.e.\ the minimal \salg{} making all maps $ \eval{A}\colon \giry \measSp \to \BorelSp{[0,1]} $ measurable. 


We restrict the exposition of this theory to it bare minimum with some additional background given in \cref{sec:notation_background}.
We call a subset of a measurable space $\measSp$ \definiendum{\SuslinText{}}, if it is the image of an element in $ \Borel{\mathbb N^{\mathbb N}} \measTimes \sAlg $ along the projection $ \mathbb N^{\mathbb N}\times\spSet \to \spSet$.
Let $\Souslin{\sAlg}$ denote the set of all \SuslinText{} subsets of $\measSp$.
A measurable space $\measSp$ is called \definiendum{analytic} if it is homeomorphic to a \SuslinText{} set of a standard space $\measSp*$ endowed with the \definiendum{restricted \salg}, i.e.\nolinebreak[3]\ $
    \sAlg*|_{\spSet} \coloneqq \setBuilder{B \cap \spSet}{ B \in \sAlg* }
$.
Such constructions are also a common subject in descriptive set theory, cf.\nolinebreak[3]\ \cite{Kechris12} and \cite[Ch.~42]{Fremlin}.
By $\Cat{Ana}$ we denote the full subcategory of $\Cat{Meas}$ of analytic spaces, which admits countable products.
The endofunctor $\giry$ restricts to 
\begin{CALCOversion}
    $\Cat{Ana}$.    
\end{CALCOversion}
\begin{arXivVersion}
    $\Cat{Ana}$ (as a direct consequence of \cref{prop:giry_sAlg_generators_measurable}).
\end{arXivVersion}
To provide full generality, cf.\nolinebreak[3]\ \cref{rem:generalisation_dirIm}, of our results, we also introduce the concept of a smooth space \cite{Falkner81}: $\measSp$ is \definiendum{smooth}, cf.~\cref{ssec:SouslinOp_smoothSp}, if for any other measurable space $\measSp*$ any projection to $\spSet*$ of a Borel (or equivalently \SuslinText{}) subset of $\measSp* \times \measSp$ is \SuslinText{}. Nevertheless, it is perfectly fine to assume analytic spaces throughout at least for the first reading.

Given a measure space (i.e., a measurable space with a measure) $\measdSp$ one may wish to extend $\sAlg$.
The $\meas$-completion of $\measSp$, $\overline{\sAlg}^{\meas} \supseteq \sAlg  $, is defined as the smallest $\sigma$-algebra
containing $\sAlg \cup \{ B \subseteq X \mid \exists A \in \sAlg. \; 
m(A) = 0 \mbox{ and } B \subseteq A \}$.
A way to describe $ \overline{\measSp}^{\meas} $, when $\meas$ is a probability measure, is as the set of all $A$ such that there are $A_-, A_+$ with $\meas(A_-) = \meas(A_+)$ and $A_- \subseteq A \subseteq A_+$.
The measure $\meas$ uniquely extends to a measure on $\overline{\sAlg}^{\meas}$.
Given a measurable space $\measSp$, the \definiendum{universal completion} of $\sAlg$ (denoted $\overline{\sAlg}$) is the intersection of all completions of $\sAlg$ with respect to any (probability) measure on $\measSp$.
The universal completion is quite big; especially it contains $\Souslin{\sAlg}$ (so every Suslin set is universally measurable).

\section{Markov decision processes}
    \label{sec:MDP}

In this section we are going to instantiate our categorical parameters (cf. Assumptions~\cref{ass:a1} - \cref{ass:a5}) in the setting of measurable spaces. We begin by recalling the definition of a Markov decision processes from \cite{FernsPanangadenPrecup11} and view them as coalgebras.

\begin{definition}
        \label{def:MarkovProcess}
    A (continuous) \emph{Markov reward process} (MRP) is a coalgebra $\gamma$ of type
    \[\measSp \to \giry\measSp \times \BorelSp{[0,1]} \in \Meas.\]
    In other words, $\gamma$ is given by a pair $(\gamma^0,\gamma^1)$ of maps satisfying the following properties:
    \begin{itemize}
        \item $\gamma^0(x)$ (for each $x\in X$) is a probability measure;
        \item $\gamma^0(\_)(U)\colon \spSet \to [0,1]$ (for each measurable set $U\in\sAlg$) is a measurable function; and 
        \item $\gamma^1$ is a measurable function.
    \end{itemize}
    Moreover, given a countable set $\actions$ of actions, we define a \emph{Markov decision process} (MDP) \cite{FernsPanangadenPrecup11} as a coalgebra $\gamma$ of type
    \[
    \measSp \to \prod_\actions \left(\giry \measSp \times \BorelSp{[0,1]}\right) \in \Meas.
    \]
    In other words, the map $\gamma(x)(a)$ (for each state $x\in X$ and action $a\in \actions$), is a Markov process. Henceforth we write $\gamma_{a,x}$ to denote $\gamma(x)(a)$; so, $\gamma_{a,x}^0$ corresponds to a probability measure and $\gamma_{a,x}^1$ corresponds to a `reward' at $x$ for an action $a$.
\end{definition}
Thus, the endofunctors of interest are the following:
\begin{itemize}
    \item $\BfunctorMP = \giry \times [0,1]$ whose coalgebras correspond to Markov reward processes
    \item $\BfunctorMDP = \prod_\actions \BfunctorMP$ whose coalgebras correspond to Markov decision processes.
\end{itemize}

\subsection{Fibrations induced universally/l.s.m.\ predicates}
    \label{subsec:pred}

Having fixed the type of systems, we now look into the issue of endowing a bifibration structure on the space of all Boolean/quantitative predicates. Consider the indexed category $\predicates \measSp = \Meas(\measSp,(2,\Borel{2}))$ of Boolean predicates, i.e. a predicate $p\in \predicates\measSp$ is a measurable function of type $X \to 2$. The reindexing functor $f^*$ (for a measurable function $f\colon \measSp \to \measSp*$) is given by the inverse image operation (since inverse image of measurable sets is measurable). It is well known (originally due to \SuslinText{} \cite{Suslin1917}) that Borel measurable sets even on standard spaces are not closed under direct images; thus as a result the left adjoint to reindexing functor cannot exist in general. Nevertheless if we weaken measurable sets to \SuslinText{} sets (which are equivalently analytic sets for analytic spaces, cf.\nolinebreak[3]\ \cite[421K]{Fremlin} and \cite[13.3iii)]{Kechris12}), then \SuslinText{} sets of analytic spaces are preserved by direct image onto analytic spaces, cf.\nolinebreak[3]\ \cite{BresslerSion64} for an in-depth discussion how to develop these concepts.

In lieu of the above discussion, we restrict our state spaces to be analytic, i.e. our working category for the remainder is the category $\ana$ of analytic spaces and measurable functions as morphisms.
This is, on the one hand, a bit more general than Polish spaces as required in \cite{FernsPanangadenPrecup11} and on the other hand conceptually more elegant, as we are only working with measurable spaces and do not require an underlying topology.
Moreover, we consider quantitative predicates on an analytic space $\measSp$ to be \emph{lower semimeasurable} (l.s.m.) functions (a real-valued generalisation of \SuslinText{} sets) of type $X \to [0,1]$.

\begin{definition}
    Let $\measSp$ be an analytic space. Then a function $p \colon X \to [0,1]$ is \definiendum{lower semi-measurable}, \definiendum{l.s.m.} for short, (resp.\nolinebreak[3]\ \definiendum{universally measurable}) predicate iff the preimage of the interval $[0,r]$ (for every $r\in[0,1]$) under $p$ is a \SuslinText{} set (resp.\nolinebreak[3]\ universally measurable set), i.e. for every $r\in [0,1]$, $p^{-1}([0,r]) \in \Souslin \sAlg$ (resp.\nolinebreak[3]\ $p^{-1}([0,r]) \in \overline \sAlg$).
\end{definition}

The term ``lower semi-measurable'' is chosen in parallel to the term ``lower semi-continuous'' in topology which refers to a real-valued function which is continuous with respect to the upper-interval topology $ \setBuilder{ (r, \infty) }{r \in \mathbb R} $.

We can arrange l.s.m.\nolinebreak[3]\ predicates in an indexed category as follows. Consider the mapping $\spredicates \colon \ana^\text{op} \to \pos$ such that $\spredicates \measSp$ is the set of all l.s.m.\nolinebreak[3]\ predicates where the ordering relation is the pointwise lifting of the `greater-than-equality' relation on the unit interval\footnote{In other words, we are viewing the unit interval as the Lawvere quantale $([0,1],\geq,+)$ where $+$ is the truncated addition. So, $p\leq q \iff \forall_{x}\ p(x) \geq q(x)$.}. The reindexing $f^*$ (for an arrow $f\colon \measSp \to \measSp* \in \ana$) is given by pre-composition, i.e.
\[
f^* (q) = q\circ f \qquad \text{(for every $q \in \spredicates \measSp*$).}
\]

\begin{lemma}
        \label{lem:directImage}
    The indexed category $\spredicates$ has countable fibred (co)limits, i.e. each fibre has countable meets and countable joins which are preserved by the reindexing functor. Moreover, $\spredicates$ has a bifibration structure, i.e., for every $f\colon \measSp \to \measSp*\in\ana$, the reindexing functor has a left adjoint $\exists_f$ given by:
    \[
    \exists_f (p) (y) = \inf_{f(x)=y} p(x) \qquad \text{(for every $p\in\spredicates\measSp,y\in Y$)}.
    \]
\end{lemma}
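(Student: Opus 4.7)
The plan is to verify four claims in turn: (a) each fibre $\spred \measSp$ has countable meets, (b) each fibre has countable joins, (c) the reindexing $f^*$ preserves both, and (d) the prescribed map $\exists_f$ is well-defined in $\spredicates$ and left adjoint to $f^*$. The common tool throughout is the closure theory of Suslin sets in analytic spaces: (i) countable unions and intersections of Suslin sets are Suslin, (ii) measurable preimages of Suslin sets are Suslin, and (iii) measurable direct images of Suslin sets \emph{between analytic spaces} are Suslin. Property (iii) is precisely the point that singled out $\ana$ in \cref{subsec:pred} and will be the only non-formal ingredient.

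For (a)--(c), recall that in the reversed order on $\spred \measSp$, meets are pointwise suprema and joins are pointwise infima. For $q := \sup_i p_i$ one has $q^{-1}([0,r]) = \bigcap_i p_i^{-1}([0,r])$, which is Suslin by (i). For $q := \inf_i p_i$ one has instead
\[
q^{-1}([0,r]) \;=\; \bigcap_{n \in \nat} \bigcup_i p_i^{-1}\bigl([0,r+1/n)\bigr),
\]
where each $p_i^{-1}([0, r+1/n))$ is a countable union (over rationals $s<r+1/n$) of the Suslin sets $p_i^{-1}([0,s])$, hence Suslin, and then one applies (i). Reindexing $f^*$ is pre-composition with a measurable map, so it commutes set-theoretically with pointwise suprema and infima, and sends l.s.m.\ to l.s.m.\ by (ii).

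For (d), the adjunction $\exists_f(p) \leq q \iff p \leq f^*(q)$ in the reversed order unfolds pointwise on both sides to $p(x) \geq q(f(x))$ for all $x \in \spSet$, so it is immediate. The substantive content is that $\exists_f(p)$ is itself l.s.m.; here, for $r < 1$, one computes
\[
\bigl(\exists_f(p)\bigr)^{-1}([0,r]) \;=\; \bigcap_{n \in \nat} f\bigl(p^{-1}([0,r+1/n))\bigr),
\]
each piece $p^{-1}([0, r+1/n))$ being Suslin as above, its direct image under $f$ being Suslin by (iii), and the countable intersection being Suslin by (i); the case $r = 1$ trivially yields the whole of $\spSet*$.

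The main obstacle---and the only non-routine step---is (iii): on arbitrary measurable spaces even Borel sets fail to be stable under measurable direct image, which is precisely the classical phenomenon that motivated the introduction of analytic sets, and the closure $f(\Souslin{\sAlg}) \subseteq \Souslin{\sAlg*}$ between analytic spaces (cf.\ \cite{BresslerSion64} and \cite[421K]{Fremlin}) is what makes $\spredicates$ into a bifibration over $\ana$. Everything else reduces to pointwise set theory plus the elementary calculus of the Suslin operation.
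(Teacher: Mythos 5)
Your proof is correct and takes essentially the same approach as the paper: countable fibred (co)limits come from closure of Suslin sets under countable unions and intersections, reindexing from preimage stability, the adjunction is a pointwise triviality, and well-definedness of $\exists_f$ rests on stability of Suslin sets under direct images between analytic spaces (which the paper's appendix re-derives by writing the image as a projection of the graph of $f$ intersected with a Suslin set in the product and invoking smoothness, where you cite the closure property as a black box). If anything you are more careful about non-attained infima: the paper's displayed identity $(\inf_{i}P_i)^{-1}([0,r])=\bigcup_i P_i^{-1}([0,r])$ holds only as an inclusion in general, and your decompositions of the form $\bigcap_{n}\bigcup_i p_i^{-1}([0,r+1/n))$ (for both the fibrewise infimum and for $\exists_f$, with the $r=1$ case treated separately) are the correct repair.
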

\begin{remark}
    \label{rem:generalisation_dirIm}
It should be noted that the existence of a left adjoint can be stated in more general terms by requiring that $\measSp$ is a smooth space, cf.
\cref{ssec:SouslinOp_smoothSp},
and $\measSp*$ is countably separated, i.e.\ there is a countable family of measurable sets distinguishing every pair of distinct points. 
Moreover, 
Axiom \ref{ass:a2} could be weakened to maps of type $f\colon B X\to Y\in \cat$:
  \begin{list}{
   \textcolor{lipicsGray}{\sffamily\bfseries\upshape\mathversion{bold}A\arabic{axiomctr}'.}
  }{%
    \usecounter{axiomctr}%
    \setlength{\leftmargin}{.33em} 
  }%
    \item
        the reindexing functor $f^*$ has a left adjoint
    $\exists_f\colon\spred {BX} \to \spred Y $.
\end{list}
Any universally measurable subset of a standard space is countably separated as a measurable space, but
\begin{arXivVersion}
    also---in virtue of \cref{cor:predLifting_semimeas}---an
\end{arXivVersion}
\begin{CALCOversion}
    also an
\end{CALCOversion}
analytic space.
So our construction generalises to the full subcategory of $\Cat{Meas}$ of measurable spaces expressible in this form.
\end{remark}

To show that \cref{ass:a2} is satisfied, it remains to define an indexed category $\predicates\colon \ana^{\text{op}} \to \pos$ such that each fibre $\spredicates\measSp$ is contained in $\predicates \measSp$. We disregard the trivial definition, i.e. $\predicates = \spredicates$, since \SuslinText{} sets are not closed under complementation. As a result, we cannot give semantics to the negation operator in our logic. Nonetheless, it is also known that \SuslinText{} sets are universally measurable sets \cite{Doberkat14}, so we simply let $\predicates\measSp$ be the set of universally measurable predicates on the analytic space $\measSp$.
\begin{proposition}
    Assumptions~\cref{ass:a1} and \cref{ass:a2} are satisfied by  $\predicates$ and $\spredicates$, respectively.
\end{proposition}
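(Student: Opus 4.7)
The plan splits naturally along the two assumptions. For \cref{ass:a1}, most of the structure comes for free: fibres are posets under the pointwise extension of $\geq$ on $[0,1]$, the reindexing $f^* = (-)\circ f$ is order-preserving (since composition respects pointwise comparison), and functoriality $(g\circ f)^* = f^*\circ g^*$ together with $\mathrm{id}^* = \mathrm{id}$ holds strictly. The only substantive clause is closure, i.e.\ that pre-composition with a measurable map sends universally measurable predicates to universally measurable predicates. I would reduce this, via the identity $(q\circ f)^{-1}([0,r]) = f^{-1}(q^{-1}([0,r]))$, to the statement that inverse images of universally measurable sets along measurable maps are universally measurable. The standard argument is to fix an arbitrary probability measure $\meas$ on $\measSp$, form its pushforward $f_*\meas$ on $\measSp*$, invoke universal measurability of $q^{-1}([0,r])$ with respect to $f_*\meas$ to sandwich it between sets $A_-, A_+$ of equal $f_*\meas$-measure in the $\sigma$-algebra of $\measSp*$, and pull back along $f$ to obtain the corresponding $\meas$-sandwich in $\sAlg$. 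Since $\meas$ was arbitrary, $f^{-1}(q^{-1}([0,r]))\in\overline{\sAlg}$.

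For \cref{ass:a2}, the genuine bifibration structure is already supplied by \cref{lem:directImage}, so all that is left is to exhibit $\spredicates$ as a subfunctor of $\predicates$. Fibrewise inclusion $\spredicates\measSp \subseteq \predicates\measSp$ follows from the observation at the end of \cref{ssec:measurableSp} that $\Souslin{\sAlg} \subseteq \overline{\sAlg}$: the sublevel sets witnessing lower semi-measurability are Suslin, hence universally measurable. Naturality of this inclusion with respect to reindexing is automatic because both $\predicates$ and $\spredicates$ act on morphisms by the same operation of pre-composition; no compatibility calculation is required beyond noting that the restriction $\spred f = (-)\circ f$ of $\pred f$ is well-defined by the closure statement above (applied now with l.s.m.\ in place of universally measurable predicates), which is itself either classical or a formal consequence of the analogous sandwich argument.

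The one spot worth a careful write-up---and really the only mildly subtle step---is the closure argument for \cref{ass:a1}. The pushforward trick is essentially folklore, but it is the only place in the proof where one must actively reason about universal completions rather than mere definitions; everything else reduces to categorical or order-theoretic bookkeeping, and the key auxiliary fact $\Souslin{\sAlg} \subseteq \overline{\sAlg}$ has already been recorded earlier in the paper. I therefore do not anticipate any further obstacles.
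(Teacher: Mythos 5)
Your proof is correct and follows essentially the same route as the paper, which delegates the bifibration structure of $\spredicates$ to \cref{lem:directImage}, obtains the fibrewise inclusion from the fact $\Souslin{\sAlg}\subseteq\overline{\sAlg}$ recorded in \cref{ssec:measurableSp}, and treats the pushforward--sandwich argument for closure of universally measurable predicates under precomposition as the standard one. The only slip is your parenthetical fallback for the l.s.m.\ case: there is no ``analogous sandwich argument'' for Suslin sets, the correct (and indeed classical) justification being the pullback identity for the Souslin operation, \cref{eq:Souslin_pullback}, combined with its monotonicity --- but since you lead with ``classical'', nothing is actually broken.
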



\section{Bisimulation distance}
    \label{sec:bdistance}
The objective of this section is to define bisimulation pseudometrics (cf. \cref{subsec:cliftingMDPs}) for Markov reward processes and MDPs as the least\footnote{Recall the predicates are ordered by $\geq$, so the greatest fixpoint is actually least fixpoint under the usual order $\leq$.} fixpoint of a functional given in \eqref{eq:gfp-clift} on page~\pageref{eq:gfp-clift} where $B=\{\BfunctorMP,\BfunctorMDP\}$. In both cases, the definition of a pseudometric rests on a coupling-based lifting \eqref{eq:clift} for the Giry endofunctor $\giry$ which we will work out in the following subsection.

\subsection{Wasserstein lifting categorically}\label{subsec:wlift}
We begin by defining a predicate lifting for $\giry$ (i.e.  when $B=\giry$ in Assumption~\cref{ass:a3}). Consider the mapping $\sigma_{\measSp} \colon \predicates \measSp \to \spred {\giry \measSp}$ given by
\begin{equation}\label{eq:ExpectationLifting}
\sigma_{\measSp} (p) (\meas) = \int p \diff \meas \quad (\text{for every $p\in\predicates{\measSp},\meas \in \giry\measSp$}).
\end{equation}
Henceforth, we drop the sigma-algebra notation from the subscript whenever it is clear from the context. Thus, $\sigma_X(p)(\meas)$ is the expectation of random variable $p$ under the measure $\meas$.

\begin{theorem}
        \label{thm:predLifting_semimeas}
    The mapping $\sigma$ defined in \eqref{eq:ExpectationLifting} is a natural transformation valued in $\spredicates$; thus, an indexed category morphism. Moreover, $\sigma$ preserves directed suprema which is a consequence of the monotone convergence theorem well known in measure theory.
\end{theorem}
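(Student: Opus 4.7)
My plan is to verify the three claims separately: naturality of $\sigma$, the assertion that $\sigma_X(p)$ lies in $\spred{\giry X}$, and Scott-cocontinuity---with the second claim being the main technical content.

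For naturality, the statement reduces pointwise to the standard change-of-variables identity: for every $f\colon X \to Y$ in $\ana$, $q \in \pred Y$ and $\meas \in \giry X$,
\[
\sigma_X(f^*q)(\meas) \;=\; \int (q \circ f)\,\diff\meas \;=\; \int q\,\diff(f_*\meas) \;=\; \sigma_Y(q)\bigl((\giry f)(\meas)\bigr),
\]
so $\sigma_X \circ f^* = (\giry f)^* \circ \sigma_Y$, with the only input being the pushforward integration identity.

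For the l.s.m.\ claim, I would first approximate an arbitrary universally measurable $p\colon X\to[0,1]$ from below by an increasing sequence of simple universally measurable functions $s_n = \sum_{k} c_{n,k}\,\mathbf{1}_{A_{n,k}}$. Since $\spred{\giry X}$ is closed under countable directed joins by \cref{lem:directImage} and integration is additive, it suffices to show that $\eval{A}\colon \meas \mapsto \meas(A)$ is l.s.m.\ for every universally measurable $A \subseteq X$. For $A\in\sAlg$ this is immediate from the definition of the Giry $\sigma$-algebra. For $A\in\Souslin{\sAlg}$ I would use a monotone Suslin-scheme representation $A = \bigcup_{s\in\mathbb N^{\mathbb N}} \bigcap_n E_{s|n}$ with $E_\tau \in \sAlg$, and realise $\meas(A)$ as a countable directed supremum of measures of Borel finite-tree approximants, so that $\eval{A}$ arises as a directed join of measurable functions. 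For general universally measurable $A$, I expect to exploit the analytic structure of $X$ via descriptive-set-theoretic inner approximation by Suslin subsets; this reduction from universal measurability to Suslin measurability at the level of $\giry X$ is the step I expect to be the main technical obstacle.

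Scott-cocontinuity finally follows from the monotone convergence theorem: a family $(p_\alpha) \subseteq \pred X$ directed in the fibrewise order is, in pointwise terms, a downward-directed family of $[0,1]$-valued functions with pointwise infimum $p$, and monotone convergence for bounded decreasing nets yields $\int p_\alpha\,\diff \meas \downarrow \int p\,\diff \meas$ for every $\meas$---exactly the required fibrewise supremum equation in $\spred{\giry X}$.
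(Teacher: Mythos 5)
Your treatment of naturality (change of variables) and of monotonicity is the same as the paper's, and your reduction of the lower semi-measurability claim to the single evaluation map $\mathrm{ev}_A\colon \mu \mapsto \mu(A)$ via an increasing sequence of simple functions is a legitimate alternative to the paper's route (which instead uses the Choquet formula $\int p\,\mathrm d\mu = \int_0^1 \mu(p^{-1}((t,1]))\,\mathrm dt$ together with Riemann approximation of the decreasing integrand, cf.\ \cref{lem:decrFct_are_RiemannInt_disc}). But the step you yourself flag as the ``main technical obstacle'' is exactly where the proof lives, and your sketched plan for it does not go through. First, even for $A \in \Souslin{\mathcal A}$ the quantity $\mu(A)$ is \emph{not} a countable directed supremum of measures of finite-tree approximants: by capacitability it is $\sup_{\sigma \in \mathbb N^{\mathbb N}} \mu(K_\sigma)$ over Borel kernels indexed by the \emph{uncountable} set $\mathbb N^{\mathbb N}$, so $\{\mu : \mu(A) > r\}$ is an uncountable union and cannot be obtained from countable joins in $\spredicates(\giry\measSp)$; it has to be organised as a Suslin operation over a Suslin scheme of sets in the Giry $\sigma$-algebra. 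This re-coding is the combinatorial heart of the paper's argument (\cref{lem:generatorMeasurable_embedding} and \cref{prop:giry_sAlg_generators_measurable}). Second, for a general universally measurable $A$ there is no inner approximation by Suslin sets that works \emph{uniformly} in $\mu$: universal measurability only provides sandwiching sets $A_-^\mu \subseteq A \subseteq A_+^\mu$ depending on the measure, so the reduction ``from universal measurability to Suslin measurability at the level of $\giry\measSp$'' that you hope for is not available. The paper sidesteps this entirely by proving the statement for the \emph{outer measure} $\mu \mapsto \mu^*(A)$ of an \emph{arbitrary} subset $A$, using inner regularity of each $\mu$ with respect to a countable generating paving closed under the relevant operations (\cref{lem:countGen_closedGenPav,lem:countGen_innerReg}) and then the tree-embedding into $\omega^\omega$; for universally measurable $A$ one has $\mu^*(A)=\mu(A)$ and the claim follows. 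That idea is absent from your proposal, so the central claim remains unproved.

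Two smaller points. Your appeal to ``monotone convergence for bounded decreasing nets'' is false for uncountable directed families (the net of indicators of cofinite subsets of $[0,1]$ decreases pointwise to $0$ while all integrals equal $1$); the statement only holds for sequences, which suffices here because the fibres have only countable meets (Assumption~\cref{ass:a5}) and the fixpoint argument (\cref{prop:KFixpoint}) needs only $\omega$-chains --- this is exactly how the paper argues, via Levi's theorem on an $\omega$-chain. Also, if you do pursue the simple-function route, you should check explicitly that finite positive linear combinations of l.s.m.\ functions are l.s.m.\ (via rational decompositions of $\{f+g \le r\}$ into countable unions and intersections of Suslin sets) and that countable pointwise suprema of l.s.m.\ functions are l.s.m.; both hold by \cref{eq:Souslin_countableUnion,eq:Souslin_countableIntersection}, but neither is stated in your proposal.
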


Note that predicate lifting improves universally measurable predicates even to Borel measurable predicates for analytic spaces; the proof of this fact can be found in \cite{arXiv}.

Thus, \cref{ass:a3} is satisfied and invoking the $\hat\sigma$ given in \eqref{eq:clift} gives the usual Wasserstein lifting for the Giry endofunctor $\giry$ as expected. 
\begin{definition}
        \label{def:pmetric_coupling}
    A predicate $d\in \pred{\measSp \times \measSp}$ is a \emph{pseudometric} on $\measSp\in\ana$ iff $d$ is reflexive, symmetric, and satisfies the triangle inequality. 
    
    Moreover, a probability measure $\meas[c] \in \giry(\measSp \times \measSp)$ is a \emph{coupling} for two probability measures $\meas,\meas*$ iff $\giry(\proj 1)(\meas[c]) = \meas$ and $\giry(\proj 2)(\meas[c]) = \meas*$. We write $\couplings(\meas,\meas*)$ to denote the set of all couplings for the probability measures $\meas,\meas*$.
\end{definition}
\begin{proposition}\label{prop:clift-Giry}
Let $d$ be a pseudometric on a space $\measSp\in\ana$. Then, the lifting $\hat\sigma$ given in \eqref{eq:clift} evaluates to the following well known formula  associated with Wasserstein lifting of probability measures. Moreover, $\hat\sigma(d)$ is a pseudometric on $\giry\measSp$.
\[
\hat\sigma (d)(\meas,\meas*) = \inf_{\meas[c]\in K(\meas,\meas*)} \int d \diff \meas[c]
\]
\end{proposition}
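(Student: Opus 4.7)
The formula follows by directly unfolding \eqref{eq:clift}. Write $\pi_X\colon \giry(\measSp\times\measSp)\to \giry\measSp\times \giry\measSp$ for the marginal map appearing in \eqref{eq:clift}. By \cref{def:pmetric_coupling}, the fibre $\pi_X^{-1}(\meas,\meas*)$ is precisely $\couplings(\meas,\meas*)$. The predicate lifting $\sigma$ sends $d$ to the map $\meas[c]\mapsto\int d\,d\meas[c]$, and by \cref{lem:directImage} the direct image $\exists_{\pi_X}$ is computed as a pointwise infimum along fibres of $\pi_X$. Composing these three steps gives the asserted Wasserstein formula.

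For the pseudometric claim, reflexivity and symmetry are immediate. The pushforward of $\meas$ along the diagonal $\Delta(x)=(x,x)$ lies in $\couplings(\meas,\meas)$, and its integral of $d$ equals $\int d(x,x)\,d\meas(x)=0$, yielding $\hat\sigma(d)(\meas,\meas)=0$. Similarly, pushforward along the swap $\tau(x,y)=(y,x)$ induces a bijection $\couplings(\meas,\meas*)\to \couplings(\meas*,\meas)$ that preserves the integral of $d$ (since $d\circ\tau=d$), giving symmetry.

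The main obstacle is the triangle inequality, which I would establish via a gluing argument. Given $\meas[1],\meas[2],\meas[3]\in\giry\measSp$ and $\varepsilon>0$, choose $\meas[c]_{12}\in\couplings(\meas[1],\meas[2])$ and $\meas[c]_{23}\in\couplings(\meas[2],\meas[3])$ whose integrals against $d$ are within $\varepsilon/2$ of the respective infima. Using a regular conditional probability of $\meas[c]_{12}$ given its second coordinate, construct a probability measure $\meas[c]\in \giry(\measSp\times\measSp\times\measSp)$ whose $(1,2)$- and $(2,3)$-marginals are $\meas[c]_{12}$ and $\meas[c]_{23}$ respectively. The $(1,3)$-marginal of $\meas[c]$ then lies in $\couplings(\meas[1],\meas[3])$, and using $d(x_1,x_3)\leq d(x_1,x_2)+d(x_2,x_3)$ together with linearity of the integral,
$$\hat\sigma(d)(\meas[1],\meas[3])\leq \int d\,d\meas[c]_{12}+\int d\,d\meas[c]_{23} \leq \hat\sigma(d)(\meas[1],\meas[2])+\hat\sigma(d)(\meas[2],\meas[3])+\varepsilon,$$
and letting $\varepsilon\to 0$ concludes the argument.

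The only non-routine ingredient is the gluing, which rests on the existence of a regular conditional probability for $\meas[c]_{12}$ given its second marginal $\meas[2]$. Such disintegrations exist for probability measures on analytic spaces (see e.g.\ \cite{Kechris12,Fremlin}), so the construction goes through in our setting. One must also check that the universally measurable $d$ remains integrable against the pushforwards of $\meas[c]$ that arise; this is immediate because universal measurability of a predicate is preserved under pullback along any measurable map, and the relevant integrals may be taken against the $\meas[c]$-completion of the respective product $\sigma$-algebras.
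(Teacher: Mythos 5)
Your proposal is correct, and the formula-unfolding part (fibre of $\pi_X$ over a pair of measures equals the set of couplings, $\exists_{\pi_X}$ computes the fibrewise infimum per \cref{lem:directImage}, $\sigma$ is integration) is exactly how the paper obtains the Wasserstein formula. Where you genuinely diverge is the triangle inequality: the paper does not glue couplings but instead invokes the Kantorovich--Rubinstein duality (\cref{lem:KantorovicRubinstein}) and reads off all pseudometric axioms from the dual representation $\sup_{h}\int h\,\mathrm d(\mu-\nu)$, which is manifestly symmetric and subadditive in $(\mu,\nu)$; the paper even remarks that this is the point where perfectness of the measures (automatic over $\ana$) enters. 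Your gluing argument is the classical alternative. It buys independence from the duality theorem --- in particular it applies directly to a universally measurable pseudometric $d\in\pred{\measSp\times\measSp}$, whereas \cref{lem:KantorovicRubinstein} is stated for l.s.m.\ distances, so your route arguably covers the proposition as literally stated more cleanly --- but it imports a different nontrivial ingredient, namely the existence of regular conditional probabilities. You should make that dependence precise: disintegration is available here because analytic spaces are countably generated and every probability measure on them is perfect (cf.\ the discussion of perfect, countably fibered spaces in \cref{ssec:perfectMeasures} and \cite[\S 452]{Fremlin}), and this is not a weaker hypothesis than what the paper's duality route needs. Your side remarks (universal measurability is stable under preimage along measurable maps; integrals are taken w.r.t.\ completions) are correct and suffice to justify integrating $d(x_1,x_3)$ against the glued measure. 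Given that the paper needs \cref{lem:KantorovicRubinstein} anyway for expressivity, its choice costs nothing extra there, but as a standalone proof of this proposition your version is self-contained modulo disintegration and equally valid.
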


\subsection{Distance lifting for \texorpdfstring{$B=\{\BfunctorMP,\BfunctorMDP\}$}{B}}\label{subsec:cliftingMDPs}

One way to define the distance lifting $\dliftMP$ for $\BfunctorMP$, i.e. a map of type
\[\dliftMP_X\colon \pred {\measSp \times \measSp} \Rightarrow \pred {\BfunctorMP \measSp \times \BfunctorMP \measSp}, \]
is to first define a predicate lifting for $\BfunctorMP$ and then use the equation in \eqref{eq:clift} where $B=\BfunctorMP$. To this end one may follow \cite[Subsection~5.6.2]{KerstanPhD} in deriving a predicate lifting for $\BfunctorMP$ in a compositional manner. These results (though stated for the category of sets) can be generalised to measurable spaces, but they are only applicable when the underlying endofunctor preserves weak pullbacks. In particular, it is known that the Giry functor (a composite functor in the case of $\BfunctorMP$) does not preserve weak pullbacks in $\Meas$ \cite{Viglizzo05}.

So instead of compositionally deriving predicate liftings for $\BfunctorMP$ and then invoking \eqref{eq:clift}, we derive the coupling based lifting for $\BfunctorMP$ in three stages:
\begin{itemize}
    \item first, we view $\BfunctorMP$ as the composition of functors $B_I \circ \giry$ where $B_I\colon \ana \to \ana$ maps every space to its product with the unit interval, i.e. 
    \[
    B_I = \text{Id} \times ([0,1],\Borel{[0,1]}).
    \]
    \item second, for a fixed $c \in [0,1]$, we define 
    \[ \sigma^c_X \colon \pred {\measSp \times \measSp} \Rightarrow \pred {B_I \measSp \times B_I \measSp}\] as
    $\sigma^c_X(d)((x,r),(y,s)) = cd(x,y) + (1-c)|r-s|$, for $x,y\in X$ and $r,s\in[0,1]$.
    \item third, recall $\hat\sigma$ from \cref{prop:clift-Giry} and let $\dliftMP$ be the composition: 
    \[
    \pred {\measSp \times \measSp} \xrightarrow{\hat\sigma_X} \pred {\giry \measSp \times \giry \measSp} \xrightarrow{\sigma_{\giry X}^c} \pred {\BfunctorMP \measSp \times \BfunctorMP \measSp }.
    \]
\end{itemize}
Note that $c$ may take the extremal values 0 and 1.
This is possible---in contrast to \cite{FernsPanangadenPrecup11}---as the bisimulation distance is not obtained using a contraction-based fixpoint argument.
However, in the extreme cases the bisimulation distance would not take into account either the transition or the reward part.

\begin{lemma}
        \label{lem:dliftingForMP}
    The above mapping $\sigma^c$ is well defined. Moreover, for any pseudometric $d\in \pred{\measSp \times \measSp}$, the lift $ \dliftMP_X(d) $ is given by
    \[
            \dliftMP_X(d)((\meas,r),(\meas*,s)) 
        =   c\left(\inf_{\meas[c]\in K(\meas,\meas*)} \int d \diff \meas[c]\right) + (1-c)|r-s|
    \]
    for $\meas,\meas*\in\giry \measSp$, and $r,s\in[0,1]$ and is a pseudometric.
\end{lemma}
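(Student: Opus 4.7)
The plan is to verify the three assertions of the lemma---well-definedness of $\sigma^c$, the closed-form expression for $\dliftMP_X(d)$, and the pseudometric property---in sequence, relying on \cref{prop:clift-Giry} for the heavy lifting in the third step. Throughout, I keep in mind that the fibre $\pred{Y}$ consists of universally measurable $[0,1]$-valued functions, so well-definedness means producing such a function and respecting the order.

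First I would check that $\sigma^c_X$ lands in $\pred{B_I\measSp\times B_I\measSp}$. Writing $((x,r),(y,s)) \mapsto cd(x,y) + (1-c)|r-s|$, the first summand is the reindexing of $d\in\pred{\measSp\times\measSp}$ along the Borel projection $B_I\measSp\times B_I\measSp \to \measSp\times\measSp$, hence universally measurable by \cref{ass:a1}; the second summand is continuous, hence Borel measurable on $[0,1]^2$. Closure of universally measurable functions under scalar multiplication and addition (a standard property of $\overline{\sAlg}$) yields the claim. Monotonicity in $d$ is immediate from the formula.

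Next I would compute $\dliftMP_X(d) = \sigma^c_{\giry\measSp} \circ \hat\sigma_X$ by unwinding the two definitions. Evaluating at $((\meas,r),(\meas*,s))$ gives
\[
\dliftMP_X(d)((\meas,r),(\meas*,s)) = c\cdot\hat\sigma_X(d)(\meas,\meas*) + (1-c)|r-s|,
\]
and \cref{prop:clift-Giry} identifies the first summand with $\inf_{\meas[c]\in K(\meas,\meas*)}\int d \diff\meas[c]$, yielding the stated formula.

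Finally, for the pseudometric property I would verify reflexivity, symmetry, and the triangle inequality term by term. Reflexivity and symmetry of $\dliftMP_X(d)$ follow at once from the corresponding properties of $\hat\sigma_X(d)$ (granted by \cref{prop:clift-Giry}) together with the trivial reflexivity and symmetry of $|r-s|$. For the triangle inequality, fix $(\meas_i,r_i)\in\BfunctorMP\measSp$ for $i\in\{1,2,3\}$ and combine the triangle inequality for $\hat\sigma_X(d)$ (again from \cref{prop:clift-Giry}) with the triangle inequality on $[0,1]$; taking a convex combination with weights $c$ and $1-c$ preserves the inequality and gives the desired bound. The only genuinely hard step---producing a glued coupling to establish the triangle inequality for $\hat\sigma_X(d)$---is encapsulated in \cref{prop:clift-Giry}, whose proof exploits the analytic-space hypothesis to supply regular conditional probabilities; no further measure-theoretic work is needed at this level.
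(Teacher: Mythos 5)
Your proposal is correct and follows the same three-part decomposition as the paper (well-definedness, unfolding to the closed form, pseudometric property via ``convex combination of pseudometrics is a pseudometric''), but it diverges on the one substantive step: the triangle inequality for the Wasserstein summand. You delegate this entirely to \cref{prop:clift-Giry}, picturing its proof as a coupling-gluing argument via regular conditional probabilities; the paper instead derives it from the Kantorovich--Rubinstein duality (\cref{lem:KantorovicRubinstein}), under which $\inf_{\meas[c]}\int d\diff\meas[c]$ becomes a supremum of differences of integrals of short functions, making the triangle inequality immediate without constructing any glued coupling. The paper explicitly flags that this is where perfectness of the measures enters (automatic over $\ana$), which matches your observation that the analytic-space hypothesis is doing the work at exactly this point, though through a different mechanism. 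Since \cref{prop:clift-Giry} precedes the lemma and asserts the pseudometric property, your citation is formally legitimate, but note that the paper never actually proves that clause of \cref{prop:clift-Giry} by gluing --- the duality argument in the proof of this lemma appears to be the intended justification, so your route quietly leans on an assertion the paper discharges only via the very duality you bypass. A minor point in your favour: your direct verification that $\sigma^c_X(d)$ is universally measurable (reindexing $d$ along the projection plus closure of $\overline{\sAlg}$-measurable functions under affine combinations) is cleaner than the paper's justification, which conflates the well-definedness of $\sigma^c$ with that of the composite $\dliftMP$.
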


In a similar vein, we can now define a distance lifting $\dliftMDP$ for $\BfunctorMDP$ (whose coalgebras model MDPs) by letting $\BfunctorMDP = B_\Sigma \circ \BfunctorMP$, where $B_\Sigma = \prod_\Sigma \text{Id}$. Now consider the distance lifting $\sigma^\Sigma_{(\blank)}$ for $B_\Sigma$ as follows:
\[
\sigma^\Sigma (d)(\vec x,\vec y) = \sup_{a\in\Sigma} d(\vec x(a),\vec y(a)), \quad \text{(for $\vec x,\vec y\in \prod_\Sigma \spSet$)}.
\]

\begin{lemma}\label{lem:dliftingForMDP}
    The above mapping $\sigma^\Sigma$ is well defined. Moreover, the mapping $\dliftMDP=\sigma^\Sigma_{\BfunctorMP (\blank) } \circ \dliftMP$ for a distance $d\in \pred{\measSp \times \measSp}$ evaluates to
    \[
    \dliftMDP(d)((\vec{\meas},\vec{r}),(\vec{\meas*},\vec{s})) =
    \sup_{a\in\Sigma} \left[ c\left(\inf_{\meas[c]\in K(\vec\meas(a),\vec{\meas*}(a))} \int d \diff \meas[c]\right) + (1-c)|\vec r(a)-\vec s(a)|\right].
    \]
\end{lemma}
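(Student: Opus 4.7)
The plan is to separate the two claims.

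For the well-definedness of $\sigma^\Sigma$, I would first rewrite it in the form $\sigma^\Sigma(d)=\sup_{a\in\actions}\bigl(d\circ(\pi_a\times\pi_a)\bigr)$, where $\pi_a\colon B_\Sigma Y\to Y$ denotes the $a$-th coordinate projection. Each factor inherits the measurability class of $d$ because $\pi_a\times\pi_a$ is measurable and both universal measurability and lower semi-measurability are preserved under precomposition with measurable maps. To conclude that the countable supremum again lies in the required fibre, I would invoke
\[
\sigma^\Sigma(d)^{-1}\bigl([0,r]\bigr)=\bigcap_{a\in\actions}\bigl(d\circ(\pi_a\times\pi_a)\bigr)^{-1}\bigl([0,r]\bigr),
\]
together with the closure of universally measurable (respectively, Suslin) subsets of $B_\Sigma Y\times B_\Sigma Y$ under countable intersections; the countability of $\actions$ assumed in \cref{def:MarkovProcess} is essential here. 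Naturality in $Y$ is immediate since precomposition commutes with pointwise supremum.

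For the explicit formula giving $\dliftMDP(d)$, I would simply unfold the definition $\dliftMDP=\sigma^\Sigma_{\BfunctorMP(\blank)}\circ\dliftMP$. Applying $\sigma^\Sigma_{\BfunctorMP X}$ to $\dliftMP(d)$ on a pair $\bigl((\vec\meas,\vec r),(\vec{\meas*},\vec s)\bigr)\in B_\Sigma\BfunctorMP X\times B_\Sigma\BfunctorMP X$ produces
\[
\dliftMDP(d)\bigl((\vec\meas,\vec r),(\vec{\meas*},\vec s)\bigr)=\sup_{a\in\actions}\dliftMP(d)\bigl((\vec\meas(a),\vec r(a)),(\vec{\meas*}(a),\vec s(a))\bigr),
\]
and then I would invoke \cref{lem:dliftingForMP} to rewrite each supremand as $c\inf_{\meas[c]\in\couplings(\vec\meas(a),\vec{\meas*}(a))}\int d\diff\meas[c]+(1-c)|\vec r(a)-\vec s(a)|$, yielding the stated expression verbatim.

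The main obstacle I anticipate is not computational but a matter of bookkeeping in the fibrations: reconciling the codomain of $\sigma^\Sigma$ with the requirement from \cref{ass:a3} that the lifting land in $\spredicates$. A universally measurable input to $\sigma^\Sigma$ only guarantees a universally measurable output, since Suslin-ness of sublevels is not automatic from mere universal measurability. However, in the composition $\sigma^\Sigma_{\BfunctorMP(\blank)}\circ\dliftMP$ the input is in fact l.s.m., because \cref{prop:clift-Giry} and the proof of \cref{lem:dliftingForMP} already factor $\dliftMP$ through $\spredicates$. Hence it suffices to verify the l.s.m.\ case, where closure of Suslin sets under countable intersection delivers the conclusion cleanly.
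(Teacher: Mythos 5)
Your proposal is correct and follows essentially the same route as the paper's (very terse) proof: the explicit formula is obtained by unfolding the composition and invoking \cref{lem:dliftingForMP}, and well-definedness reduces to the fact that a countable supremum of predicates stays in the fibre because its sublevel sets are countable intersections of sublevel sets. Your closing worry about landing in $\spredicates$ is moot, since the distance lifting $\dliftMDP$ is only required to be valued in $\predicates$ (Assumption~\cref{ass:a3} constrains the predicate lifting for $\giry$, not the composite distance liftings), and universally measurable sets already form a $\sigma$-algebra closed under the needed countable intersections.
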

Now composing the two distance liftings $\sigma^B$ (for $B\in\{\BfunctorMP,\BfunctorMDP\}$) with a $B$-coalgebra is the desired functional as given in \eqref{eq:gfp-clift}. Clearly, \cref{ass:a5} is satisfied since $\predicates$ has countable suprema and they are preserved by reindexing functors. We end this subsection by showing that the least fixpoint exists for both of these functionals; thus, also paving a way to compute bisimulation pseudometrics for these systems. To this end we need a general result, whose proof is based on some classical results comprising a non-topological version of Riesz–Markov–Kakutani representation theorem \cite[IV.5.1]{DunfordSchwartz58}, Banach-Alaoglu theorem \cite[V.4.2]{DunfordSchwartz58} and Sion's minimax theorem \cite[Thm.~3]{Simons95}.

\begin{theorem}
        \label{thm:WassersteinDist_cpoCont}
    Let $\gamma\colon \measSp \to \giry \measSp \in \ana$. 
    Then the functional $\gamma \circ \hat\sigma$ is $\omega$-cpo continuous w.r.t. $\leq$, i.e. for any $\leq$-increasing sequence $d_i \in \pred {\measSp \times \measSp}$ of pseudometrics with $i\in\nat$, we have (for each $x,y\in X$):
\[
  \inf_{\meas[c] \in K(\gamma_x,\gamma_y)}
            \int \sup_{i \in \nat} d_i \diff \meas[c]  
        =
            \adjustlimits
            \sup_{i \in \nat} 
            \inf_{\meas[c] \in K(\gamma_x,\gamma_y)} 
            \int d_i \diff \meas[c]
    \text.    
\]
\end{theorem}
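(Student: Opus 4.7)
The direction ``$\ge$'' follows from the monotone convergence theorem applied fibre-wise in $\meas[c]$: for each fixed coupling, $\int \sup_i d_i \diff \meas[c] = \sup_i \int d_i \diff \meas[c]$, after which $\inf_{\meas[c]}\sup_i \ge \sup_i \inf_{\meas[c]}$ is tautological. The real work is the reverse direction, which amounts to a minimax exchange.

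Writing $L$ for the RHS, I would choose for each $i\in\nat$ a near-optimiser $\meas[c]_i\in K(\gamma_x,\gamma_y)$ with $\int d_i \diff \meas[c]_i \le L + 1/i$. I then need two ingredients: (1) the set $K(\gamma_x,\gamma_y)$ is compact in some topology $\tau$, and (2) each functional $\meas[c]\mapsto \int d_j \diff \meas[c]$ is $\tau$-lower semi-continuous. Granted these, I extract a $\tau$-convergent subnet $\meas[c]_{i_k}\to \meas[c]^*\in K(\gamma_x,\gamma_y)$; for any fixed $j$ and $i_k\ge j$, monotonicity of $(d_i)$ gives $\int d_j \diff \meas[c]_{i_k}\le \int d_{i_k}\diff \meas[c]_{i_k}\le L + 1/i_k$, and (2) then yields $\int d_j\diff \meas[c]^*\le L$. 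Sending $j\to\infty$ and invoking monotone convergence once more against the fixed limit $\meas[c]^*$ produces $\int \sup_j d_j\diff \meas[c]^*\le L$, so the LHS is $\le L$, as required.

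To supply (1) and (2), I embed finite signed measures on $X\times X$ into the dual of the space $B(X\times X)$ of bounded universally measurable functions via integration; this is the non-topological Riesz--Markov--Kakutani representation \cite[IV.5.1]{DunfordSchwartz58}. Endow this dual with the weak-* topology. Banach--Alaoglu \cite[V.4.2]{DunfordSchwartz58} provides weak-* compactness of the unit ball, and the linear marginal constraints $\meas[c](A\times X)=\gamma_x(A)$ and $\meas[c](X\times A)=\gamma_y(A)$ cut out $K(\gamma_x,\gamma_y)$ as a weak-* closed convex subset, establishing (1). Item (2) is then by design of the weak-* topology: integration against the bounded universally measurable $d_j$ is a coordinate of the embedding, hence weak-* continuous. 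Alternatively, the same compact-convex set-up can be fed directly to Sion's theorem \cite[Thm.~3]{Simons95}, applied to the bilinear form $(\meas[c],\lambda)\mapsto \int(\sum_i\lambda_i d_i)\diff\meas[c]$ on $K(\gamma_x,\gamma_y)\times\Delta$, where $\Delta$ is the set of finitely supported probability vectors on $\nat$ and monotonicity of $(d_i)$ reduces $\sup_{\Delta}$ back to $\sup_i$.

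The main obstacle is ensuring that $K(\gamma_x,\gamma_y)$ really is weak-* closed in this non-topological setting: the dual of $B(X\times X)$ is the space of bounded \emph{finitely} additive set functions, so a priori a weak-* limit of countably additive probability measures need not remain countably additive. Recovering countable additivity of the limit from the fixed countably additive marginals $\gamma_x,\gamma_y$ is precisely where the analytic structure of $X$ (and the resulting perfectness of measures on $X$) is essential; once this has been nailed down, the remainder of the argument is a routine application of the classical functional-analytic toolkit.
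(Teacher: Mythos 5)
Your proposal is correct and follows essentially the same route as the paper: the non-topological Riesz--Markov--Kakutani duality plus Banach--Alaoglu give weak-$*$ compactness of the coupling set, and the exchange of $\inf$ and $\sup$ is then carried out either by your subnet argument or (as the paper does) by Sion's minimax theorem applied to the resulting compact convex set. The one step you flag but leave open --- that a finitely additive coupling of the perfect marginals $\gamma_x,\gamma_y$ is automatically countably additive, so that $K(\gamma_x,\gamma_y)$ is genuinely weak-$*$ closed and the limit coupling supports the final monotone-convergence step --- is exactly where the paper also leans on the classical literature, citing Ryll-Nardzewski (see also Pachl and Ramachandran--R\"uschendorf) rather than proving it.
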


\begin{remark}
    The above theorem can be stated for general measurable spaces as well, but by restricting the coalgebra map so that $\gamma(x)$ (for each $x\in X$) is a perfect measure (see \cref{ssec:perfectMeasures}). Perfect measures were introduced by Kolmogorov \cite[22--23]{GnedenkoKolmogorov49} and have many different equivalent definitions. For us
    a measure space $\measdSp$ is called \definiendum{perfect}, if for any separable metrisable space $(\spSet*, \topy) $ and every measurable map $f\colon \spSet\to\spSet*$ we have the following property: For every $A \in \sAlg $ and $ r < \meas(A) $ there is a compact set $K \subseteq \image f$ with $\meas( A \cap \invImSet f K ) \geq r$, cf.\nolinebreak[3]\ \cite[451O(a)]{Fremlin}. In this case, the measure $\meas$ is called \definiendum{perfect}. 
\end{remark}

\begin{remark}
        \label{cor:WassersteinDist_p_cpoCont}
    Let $p\in [1, \infty)$ and recall\hbtodo{Check this} the $p$-Wasserstein distance between probability measures $\meas,\meas*\in\giry \measSp$. Below we argue how to capture this lifting in our setup. 
    \[
        W_p(\Dist*)(\meas,\meas*)
        =      
        \sqrt[\leftroot{-1}\uproot{3}{\textstyle p} ]{\inf_{\meas[c] \in K(\meas,\meas*)} 
                    \int {\Dist*}^p \diff \meas[c]
                }, 
        \qquad \text{for a pseudometric $d\in\pred{\measSp \times \measSp}$}.
    \]
    Note that any monotonously increasing lower semicontinuous function $f\colon [0, 1] \to [0,1]$ induces a \ocpo-continuous map $ 
        f\circ \blank \colon \predicates\measSp \to \predicates\measSp 
    $. It is monotone by monotonicity of $f$ and for any increasing sequence $(p_i \in  \predicates \measSp)_{i\in\nat}$ we have
    \[
    \sup_{i\in\nat} f (p_i (x)) = f \left(\sup_{i\in\nat} p_i(x) \right) \qquad \text{(for each $x\in X$)}.
    \]
    Note that both $(\blank)^p $ and $ \sqrt[\leftroot{1}\uproot{3}p]{\vphantom l \blank} $ are monotonously increasing lower-semicontinuous functions.
    So $W_p = \sqrt[\leftroot{1}\uproot{3}p]{\vphantom l \blank} \comp \hat \sigma \comp (\blank)^p $ is \ocpo-continuous as a composition of \ocpo-continuous functions.
\end{remark} 

\begin{corollary}\label{cor:bdistance-exists}
Let $\gamma_B\colon \measSp \to B\measSp \in \ana$ be a coalgebra where $B\in\{\BfunctorMP,\BfunctorMDP\}$. Then the least fixpoint for the functionals $\gamma_B \circ \sigma^B$ exists.
\end{corollary}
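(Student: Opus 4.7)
The plan is to verify the hypothesis of \cref{prop:KFixpoint} for $B\in\{\BfunctorMP,\BfunctorMDP\}$ by showing that the composed lifting $\hat\sigma^B$ is $\omega$-cpo cocontinuous (equivalently, $\omega$-cpo continuous with respect to the pointwise $\leq$ order on $[0,1]$-valued predicates, as noted in the remark after \cref{prop:KFixpoint}). Once this is established, the reindexing step $(\gamma_B\times\gamma_B)^\ast$ is just precomposition with a measurable map, which preserves all pointwise countable suprema, so the full functional in \eqref{eq:gfp-clift} is $\omega$-cpo continuous and its least fixpoint exists by Kleene's theorem.

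For $B=\BfunctorMP$, recall that $\hat\sigma^{\BfunctorMP}_X = \sigma^c_{\giry X} \circ \hat\sigma_X$, where the first factor is the Wasserstein lifting of the Giry functor and the second is the convex combination with the reward. The Wasserstein factor is $\omega$-cpo continuous by \cref{thm:WassersteinDist_cpoCont}. For $\sigma^c$, the explicit formula $\sigma^c_X(d)((x,r),(y,s)) = c\,d(x,y) + (1-c)\,|r-s|$ shows that if $(d_i)_{i\in\nat}$ is a $\leq$-increasing sequence in $\pred{\measSp\times\measSp}$, then by elementary properties of suprema of real-valued sequences
\[
\sup_{i\in\nat}\bigl(c\,d_i(x,y) + (1-c)\,|r-s|\bigr)
 = c\,\sup_{i\in\nat} d_i(x,y) + (1-c)\,|r-s|,
\]
so $\sigma^c$ preserves such suprema pointwise. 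Composition of $\omega$-cpo continuous maps is $\omega$-cpo continuous, yielding the claim for the MRP case.

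For $B=\BfunctorMDP$, we further post-compose with $\sigma^\Sigma$. By \cref{lem:dliftingForMDP}, $\sigma^\Sigma(d)(\vec x,\vec y) = \sup_{a\in\Sigma} d(\vec x(a),\vec y(a))$. For a $\leq$-increasing sequence $(d_i)_{i\in\nat}$, the standard commutation of iterated suprema
\[
\sup_{i\in\nat}\sup_{a\in\Sigma} d_i(\vec x(a),\vec y(a))
 = \sup_{a\in\Sigma}\sup_{i\in\nat} d_i(\vec x(a),\vec y(a))
\]
shows that $\sigma^\Sigma$ preserves countable $\leq$-directed suprema pointwise. Hence $\hat\sigma^{\BfunctorMDP} = \sigma^\Sigma_{\BfunctorMP(\cdot)} \circ \hat\sigma^{\BfunctorMP}$ is $\omega$-cpo continuous as a composition of $\omega$-cpo continuous maps.

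In both cases the functional $(\gamma_B\times\gamma_B)^\ast \circ \hat\sigma^B$ is therefore $\omega$-cpo continuous, and \cref{prop:KFixpoint} delivers the desired (least) fixpoint. The only genuinely nontrivial ingredient is the continuity of the Wasserstein lifting of $\giry$, which is exactly the content of \cref{thm:WassersteinDist_cpoCont}; the rest is bookkeeping about commuting suprema with linear and sup operations.
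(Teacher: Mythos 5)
Your proposal is correct and follows essentially the same route the paper intends (the paper leaves this corollary's proof implicit): $\omega$-cpo continuity of the Wasserstein factor from \cref{thm:WassersteinDist_cpoCont}, closure of $\omega$-cpo continuity under composition with the convex-combination and countable-sup operations (cf.\ the technique in \cref{cor:WassersteinDist_p_cpoCont}), and Kleene's fixpoint theorem. The only detail worth making explicit is that \cref{thm:WassersteinDist_cpoCont} is stated for increasing sequences of \emph{pseudometrics}, so one should note that the Kleene chain starting from the bottom element consists of pseudometrics because each lifting preserves them (\cref{prop:clift-Giry}, \cref{lem:dliftingForMP}, \cref{lem:dliftingForMDP}) and pointwise suprema of pseudometrics are pseudometrics.
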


Using the fact that any \ocpo-continuous endofunction has a least fixpoint by Kleene's fixpoint theorem, we write $\bdistance_c^\gamma$ (or simply $\bdistance$ whenever the coalgebra structure is clear from the context) to denote the least fixpoint of the functionals in the above corollary. 


Note that by using Kleene fixed point theorem we require weaker assumptions than \cite[3.12]{FernsPanangadenPrecup04}, who use the Banach fixed point theorem to define their bisimulation pseudometric restrict themselves to a set-up with contractions.



\section{A quantitative modal logic and its expressivity}
    \label{sec:log-MDPs}
The signatures of the (logical) languages considered in this paper are parametrised by a set $ \setBuilder{f_{iy}}{ y \in Y_i } $ of $\omega$-indexed families of $Y_i$-indexed function symbols of arity $n_i$ as follows:
\begin{equation}
            \label{eq:Sprache_def}
        \top 
        \mid \neg \blank
        \mid \blank \land \blank 
        \mid\diamond_a \blank, a \in A
        \mid f_{iy}, y \in Y_i, i \in \omega
\end{equation}
In other words, the signature we are using extends the semi-lattice signature ($\top$, $\land$) with negation ($\neg$), modalities $\diamond_a$ (one for each action $a$) and additional function symbols $f_{iy}$ (each $f_{iy}$ could be viewed as $n_i$-ary predicate on the unit interval).
We simply write $\SpracheFormulas $ to denote the set of formulae generated by the above signature. 
The restriction to countably many families of function symbols will become important when we construct a second-countable topology on $\SpracheFormulas$.
Note that we can also view the logical symbols, $\top, \neg\blank, \blank\wedge\blank $, as (singleton index families of) function symbols. This will be very handy for proofs by structural induction over $\Sprache$. 
Throughout this section, we let $\Omega=[0,1]$ and consider only MDPs (the modal logic for MRPs can be derive by letting the set $\actions$ of actions to be a singleton set). 

The reason to choose this general formulation with index sets $Y_i$ are twofold. First, to endow a topology on $\Sprache$ which is needed to prove both adequacy (i.e. the distance induced by formulae in $\Sprache$ is below than the bisimulation distance $\bdistance$ for MDPs) and expressivity (which is the converse of adequacy). Second, it allows greater flexibility 
in applications, cf.\nolinebreak[3]\ \cref{eq:basicSymb_interpret}, by accommodating additional operators for which adequacy and expressivity still hold. 


\subsection{Interpretation of modal formulae in \texorpdfstring{$\Sprache$}L}
    \label{ssec:interpretation}
We give semantics to each formulae $\varphi \in \Sprache$ by defining an \definiendum{interpretation $\interpret{\varphi}$} as a predicate in $\predicates \measSp$. This is done by structural induction over $\varphi$ in the following way.
The logical symbols, i.e.\nolinebreak[3]\ 
$\top$ (truth), $\neg \blank$ (negation), and $\blank \land \blank$ (conjunction),
respectively, are interpreted as the functions
\begin{equation}
        \label{eq:logicSymb_interpret}
    1, \quad
    \lambdCalc x. 1 -x, \quad
    \lambdCalc x\, y.\min\{x, y\}\quad
\text{respectively.}
\end{equation}
Next we consider function symbols $f_{iy}$ of arity $n_i>0$ (for some $y\in Y_i$) and a sequence $(\varphi_j)_{1\leq j \leq n_i}$. Its interpretation is given as 
\[\interpret{f_{iy}}(\varphi_1,\cdots,\varphi_{n_i}) = f_{iy}(\interpret{\varphi_1},\cdots,\interpret{\varphi_{n_i}}) 
\mbox{ for some fixed } f_{iy}: [0,1]^{n_i} \to [0,1] \]
whilst, $\interpret{f_{iy}} = f_{iy}\in [0,1]$ when $n_i=0$ (i.e. $f_{iy}$ is a constant). 
Besides the logical symbols we introduce the following \definiendum{basic} families of function symbols indexed by $r,c\in [0,1]$ and $\alpha\in (0,\infty)$:
\definiendum{Scalar addition} ($\blank + r$), \definiendum{scalar subtraction} ($\blank-r$), \definiendum{scalar multiplication} ($r\blank$), and \definiendum{convex combination} ($\blank +_c \blank$) are interpreted as the function assigning to $x$ (and $y$)
\begin{equation}
    \label{eq:basicSymb_interpret}
    \min\{1, x+r\}, \quad
    \max\{0, x-r\}, \quad
    rx, \quad
    x +_c y
\text,
\end{equation}
respectively, throughout this paper.
Note that only the first two basic operations are required for our main results.

Finally, for the modal operators we fix a parameter $c\in\Omega$ (which was also used to define bisimulation pseudometrics $\bdistance$ in the previous section) and a coalgebra $\gamma\colon \measSp \to \BfunctorMDP \measSp \in \ana$ to define the interpretation of $\interpret{\diamond_a \varphi}$ as:
\begin{equation}
        \label{eq:modalOp_explicitly}
        \interpret{\diamond_a \varphi}_\gamma(x)
    =  
    \int \interpret{\varphi} \diff \gamma_{a,x}^0 \ +_c\ \gamma_{a,x}^1,
\end{equation}
where $ r +_c s \coloneqq cr + (1-c)s $ denotes convex combination of $r,s\in\Omega$. Intuitively, this means that the value of $\diamond_a \varphi$ is determined by a convex combination of the expected value of $\varphi$ and the utility after performing $a$. 

The defined interpretation function $\interpret{\blank}$ gives rise a (quantitative) theory map $\quantTheory\colon\spSet \to \Omega^\Sprache$ defined by
\begin{equation}
        \label{eq:quantitativeTheory_def}
        \quantTheory(x)(\varphi)
    \coloneqq
         \interpret{\varphi}(x) \quad \text{(for every $\varphi\in\Sprache$)}.
\end{equation}
The question whether the theory map lives in our working category $\ana$ is an important step for expressivity (cf.\nolinebreak[3]\ \cref{thm:expressivity}). However, for adequacy (cf.\nolinebreak[3]\ \cref{thm:adequacy}), we only require that the function symbols $f_{iy}$ in $\Sprache$ are nonexpansive w.r.t.\nolinebreak[3]\ the suprema distance $d_\infty(\vec x,\vec y) = \max_{1\leq j \leq n_i} \abs{\vec x(j),\vec y(j)}$. Nonetheless, before attempting these results we need the definition of a logical distance, which at this stage is purely a set-theoretic assignment. Later, in next subsection, we will show that the logical distance defined below is indeed a predicate over the product of a state space with itself (cf.\nolinebreak[3]\ \cref{lem:logicalDist_measurable}). 

\begin{definition}[Logical distance]
Given an interpretation $\interpret\blank\colon \SpracheFormulas \to \set( X,[0,1]) $, we define the \definiendum{logical distance} as follows (for every $x,y\in X$): 
\begin{equation}
            \label{eq:dist_Sprache_def}
        \ldistance(x,y) 
    \ =\   \sup_{\varphi \in \Sprache }\ \left| \interpret{\varphi}(x) - \interpret{\varphi}(y) \right| 
    \ =\  \sup_{\varphi \in \Sprache }\  \interpret{\varphi}(x) - \interpret{\varphi}(y) \enspace. 
\end{equation}
Note that the absolute value is redundant since the negation operator $\neg$ is in our logic $\Sprache$. 
\end{definition}

\subsection{Endowing a topology on \texorpdfstring{$\Sprache$}L through its shapes}\label{ssec:shapes}
As we are dealing with more than countably many function symbols, we will have to impose some structure on the set of function symbols in order to prove our expressivity theorem (cf.\nolinebreak[3]\ \cref{thm:expressivity}). 
Technically, we need a topology $\topy_\Sprache$ on $\Sprache$ so that the theory map $\quantTheory$ \eqref{eq:quantitativeTheory_def} becomes \emph{topologisable} in the following sense.

\begin{definition}
Let $\measSp \in \ana$. Then the theory map $\quantTheory\colon X \to \Omega^\Sprache$ is \emph{topologisable} iff there is a topology $\topy_\Sprache$ on $\Sprache$ such that the preimage of every open set $U\subseteq \Omega^\Sprache$ (in the compact open topology on the function space $\Omega^\Sprache$) is a universal measurable set, i.e. $\invImSet{\quantTheory} (U) \in \overline{\sAlg}$.
\end{definition}
To be able to do this, we switch our focus from the uncountable language  $\SpracheFormulas$ to the countable language $\Shape(\Sprache)$ of \emph{shapes} induced by the language $\Sprache$. This language $\Shape(\Sprache)$ of shapes is constructed 
by collapsing all function symbols indexed by the same index set, i.e.\
\begin{equation}
            \label{eq:SpracheShape}
        \top 
        \mid \neg \blank
        \mid \blank \land \blank 
        \mid\diamond_a \blank, a \in A
        \mid f_{i}, i \in \omega
\text.
\end{equation}
To each formula $\varphi$ on $\SpracheFormulas$ one can assign a formula in $\Shape(\Sprache)$ by replacing each $f_{iy}$ by $f_i$.
This defines an equivalence relation on $\Sprache$.
For each $\psi \in \Shape(\Sprache)$
denote by $\widehat{\psi} \subseteq \SpracheFormulas $ the corresponding equivalence class.
For instance, for an $n_i$-ary function symbol $f_i$ the set $\widehat{\psi}$ for $\psi = f_i(\top, \ldots, \top)$ is in canonical bijection to $Y_i$,
and the shape $ r\neg\top \wedge r\top $ (when $\Sprache$ allows for scalar multiplication) corresponds\cktodo{it's not {\em clear} in what sense $r \top$ is a shape. what is the formal status of $r$ in that expression?} to $\Omega \times \Omega$, as each $r$ ranges over $\Omega=[0,1]$.
Through this equivalence relation we can subdivide $\SpracheFormulas$ into countably many chunks, each of which associated to a finite product of $Y_i$'s. 
If for a family $ \{f_y\}_{y \in Y} $ of $n$-ary function symbols the set $Y$ is endowed with a \salg\  (resp.\nolinebreak[3]\ topology), we call $\interpret{\blank}$ \definiendum{jointly measurable} (resp.\nolinebreak[3]\ \definiendum{jointly continuous}), if the function
$
    [0,1]^n \times Y \to [0,1]
$
is measurable (resp.\nolinebreak[3]\ continuous) with respect to the respective product \salg{} (resp.\nolinebreak[3]\ product topology). 


For the remainder of this subsection $\Omega^{\topSp }$ will denote the set of continuous functions from $ \topSp  $ to $\Omega$, which will be endowed with the compact-open topology \cite[3.4]{Engelking89} if not explicitly stated otherwise.
Recall that a Hausdorff topological space $\topSp $ is called \definiendum{locally compact}, if each point admits a compact neighbourhood.

\begin{lemma}{\cite[3.4.16]{Engelking89}}
        \label{lem:FctSp_2ndCount}
    Let $(X, \topy )$ be a locally compact second countable space.
    Moreover if $\Omega$ is second countable,  then $ \Omega^{(X, \topy )} $ is second countable w.r.t.\nolinebreak[3]\ the compact-open topology.
\end{lemma}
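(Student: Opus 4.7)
The plan is to exhibit an explicit countable subbase for the compact-open topology on $\Omega^{(X,\topy)}$; any topology generated by a countable subbase admits a countable base (via finite intersections), so second countability follows immediately. The proof is a standard one for the compact-open topology and the result essentially goes back to Arens--Dugundji; here I outline how I would reconstruct it.

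First, combining local compactness with second countability, I would select a countable base $(B_n)_{n \in \omega}$ of $(X, \topy)$ such that each closure $\overline{B_n}$ is compact. This is a routine consequence of local compactness: every point admits a neighbourhood with compact closure, and intersecting with any countable base and discarding those basic opens that fail to have compact closure yields the desired refinement (itself a base because neighbourhoods with compact closure form a local base at every point). Next, I would fix a countable base $(V_m)_{m \in \omega}$ of $\Omega$ and consider the countable family
\[
\mathcal{S} \ =\ \{\, [\overline{B_n}, V_m] \ :\ n, m \in \omega \,\},
\]
where $[K, V] = \{f \in \Omega^{(X,\topy)} : f(K) \subseteq V\}$ are the standard subbasic sets generating the compact-open topology.

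The main content is the verification that $\mathcal{S}$ already generates the compact-open topology. Every element of $\mathcal{S}$ is subbasic-open because $\overline{B_n}$ is compact. Conversely, given a subbasic open $[K, U]$ of the compact-open topology and $f \in [K, U]$, I would cover $K$ finitely as follows. For each $x \in K$, using that $\Omega$ is regular (which holds in our application since $\Omega = [0,1]$ is metrizable), choose $V_{m(x)}$ from the countable base with $f(x) \in V_{m(x)}$ and $\overline{V_{m(x)}} \subseteq U$; then by continuity of $f$ the set $f^{-1}(V_{m(x)})$ is an open neighbourhood of $x$, and by the refined base from the previous step one can pick $B_{n(x)}$ with $x \in B_{n(x)}$, $\overline{B_{n(x)}}$ compact, and $\overline{B_{n(x)}} \subseteq f^{-1}(V_{m(x)})$. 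Compactness of $K$ delivers finitely many indices $x_1,\dots,x_k$ with $K \subseteq \bigcup_j B_{n(x_j)}$, whence
\[
f \ \in\ \bigcap_{j=1}^{k} \bigl[\, \overline{B_{n(x_j)}},\ V_{m(x_j)} \,\bigr] \ \subseteq\ [K, U],
\]
since any $g$ in the finite intersection sends each $\overline{B_{n(x_j)}}$ into $V_{m(x_j)} \subseteq U$ and the closures still cover $K$.

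The main obstacle is the simultaneous refinement in this covering step: one has to shrink a generic open neighbourhood of $x$ inside $K$ to a basic open whose closure is \emph{both} compact in $X$ \emph{and} mapped by $f$ into a basic open of $\Omega$ whose closure sits inside $U$. This is exactly the place where local compactness of $(X,\topy)$ and a mild separation property of $\Omega$ (regularity, supplied by metrizability of $[0,1]$) both enter the argument. Once this is arranged, the countable subbase $\mathcal{S}$ generates the compact-open topology, finite intersections of its elements form a countable base, and $\Omega^{(X,\topy)}$ is second countable as claimed.
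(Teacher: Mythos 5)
Your proof is correct and is precisely the standard Arens--Dugundji argument behind the result that the paper simply cites as \cite[3.4.16]{Engelking89} without reproving: a countable base of $X$ consisting of sets with compact closure, paired with a countable base of $\Omega$, gives the countable subbase $[\overline{B_n},V_m]$ for the compact-open topology. Two points worth tightening in the write-up: the step producing $\overline{B_{n(x)}}\subseteq f^{-1}(V_{m(x)})$ uses more than the base property of the refined base --- it needs the regularity of $X$ (automatic here, since the paper's locally compact spaces are Hausdorff, so one first shrinks to an open $W$ with $x\in W\subseteq\overline{W}\subseteq f^{-1}(V_{m(x)})$ and $\overline{W}$ compact, then picks $B_{n(x)}\subseteq W$) --- whereas the regularity of $\Omega$ you invoke is never actually used, since your final inclusion only requires $V_{m(x_j)}\subseteq U$ rather than $\overline{V_{m(x_j)}}\subseteq U$.
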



\begin{remark}
        \label{rem:lem:FctSp_2ndCount}
    \Cref{lem:FctSp_2ndCount} will be used only once but at a crucial point in \cref{thm:expressivity}, which is our main result.
    One could ask if one can extend the classes of spaces for which \cref{lem:FctSp_2ndCount} holds.
    For instance, does it hold for arbitrary second countable Hausdorff spaces?
    An ansatz would be to weaken the notion of compactness further.
    The promising notion is a k-space, the quotient space of some locally compact space, i.e.\nolinebreak[3]\ there is a quotient map $ q\colon \topSp* \to \topSp $ for some locally compact space $\topSp$ \cite[p.~152]{Engelking89}.
    The reason is that in this case there is a decomposition $ \Omega^{\topSp*} = \lim\limits_{\longleftarrow} {}_{i} \Omega^{K_i} $ for any directed system of of compact sets $K_i \subseteq \spSet$ with $\bigcup_i K_i = \spSet$ \cite[3.4.11]{Engelking89}.
    Second countability follows as soon as $ i $ can be chosen to range over a countable set.
    Unfortunately, this already implies hemicompactness of $\topSp*$ in case of regular spaces, cf.\nolinebreak[3]\ \cite[8.1(d)]{Douwen84} and \cite[3.4.E(c)]{Engelking89}.
    \cite{Douwen84} also discusses other weakenings of compactness, but the mentioned Fact~8.1 therein 
    do not provide a remedy.
\end{remark}


We end this subsection by the main results of this section; namely that the logical distance $\ldistance$ is a predicate on $\measSp \times \measSp$ and the logic $\Sprache$ is adequate w.r.t.\nolinebreak[3]\ $\bdistance_c$.
\begin{lemma}
        \label{lem:logicalDist_measurable}
    If $\quantTheory$ on $\measSp$ is topologisable, 
    then $\ldistance$ is measurable. 
\end{lemma}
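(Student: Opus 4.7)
The plan is to express the logical distance as the composite
\[
  X \times X
  \xrightarrow{\quantTheory \times \quantTheory}
  \Omega^\Sprache \times \Omega^\Sprache
  \xrightarrow{D} \Omega,
  \qquad
  D(f,g) \coloneqq \sup_{\varphi \in \Sprache} \bigl( f(\varphi) - g(\varphi) \bigr),
\]
and to show separately that $D$ is Borel measurable and that $\quantTheory \times \quantTheory$ is universally measurable; the lemma then follows because universal measurability composes with Borel maps.

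For $D$, the key observation is that every singleton $\{\varphi\} \subseteq \Sprache$ is compact, so the evaluation map $\mathrm{ev}_\varphi \colon \Omega^\Sprache \to \Omega$, $f \mapsto f(\varphi)$, is continuous in the compact-open topology induced by $\topy_\Sprache$ (its preimage of an open set $V \subseteq \Omega$ is the subbasic open $\{f : f(\{\varphi\}) \subseteq V\}$). Hence each $D_\varphi(f,g) \coloneqq f(\varphi) - g(\varphi)$ is continuous on $\Omega^\Sprache \times \Omega^\Sprache$, and $D = \sup_{\varphi} D_\varphi$, being a pointwise supremum of continuous functions, is lower semicontinuous; in particular it is Borel measurable.

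For $\quantTheory \times \quantTheory$, topologisability directly gives $\quantTheory^{-1}(U) \in \overline{\sAlg}$ for every open $U \subseteq \Omega^\Sprache$. This property is inherited by the composites $\quantTheory \circ \pi_i$ with the Borel projections $\pi_i \colon X \times X \to X$ via the standard sandwiching argument: for any Borel probability $\nu$ on $X \times X$, enclose $\quantTheory^{-1}(U)$ between Borel sets $A_- \subseteq A_+$ with $(\pi_{i*}\nu)(A_+ \setminus A_-) = 0$; the preimages $\pi_i^{-1}(A_\pm)$ then sandwich $(\quantTheory \circ \pi_i)^{-1}(U)$ with $\nu$-null difference. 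Preimages of basic open rectangles in the product compact-open topology on $\Omega^\Sprache \times \Omega^\Sprache$ are therefore universally measurable in $X \times X$, and taking finite intersections and pullback along subtraction does not escape $\overline{\sAlg\measTimes\sAlg}$.

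Combining the two halves gives $\ldistance \in \pred{\measSp \times \measSp}$. The main subtlety is the passage from preimages of \emph{rectangles} to preimages of arbitrary \emph{opens} in $\Omega^\Sprache \times \Omega^\Sprache$: without a countable base an open set may require an uncountable union of rectangles to describe, and universal measurability need not persist under uncountable unions. This is precisely where the shape-based construction of \cref{ssec:shapes} earns its keep: its countable skeleton together with \cref{lem:FctSp_2ndCount} makes $\Omega^\Sprache$ second countable, so every open set in the product admits a countable rectangle cover, and the argument closes.
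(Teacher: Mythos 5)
Your proposal is correct and follows essentially the same route as the paper's proof: factor $\ldistance$ through $\quantTheory\times\quantTheory$ into the function space $\Omega^{\Sprache}$ and observe that the distance functional there is a pointwise supremum of evaluation maps, hence lower semicontinuous and Borel, so that universal measurability of the first leg composes with Borel measurability of the second. The difference is one of care rather than of method: the paper's proof simply declares the first map ``measurable by assumption,'' whereas you correctly isolate the two points this elides, namely (i) that topologisability of $\quantTheory$ only controls $\quantTheory$ itself and one must transport universal measurability along the projections $\pi_i$ (your sandwiching argument does this correctly), and (ii) that preimages of open \emph{rectangles} only generate the product $\sigma$-algebra, which coincides with the Borel $\sigma$-algebra of the product topology only under second countability. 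Your resolution of (ii) via \cref{lem:FctSp_2ndCount} is the right idea, but note one caveat: the definition of ``topologisable'' in the paper asks only for \emph{some} topology $\topy_\Sprache$ making preimages of compact-open opens universally measurable, and does not by itself guarantee that $\Omega^{\Sprache}$ is second countable; that property is supplied separately by \cref{thm:interpretation_hemicompact} (which produces a second countable Hausdorff $\topy_\Sprache$) in every application of the lemma. So your argument proves the lemma under the mildly stronger hypothesis that the witnessing topology can be taken so that $\Omega^{\Sprache}$ is second countable --- a hypothesis that is met wherever the lemma is used, and which arguably ought to be folded into the statement, since the paper's own one-line justification of the first arrow quietly relies on the same reduction.
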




\begin{theorem}
        \label{thm:adequacy}
    If the function symbols $f_{iy}$ are interpreted by $ \interpret{\blank} $ as nonexpansive functions w.r.t.\nolinebreak[3]\ $\Dist[\infty]*$, then the logic $\Sprache$ is adequate, 
    i.e. $\bdistance \geq \ldistance$. 
    
    Moreover, the language consisting of logical symbols, modalities, scalar addition, subtraction, multiplication, and convex combination is always adequate.
\end{theorem}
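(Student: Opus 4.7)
The plan is to prove the stronger statement that every formula $\varphi \in \Sprache$ has an interpretation which is nonexpansive with respect to the bisimulation pseudometric, i.e.\
\[
        \abs{\interpret{\varphi}(x) - \interpret{\varphi}(y)}
    \;\leq\;
        \bdistance(x,y)
    \qquad \text{for all } x,y \in X.
\]
Taking the supremum over $\varphi$ in the defining expression \eqref{eq:dist_Sprache_def} of $\ldistance$ then yields $\ldistance \leq \bdistance$. The proof proceeds by structural induction on $\varphi$, using crucially that $\bdistance$ is a fixed point of the functional $(\gamma\times\gamma)^* \circ \sigma^{\BfunctorMDP}$ from \cref{lem:dliftingForMDP}.

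The base case $\varphi = \top$ is immediate since $\interpret{\top}$ is constant. The cases $\neg\psi$ and $\psi \land \chi$ follow because $x\mapsto 1-x$ and $\min$ are nonexpansive w.r.t. $d_\infty$. The general function-symbol case $f_{iy}(\varphi_1,\ldots,\varphi_{n_i})$ follows directly from the assumed nonexpansivity of $\interpret{f_{iy}}$ together with the induction hypothesis applied componentwise. The main work lies in the modality case $\diamond_a\psi$: by the induction hypothesis $\interpret{\psi}$ is $1$-Lipschitz w.r.t.\ $\bdistance$, so for any coupling $\meas[c] \in K(\gamma^0_{a,x},\gamma^0_{a,y})$ we may estimate
\[
        \Bigl| \int \interpret{\psi}\,\diff\gamma^0_{a,x} - \int \interpret{\psi}\,\diff\gamma^0_{a,y} \Bigr|
    \;=\;
        \Bigl| \int (\interpret{\psi}\circ\pi_1 - \interpret{\psi}\circ\pi_2)\,\diff\meas[c] \Bigr|
    \;\leq\;
        \int \bdistance\,\diff\meas[c],
\]
and taking the infimum over couplings gives the Wasserstein bound. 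Combining this via the triangle inequality for the convex combination in \eqref{eq:modalOp_explicitly},
\[
        \abs{\interpret{\diamond_a\psi}(x) - \interpret{\diamond_a\psi}(y)}
    \;\leq\;
        c\inf_{\meas[c]}\!\int \bdistance\,\diff\meas[c] + (1-c)\abs{\gamma^1_{a,x} - \gamma^1_{a,y}}
    \;\leq\;
        \sup_{b\in\actions}\dliftMDP(\bdistance)_{b\text{-component}},
\]
which equals $\dliftMDP(\bdistance)(\gamma(x),\gamma(y)) = \bdistance(x,y)$ by the fixpoint property.

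For the ``Moreover'' clause, one checks that each listed basic symbol is nonexpansive w.r.t.\ $d_\infty$: the clipped scalar shifts $x\mapsto\min\{1,x+r\}$ and $x\mapsto\max\{0,x-r\}$ are 1-Lipschitz (compositions of truncation with translation), scalar multiplication $x\mapsto rx$ is $r$-Lipschitz with $r\leq 1$, and the convex combination $(x,y)\mapsto x +_c y$ satisfies $\abs{(x_1+_c y_1)-(x_2+_c y_2)} \leq c\abs{x_1-x_2}+(1-c)\abs{y_1-y_2} \leq d_\infty((x_1,y_1),(x_2,y_2))$, so the hypothesis of the main statement applies.

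The main obstacle is really the modality step, where one must verify that the induction hypothesis (interpretations are $\bdistance$-nonexpansive) combines correctly with the coupling-based definition of $\dliftMDP$ to reproduce the fixpoint equation. In particular, one should note that $\interpret{\psi}$ lives in $\predicates \measSp$ (universally measurable), which guarantees integrability against every probability measure on $\measSp$ and justifies the integral manipulations above; no stronger regularity is needed, so the Kantorovich-Rubinstein duality invoked elsewhere in the paper is not required for adequacy --- the primal coupling formulation from \cref{prop:clift-Giry} suffices.
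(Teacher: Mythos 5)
Your proposal is correct and follows essentially the same route as the paper's proof: a structural induction showing each $\interpret{\varphi}$ is nonexpansive w.r.t.\ $\bdistance$, with function symbols handled by the assumed $d_\infty$-nonexpansiveness and the modality case handled by rewriting the difference of integrals via a coupling's marginals, bounding by $\int \bdistance\,\mathrm{d}\meas[c]$ using the induction hypothesis, and closing with the fixpoint property of $\bdistance$. Your explicit verification of the ``Moreover'' clause (nonexpansiveness of the basic symbols) is a welcome detail that the paper's appendix proof leaves implicit.
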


\begin{remark}
With the aim to develop continuous version of first order logic, Yaacov and Berenstein studied metric structures and their model theory in \cite{ModelTheoryMet:2008}. Simply put, a metric structure consists of a complete bounded metric space $(M,d)$ with a set of $\mathbb R$-valued predicates, a set of operators on the metric space (which are uniformly continuous of type $M^n \to M$ for some $n>0$), and a set of distinguished elements of $M$. It is worthwhile to note that the operators defined by our signature \eqref{eq:Sprache_def} is a special case of a metric structure on the unit interval with an empty set of predicates. 
\end{remark}

\subsection{A general expressivity theorem for \texorpdfstring{$\Sprache$}L}
    \label{ssec:expressivity}

Our main expressivity result (cf.\nolinebreak[3]\ \cref{thm:expressivity}) is based on the well known Stone-Weierstra\ss\ theorem and Kantorovich-Rubinstein duality. This follows the tradition of expressivity results from recent papers \cite{KomoridaKatsumataKu-ExpressivityofQuant,beohar_et_al:LIPIcs.STACS.2024.10,forster_et_al:LIPIcs.CSL.2023.22} on coalgebraic modal logic; however, the proof of \cref{thm:expressivity} does not follow from the abstract results established in the aforementioned articles. In contrast, we need an extra condition that the theory map $\quantTheory$ is topologisable (cf.\nolinebreak[3]\ \cref{lem:logicalDist_measurable}).

\begin{definition}
        \label{def:approx_pair_pt}
    A set $L$ of functions $\spSet \to [0,1]$ \definiendum{approximates a function $f\colon \spSet \to \mathbb R$ at a pair $x,y \in \spSet$ up to $\varepsilon$} (for some $\varepsilon > 0$) if there is a $g \in L$ with $ \abs{g(x)-f(x)}, \abs{g(y)-f(y)} < \varepsilon $.
    We further say that $L$ \definiendum{approximates $f$ at }$x, y$ if $L$ approximates $f$ at $x, y$ for each $\varepsilon > 0$.
\end{definition}

Henceforth we write $\interpret{\Sprache} = \setBuilder{\interpret{\varphi}}{\varphi \in \Sprache}$. It turns out that the operators (truth, conjunction, positive, and negative scaling) in our logic $\Sprache$ approximate any predicate over a state space. The following lemma is taken from \cite[Lemma~10]{ChenClercPanaganden25}. 

The following lemma is a continuous (and thus simpler) version of \cite[Lem.~10]{ChenClercPanaganden25}.
\begin{lemma}
        \label{lem:approxAtPairOfPoints}
Assume $
        \top, \blank\wedge\blank, \blank+r, \blank-r\in \Sprache
    $ for all $ r \in \Omega $ and $\interpret\blank\colon \SpracheFormulas\to \predicates\measSp$ an interpretation on predicates on some measurable space $\measSp$.
    Let $ p \in \predicates\measSp $, $ x,y \in \spSet $ and $\varepsilon >0$.
    \begin{subequations}
    Then
    \begin{align}
        \label{eq:lem:approxAtPairOfPoints_assms}
    \MoveEqLeft
            \exists \varphi \in \Sprache\colon
                0 \leq p(x) - p(y) < \interpret{\varphi}(x) - \interpret{\varphi}(y) + \varepsilon
    \\  \label{eq:lem:approxAtPairOfPoints_approxExact}
        &\implies
            \bigl(
                \exists \psi \in \Sprache\colon
                p(x) - \interpret{\psi}(x) \in [0,\varepsilon ) \text{ and } p(y) - \interpret{\psi}(y) = 0
            \bigr)
    \\  \label{eq:lem:approxAtPairOfPoints_approx}
        &\implies
            \interpret\Sprache \text{ approximates $p$ at $x,y$ up to }\varepsilon
    \text.
    \end{align}
    \end{subequations}
\end{lemma}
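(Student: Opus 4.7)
The plan is to prove the two implications in sequence; the second, \eqref{eq:lem:approxAtPairOfPoints_approxExact} $\Rightarrow$ \eqref{eq:lem:approxAtPairOfPoints_approx}, is essentially definitional, since a witness $\psi$ for \eqref{eq:lem:approxAtPairOfPoints_approxExact} already provides $g \coloneqq \interpret{\psi}$ with $|g(y) - p(y)| = 0$ and $|g(x) - p(x)| = p(x) - \interpret{\psi}(x) < \varepsilon$, which is exactly \cref{def:approx_pair_pt}'s condition for $\interpret{\Sprache}$ approximating $p$ at $x,y$ up to $\varepsilon$. The substance therefore lies in \eqref{eq:lem:approxAtPairOfPoints_assms} $\Rightarrow$ \eqref{eq:lem:approxAtPairOfPoints_approxExact}.

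Given $\varphi$ satisfying \eqref{eq:lem:approxAtPairOfPoints_assms}, write $a = \interpret{\varphi}(x)$ and $b = \interpret{\varphi}(y)$. The construction of $\psi$ proceeds in two stages. First, shift $\varphi$ vertically so that it hits $p(y)$ exactly at $y$: take $\psi_1 = \varphi + (p(y) - b)$ when $b \leq p(y)$ and $\psi_1 = \varphi - (b - p(y))$ otherwise. Because $p(y) \in [0,1]$, the truncation inherent in $+r$ and $-r$ is vacuous at $y$, so $\interpret{\psi_1}(y) = p(y)$, while at $x$ one obtains $\interpret{\psi_1}(x) = \tau(a - b + p(y))$ with $\tau$ the truncation to $[0,1]$. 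Second, cap at the constant $p(x)$, definable in $\Sprache$ as $\top - (1 - p(x))$, by setting $\psi = \psi_1 \wedge \bigl(\top - (1 - p(x))\bigr)$. Since $p(y) \leq p(x)$ by hypothesis, $\interpret{\psi}(y) = \min(p(y), p(x)) = p(y)$ is preserved.

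It then remains to verify $p(x) - \interpret{\psi}(x) \in [0, \varepsilon)$ by a three-way case split on $a - b + p(y)$: if it lies in $[0,1]$, no truncation fires and the bound rearranges to $p(x) - \interpret{\psi_1}(x) = (p(x) - p(y)) - (a - b) < \varepsilon$ directly from \eqref{eq:lem:approxAtPairOfPoints_assms}; if it exceeds $1$, the $\wedge$-capping forces $\interpret{\psi}(x) = p(x)$, giving difference $0$; if it is negative, then $a - b < -p(y)$ combined with $p(x) - p(y) < (a - b) + \varepsilon$ forces $p(x) < \varepsilon$, so the difference $p(x) - 0$ is already below $\varepsilon$. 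The main obstacle I expect is bookkeeping the interplay between the $\pm r$-truncations and the $\wedge$-capping across these subcases, and ensuring that the constant $p(x)$ lives in $\Sprache$ without appealing to scalar multiplication; the identity $\interpret{\top - (1 - p(x))} = p(x)$ together with the uniform choice of $\psi_1$ above make both issues routine.
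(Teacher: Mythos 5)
Your proof is correct and follows essentially the same route as the paper's: both construct the witness $\psi$ by post-composing $\varphi$ with truncated shifts and a $\wedge$-cap against a constant so that $\interpret{\psi}(y)=p(y)$ exactly and $p(x)-\interpret{\psi}(x)\in[0,\varepsilon)$, the paper's witness being $((\varphi - \interpret{\varphi}(y))\wedge((p(x)-p(y))\top))+p(y)$ with the same case analysis on where the truncations fire. A minor point in your favour: the paper's displayed formula uses the scalar multiple $(p(x)-p(y))\top$, which is not among the operations assumed in the lemma's hypothesis, whereas you correctly realise the needed constant as $\top-(1-c)$ using only truncated subtraction.
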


With the help of this lemma, we can approximate any short (aka nonexpansive) predicates w.r.t. logical distance $\ldistance$ by formulae in our logic $\Sprache$. Given a measurable space $\measSp$, we define the set of \definiendum{short predicates with respect to logical distance $\ldistance$} be
\begin{equation}
        \label{eq:predicates_short}
        \predicates(\spSet, \sAlg, \ldistance)
    \coloneqq
        \setBuilder
            {h \in \predicates\measSp}
            {\forall x, y \in\spSet\colon
                \ldistance (x,y) \geq h(x) - h(y)}
\text.
\end{equation}

\begin{corollary}
        \label{cor:approxAtPairOfPoints}
    As long as the scalar addition is allowed in the signature of $\Sprache$ in \eqref{eq:Sprache_def}, every short predicate can be approximated by the interpretation of modal formulae in $\interpret{\Sprache}$.
\end{corollary}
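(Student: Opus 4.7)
The plan is a direct reduction to Lemma~\ref{lem:approxAtPairOfPoints}. Fix a short predicate $h \in \predicates(\spSet,\sAlg,\ldistance)$, a pair of points $x,y \in \spSet$, and $\varepsilon > 0$; the task is to exhibit $\psi \in \Sprache$ with $|\interpret{\psi}(x) - h(x)|, |\interpret{\psi}(y) - h(y)| < \varepsilon$. Since the notion of approximation at a pair is symmetric in its two arguments, I would first argue without loss of generality that $h(x) \geq h(y)$, swapping the roles of $x$ and $y$ otherwise. Note also that scalar subtraction $\blank - r$ is derivable from scalar addition $\blank + r$ together with negation $\neg$ (both present in $\Sprache$), so the signature hypothesis of Lemma~\ref{lem:approxAtPairOfPoints} is met.

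The core step is to verify the hypothesis \eqref{eq:lem:approxAtPairOfPoints_assms} of the lemma. Shortness of $h$ combined with the symmetry of $\ldistance$ (the latter holds because $\neg \in \Sprache$) yields $\ldistance(x,y) \geq h(x) - h(y) \geq 0$. Unfolding the supremum in the definition \eqref{eq:dist_Sprache_def} of $\ldistance$, I can then select $\varphi \in \Sprache$ with $\interpret{\varphi}(x) - \interpret{\varphi}(y) > \ldistance(x,y) - \varepsilon$, and therefore
\[
0 \;\leq\; h(x) - h(y) \;<\; \interpret{\varphi}(x) - \interpret{\varphi}(y) + \varepsilon.
\]
This is precisely \eqref{eq:lem:approxAtPairOfPoints_assms}, so the implication chain \eqref{eq:lem:approxAtPairOfPoints_approx} furnishes the desired $\psi$. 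Since $\varepsilon > 0$ was arbitrary, $\interpret{\Sprache}$ approximates $h$ at $x,y$ for every pair of points.

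There is essentially no further obstacle, since Lemma~\ref{lem:approxAtPairOfPoints} has already absorbed the delicate construction of $\psi$ from a ``separating'' $\varphi$ via $\top$, $\wedge$, and scalar addition/subtraction. The only non-cosmetic point is the reduction to the case $h(x) \geq h(y)$, which is genuinely needed because \eqref{eq:lem:approxAtPairOfPoints_assms} postulates a non-negative left-hand side; this is why the symmetry of the conclusion (approximation at $\{x,y\}$ as an unordered pair) has to be invoked explicitly.
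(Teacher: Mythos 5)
Your proof is correct and follows essentially the same route as the paper's: verify hypothesis \eqref{eq:lem:approxAtPairOfPoints_assms} from the definition of $\ldistance$ together with the presence of $\neg$, then invoke \cref{lem:approxAtPairOfPoints}. The two details you make explicit (the reduction to $h(x)\geq h(y)$ and the derivability of truncated subtraction from $\neg$ and scalar addition) are left implicit in the paper but are both needed and correctly handled.
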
 
\begin{proof}
    Take $h \in \predicates(\spSet, \sAlg, \Dist[\interpret{\blank}]*) $.
    As $ \neg\blank\in \Sprache $ we have for each $x,y \in \spSet$ by \cref{eq:dist_Sprache_def} that $
        h(x) - h(y) \leq \sup_{f \in \interpret\Sprache} f(x) - f(y)
    $.
    This implies \cref{eq:lem:approxAtPairOfPoints_assms} for any $\varepsilon > 0$.
    Thus $\interpret\Sprache$ approximates $h$ at $x, y$ for any $\varepsilon>0$ by \cref{lem:approxAtPairOfPoints}.
    Hence the claim follows.
\end{proof}

The next lemma is the well-known Kantorovich-Rubinstein duality extended to perfect measures. 
We provide a proof in the appendix, since the rather sketchy proof in \cite[Thm.~5]{RamachandranRüschendorf95} considers only distances on their induced Borel-\salg, while other known proofs, especially \cite[11.8.2\&{}6]{Dudley02}, chose a topological instead of a purely measure theoretic set-up.

\begin{lemma}[Kantorovich-Rubinstein theorem]
        \label{lem:KantorovicRubinstein}
    Let $\meas, \meas*\in \giry\measSp$ be perfect measures and $d\colon \measSp \times \measSp \to ([0,1],\Borel{[0,1]})$ a l.s.m.\nolinebreak[3]\ pseudo-metric such that $\measSp$ is analytic (or smooth, or $\topy_d$, the topology induced by $d$, is contained in $\overline{\sAlg}$) and $\Dist*(\blank,x_0)$ is integrable for every $ x_0 \in \spSet $ .
    Then
    \[
            \inf_{\meas[c]\in K(\meas,\meas*)} \int \Dist* \diff \meas[c]
        =   \sup_{h \in \predicates(\spSet, \sAlg, \Dist*)}
            \int h \diff (\meas - \meas*)
    \text.
    \]
\end{lemma}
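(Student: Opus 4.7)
The plan is to prove the two inequalities separately, with the straightforward direction $\geq$ being handled by direct computation, and the harder direction $\leq$ via a minimax argument analogous to the one used in \cref{thm:WassersteinDist_cpoCont}.

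For the easy direction, given any coupling $\meas[c] \in K(\meas, \meas*)$ and any short predicate $h \in \predicates(\spSet, \sAlg, d)$, one computes
\[
    \int h \diff(\meas - \meas*) = \int \bigl(h(\proj 1) - h(\proj 2)\bigr) \diff \meas[c] \leq \int d \diff \meas[c],
\]
using the marginal property of $\meas[c]$ and the shortness of $h$ pointwise in $\meas[c]$-a.e.\ pair. Taking sup over $h$ and inf over $\meas[c]$ yields $\sup_h \int h \diff (\meas - \meas*) \leq \inf_{\meas[c]} \int d \diff \meas[c]$.

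For the hard direction, I would follow the Lagrangian duality route. First rewrite
\[
    \inf_{\meas[c]\in K(\meas,\meas*)} \int d \diff \meas[c]
    =
    \inf_{\meas[c]\in \giry(X\times X)} \sup_{\phi,\psi}
    \Bigl( \int d \diff \meas[c] + \int \phi \diff \meas + \int \psi \diff \meas* - \int (\phi\oplus\psi) \diff \meas[c] \Bigr),
\]
where $\phi\oplus\psi(x,y) = \phi(x)+\psi(y)$ and $\phi,\psi$ range over bounded measurable functions (the sup is $+\infty$ unless $\meas[c]$ has the correct marginals). Then apply Sion's minimax theorem together with the Riesz–Markov–Kakutani representation and Banach–Alaoglu compactness already invoked in \cref{thm:WassersteinDist_cpoCont}; perfectness of $\meas,\meas*$ is what supplies the requisite compactness of $K(\meas,\meas*)$ in a suitable weak topology. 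After swapping, the inner infimum over $\meas[c] \geq 0$ is finite iff $\phi\oplus\psi \leq d$ pointwise, and in that case evaluates to $\int \phi \diff \meas + \int \psi \diff \meas*$. Thus
\[
    \inf_{\meas[c]} \int d \diff \meas[c] = \sup_{\phi\oplus\psi \leq d} \int \phi \diff \meas + \int \psi \diff \meas*.
\]
Finally, using the $d$-transform (Lipschitz extension): given any feasible $(\phi,\psi)$, replace $\psi$ by $\psi^d(y)\coloneqq \inf_{x}\bigl(d(x,y)-\phi(x)\bigr)$ and then $\phi$ by $-\psi^d$. Symmetry and triangle inequality of $d$ ensure the resulting $h\coloneqq -\psi^d$ is short with respect to $d$ and that $h\oplus(-h)\leq d$, which reduces the dual to $\sup_h \int h \diff (\meas - \meas*)$.

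The main obstacle I expect is justifying the $d$-transform step in the purely measure-theoretic setting: the infimum defining $\psi^d$ is uncountable and therefore not automatically measurable. This is precisely where the three alternative hypotheses come in---$\measSp$ analytic, or smooth, or $\topy_d \subseteq \overline{\sAlg}$---so that the projection / infimum remains universally measurable (via the Suslin-set machinery of \cref{subsec:pred} and \cref{lem:directImage}), and where perfectness of $\meas,\meas*$ is used to ensure that the integrals in question are unaffected by passing between $\sAlg$ and $\overline{\sAlg}$. A secondary care point is checking the integrability and boundedness hypotheses needed to apply Sion's theorem; here the assumption that $d(\blank,x_0)$ is integrable guarantees that the dual value is finite, so the truncation to $[0,1]$-valued short functions is admissible.
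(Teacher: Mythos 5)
Your architecture is correct and, at the decisive step, coincides with the paper's: both arguments pass through the two-function dual $\sup\{\int f\diff\meas+\int g\diff\meas* \mid f\oplus g\le d\}$ and then collapse it to the one-function form by the $d$-transform $h(x)=\inf_y\bigl(d(x,y)-g(y)\bigr)$, using symmetry and the triangle inequality to get shortness of $h$, and the analytic/smooth hypothesis (the infimum is an $\exists_{\proj 1}$ in the sense of \cref{lem:directImage}), or $\topy_d\subseteq\overline{\sAlg}$, to get its measurability; you correctly flag that measurability as the crux. Where you diverge is the first half: the paper does not rerun the minimax machinery of \cref{thm:WassersteinDist_cpoCont} but simply quotes the duality for perfect measures from \cite[(D*)]{RamachandranRüschendorf95}, which is precisely the identity you propose to derive via Sion's theorem. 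Your self-contained route is viable---it is essentially how the cited result is proved, and it parallels the paper's own proof of \cref{thm:WassersteinDist_cpoCont}---but two points in your sketch would need firming up. First, $d$ is only l.s.m., hence universally measurable rather than $\sAlg\otimes\sAlg$-measurable, so before $\eval{d}$ can serve as a weak-$*$ lower semicontinuous functional on the Banach--Alaoglu-compact set of positive normalised functionals you must realise $d$ inside $\unifFct$ of a suitable completion, or approximate it from below. Second, perfectness does not supply ``compactness of $K(\meas,\meas*)$'' as such: the compact set is the set of finitely additive couplings, and the role of perfectness (via Ryll-Nardzewski and Pachl, exactly as in the proof of \cref{thm:WassersteinDist_cpoCont}) is to guarantee that every charge with marginals $\meas,\meas*$ is already countably additive, so that the minimax value computed over charges equals the infimum over genuine measure couplings. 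Citing the Ramachandran--R\"uschendorf duality, as the paper does, sidesteps both issues; everything after that point in your proposal matches the paper's proof.
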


\begin{theorem}
        \label{thm:expressivity}
    Let $\Sprache$ be a language with a coalgebra $\gamma \colon \measSp \to \BfunctorMDP \measSp$ so that $\interpret{\blank}$ is well defined. Assume the following restrictions:
    \begin{enumerate}
        \item \label{lst:thm:expressivity_nonTrivIneq_perfectRegular} every measure $\gamma_{x,a}^0$ (for every $x\in X$, $a\in\actions$) is perfect,
        \item the theory map $\quantTheory$ is topologisable, and
        \item the scalar addition is in the signature of our language $\Sprache$,
    \end{enumerate}
    then $\ldistance$ is a fixpoint of the functional $\sigma^{\BfunctorMDP} \circ \gamma$. As a consequence, we have that the language $\ldistance$ is expressive w.r.t.\nolinebreak[3]\ $\bdistance$ (i.e., $\bdistance\leq \ldistance$).
\end{theorem}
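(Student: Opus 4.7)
The plan is to prove that $\ldistance$ is a fixpoint of $F\coloneqq(\gamma\times\gamma)^*\circ\sigma^{\BfunctorMDP}$; since $\bdistance$ is the least fixpoint of $F$ by \cref{cor:bdistance-exists}, the expressivity inequality $\bdistance\leq\ldistance$ then follows at once. The fixpoint equation splits into the two pointwise inequalities $\ldistance\leq F(\ldistance)$ and $F(\ldistance)\leq\ldistance$ on $\spSet\times\spSet$.

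For $\ldistance\leq F(\ldistance)$, I would proceed by structural induction on $\varphi\in\Sprache$, showing $|\interpret{\varphi}(x)-\interpret{\varphi}(y)|\leq F(\ldistance)(x,y)$ and then taking the supremum over $\varphi$. The cases for logical symbols and (nonexpansive, as inherited from the adequacy hypotheses of \cref{thm:adequacy}) $f_{iy}$ are routine Lipschitz estimates. The critical case is $\varphi=\diamond_a\varphi'$: here $\interpret{\varphi'}$ is automatically short with respect to $\ldistance$ by the very definition of $\ldistance$ as a supremum over formulae, so Kantorovich-Rubinstein duality (\cref{lem:KantorovicRubinstein}, applicable since $\gamma^0_{a,x}$ and $\gamma^0_{a,y}$ are perfect by hypothesis (1)) bounds $|\int\interpret{\varphi'}\,d(\gamma^0_{a,x}-\gamma^0_{a,y})|$ by $\hat\sigma(\ldistance)(\gamma^0_{a,x},\gamma^0_{a,y})$; combining with the reward term and taking the supremum over actions then gives the required inequality.

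The reverse direction $F(\ldistance)\leq\ldistance$ is the substantial content. Fix $x,y\in\spSet$ and $\varepsilon>0$; the aim is to build $\varphi\in\Sprache$ with $|\interpret{\varphi}(x)-\interpret{\varphi}(y)|\geq F(\ldistance)(x,y)-\varepsilon$. I would first pick $a^*\in\actions$ nearly attaining the outer supremum of $F(\ldistance)(x,y)$; then, using \cref{lem:KantorovicRubinstein} again, pick a short predicate $h$ whose integral against $\gamma^0_{a^*,x}-\gamma^0_{a^*,y}$ is close to $\hat\sigma(\ldistance)(\gamma^0_{a^*,x},\gamma^0_{a^*,y})$---replacing $h$ by $1-h$ (expressible in $\Sprache$ via $\neg$) if necessary to align the sign with $\gamma^1_{a^*,x}-\gamma^1_{a^*,y}$. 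Then the witness will be $\varphi=\diamond_{a^*}\psi$ (possibly composed with an outer $\neg$), provided I can find $\psi\in\Sprache$ with $\interpret\psi$ sufficiently close to $h$ in integral against both measures.

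Producing such $\psi$ is the main obstacle: \cref{cor:approxAtPairOfPoints} gives only two-point pointwise approximation of short functions by $\interpret\Sprache$, whereas we need integral approximation. The plan is to upgrade to uniform approximation on a compact set of near-full measure. Perfectness (hypothesis (1)) supplies a compact $K$ in an ambient Polish space (into which $\measSp$ embeds as a Suslin subset by analyticity) with $\gamma^0_{a^*,x}(K),\gamma^0_{a^*,y}(K)\geq 1-\varepsilon'$; topologisability of $\quantTheory$ (hypothesis (2)) combined with the second countability of $\Sprache$ from \cref{lem:FctSp_2ndCount} places us in a well-behaved function-space setting in which $\interpret\Sprache$ forms a point-separating subalgebra of $C(K,\Omega)$ containing the constants---scalar addition (hypothesis (3)) together with $\wedge$ and $\neg$ supplying the algebraic structure. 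Stone-Weierstrass then yields uniform approximation of $h|_K$ by some $\interpret\psi|_K$, and the tail on the complement of $K$ contributes at most $\varepsilon'$ to each integral. Orchestrating these pieces---so that a single compact approximation works simultaneously for both measures, the uniform approximation on $K$ genuinely controls the integrals against perfect measures, and the estimate propagates cleanly through the outer $\diamond_{a^*}$---is the delicate part; letting $\varepsilon\to 0$ then closes the argument.
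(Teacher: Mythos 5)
Your proposal is correct and follows essentially the same route as the paper: reduce expressivity to showing $\ldistance$ is a prefixpoint of the functional, convert the Wasserstein infimum into a supremum over short predicates via Kantorovich--Rubinstein (\cref{lem:KantorovicRubinstein}), use perfectness to extract a compact set of near-full measure for both measures simultaneously, apply Stone--Weierstra\ss{} there (seeded by the two-point approximation of \cref{cor:approxAtPairOfPoints}) and control the tail on the complement. The only notable divergence is cosmetic: the paper builds a bespoke topology from preimages under $\quantTheory$ and $h$ and passes to the Kolmogorov quotient to guarantee that the relevant functions are continuous on the compact set, where you appeal to an ambient Polish embedding --- the ``delicate part'' you flag is exactly where the paper does this extra work.
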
 
\begin{proof}
    Let $\meas_{x,a} = \gamma_{x,a}^0$ (for each $x\in X,a\in \actions$), let $r^{a}_{x}=\gamma_{a,x}^1$ and  $r^a_{x,y} = \abs{r^a_x - r^a_y}$. Recall the distance liftings $\hat\sigma$ and $\dliftMP$ from \cref{prop:clift-Giry} and \cref{lem:dliftingForMP}, respectively.
    The claim $ \bdistance\leq \ldistance $ is---using order preservation of $\sigma^{\BfunctorMDP} \circ \gamma$, \cref{thm:WassersteinDist_cpoCont}---equivalent to
    $
        \forall \varepsilon > 0\colon
        \sigma^{\BfunctorMDP} \circ \gamma (x,y) \leq \ldistance(x,y)
        + \varepsilon
    $
    for all $x,y \in \spSet$.
    From the definition of $ \sigma^{\BfunctorMDP} $ this translates to the condition
    \begin{equation}
            \label{eq:thm:expressivity_proof}
    \forall \varepsilon > 0\colon
    \forall a \in \actions \colon\qquad
               \dliftMP_X(\ldistance) \circ \gamma (x,y) ((\meas_{x,a},r^a_x),(\meas_{y,a},r^a_y))
        \leq
            \ldistance + \varepsilon.
    \end{equation}
    To this end, we begin by expanding the left hand side of the above inequality:
    \begin{align}
    \nonumber
    \MoveEqLeft[4]
              \dliftMP_X(\ldistance) \circ \gamma (x,y) ((\meas_{x,a},r^a_x),(\meas_{y,a},r^a_y))
    \\
        &=
                \inf_{\meas[c]\in K(\meas_{x,a},\meas_{y,a})}
                    \int\ldistance \diff \meas[c]
            +_c r^a_{xy}
    \intertext{we push this expression to $[0,1]^{\SpracheFormulas}$ by letting $ \widetilde{\ldistance} \coloneqq \sup_{\varphi \in \image ( \quantTheory )}$ be a distance on $[0,1]^{\SpracheFormulas}$}
        &=
                \inf_{\meas[c]\in K(\quantTheory_*(\meas_{x,a}),\quantTheory(\meas_{y,a}))}
                    \int \widetilde{\ldistance} \diff \meas[c]
            +_c r^a_{xy}
    \intertext{As $[0,1]^\SpracheFormulas$ is second countable, cf.\nolinebreak[3]\ \cref{lem:FctSp_2ndCount}, we can (depending on $ \quantTheory_*(\meas_{x,a})$ and $\quantTheory(\meas_{y,a}) $) restrict the integral to $A \times A$ for some standard (and thus analytic) subspace $A \in \Borel{[0,1]^\SpracheFormulas} $ \cite[Thm.~6]{Faden85} (using that $ \quantTheory_*(\meas_{x,a})$ and $\quantTheory(\meas_{y,a}) $ again are perfect).
    (Note that this argument actually requires only second countability of $\SpracheFormulas$ as it can be refined using \cite[3.8.D]{Engelking89}, \cite[2.1.15]{Bogachev98} and \cite[\S~8~Remark]{Faden85}.)
    As $ \ldistance  $ is bounded,
        so $\ldistance(\blank, x)$ is certainly integrable for any $x\in \spSet$.
    Thus we can finally apply Kantorovic-Rubinstein duality \cref{lem:KantorovicRubinstein}, using \cref{lem:logicalDist_measurable} and Item \ref{lst:thm:expressivity_nonTrivIneq_perfectRegular} }
        &=\sup_{h \in \predicates([0,1]^{\SpracheFormulas}, \Borel{[0,1]^{\SpracheFormulas}}, \widetilde{\ldistance})}
                \int h \diff (\meas_{x, a} - \meas_{y, a})
                +_c r^a_{xy}
    \\
    \label{eq:thm:expressivity_proof2}
        &\leq
                    \sup_{h \in \predicates(\spSet, \sAlg, \ldistance)}
                \int h \diff (\quantTheory_*(\meas_{x, a}) - \quantTheory_*(\meas_{y, a}))
                +_c r^a_{xy}
    \end{align}
    By applying \cref{cor:approxAtPairOfPoints} we can approximate any short predicate $h$ by the interpretation of formulae in our logic $\interpret{\Sprache}$.

    Define $ \paving S \subseteq \sAlg $ to be
    $
            \paving S
        =
            \setBuilder*
                {U, \complOp{U}*}
                {U \in \invImSet{\quantTheory}*[\paving S_{\interpret\blank}] \cup
                \invImSet h * [\paving S_\Omega ] }
    \text.
    $
    Let $ \topy  $ denote the topology generated by $\paving S$ and $ \topSp[']* $ the Kolmogorov quotient, cf.\nolinebreak[3]\ \cref{sec:topology}, of $\topSp $ with unit map $\eta \colon \topSp \to \topSp[']*$.
    Both $
        \meas_{z,a}^\flat \coloneqq \giry(\eta)(\meas_{z,a})$ (for $z\in\{x,y\}$) are perfect since the push-forward measures of perfect measures \cite[451Ea]{Fremlin} is perfect and using \cref{lst:thm:expressivity_nonTrivIneq_perfectRegular} we know $\meas_{z,a}$ is perfect.
    By \cite[451M]{Fremlin} both measures are inner regular with respect to compact sets;
    thus we find compact sets $K_x, K_y \subseteq \spSet' $ with $
             \meas_{x,a}^\flat(\complOp{K_x}),    \meas_{y,a}^\flat(\complOp{K_y})
        <   \delta
    $.
    Thus $K_x \cup K_y $ is compact and so is $ K \coloneqq \invImSet{\eta}(K_x \cup K_y) $ by \cref{eq:Kolmogorov_unit_proper}.
    Moreover we have $
        \meas_{x,a}(\complOp{K}), \meas_{y,a}(\complOp{K_y}) < \delta
    $.
    As $\paving S$ is closed under complement, $ \topSp[']* $ is $\mathrm R_2$.

    So finally, the Stone-Weierstraß Theorem, \cref{lem:StoneWeierstraß}, is applicable to $ (K, \topy|_K) $; thus, every nonexpansive predicate $h$ can be approximated on $K$ by a function from the family $\setBuilder{\interpret{\varphi}|_K}{\varphi \in \SpracheFormulas}$. Let $\varphi_{h,\delta} \in \SpracheFormulas$ denote a witness of a $\delta$-approximation from $\interpret{\Sprache}|_K$ of $h|_K$.

    \allowdisplaybreaks
    Continuing at \cref{eq:thm:expressivity_proof2} we obtain (for all $\delta > 0$):
    \begin{align}
    \nonumber
    \MoveEqLeft[2]
                \sigma (\ldistance)(\meas_{x,a},\meas_{y,a})
            +_c r^a_{x,y}
    \\ \nonumber
        &\leq
            \biggl(\sup_{h \in \predicates(\spSet, \sAlg, \ldistance )}
                    \int_K h \diff (\meas_{x, a} - \meas_{y, a})
                +   \int_{\complOp K*} h \diff (\meas_{x, a} - \meas_{y, a})
            \biggr)
            +_c r^a_{xy}
    \\ \nonumber
        &\leq
            \biggl(\sup_{h \in \predicates(\spSet, \sAlg, \ldistance )}
                    \int_K \interpret{\varphi_{h,\delta}}
                        \diff (\meas_{x, a} - \meas_{y, a})
                +   \delta
                +   \delta \cdot 1
            \biggr)
            +_c r^a_{xy}
    \\ \nonumber
        &\leq
            \biggl(\sup_{h \in \predicates(\spSet, \sAlg, \ldistance )}
                            \int \interpret{\varphi_{h,\delta}} \diff (\meas_{x, a} - \meas_{y, a})
                +   3\delta
            \biggr)
            +_c r^a_{xy}
    \\ \nonumber
        &\leq
            \biggl(
            \sup_{\varphi \in \SpracheFormulas}
                \int \interpret\varphi \diff (\meas_{x, a} - \meas_{y, a}) + 3\delta
            \biggr)
            +_c r^a_{xy}
    \\ \label{eq:signature_to_be_applied}
        &=
            \biggl(
            \sup_{\varphi \in \SpracheFormulas}
                \left(\int \interpret\varphi \diff \meas_{x, a} +_c r^a_x \right)
                -
                \left(\int \interpret\varphi \diff \meas_{y, a}  +_c r^a_y\right)
            \biggr)
             + 3c\delta
    \\ \nonumber
        &=
            \sup_{\varphi \in \SpracheFormulas}
                    \interpret{ \diamond_a \varphi }(x)
                -   \interpret{ \diamond_a \varphi }(y)
                +   3c\delta
    \\ \nonumber
        &\textrel{\cref{eq:dist_Sprache_def}}\leq
            \ldistance (x,y) + 3c\delta
    \text.
    \end{align}
    Choosing $3c\delta < \varepsilon$, \cref{eq:thm:expressivity_proof} follows---finishing the proof.
    \allowdisplaybreaks[1]
\end{proof}

It should be noted that the restrictions on perfect measures in the above theorem is redundant when the coalgebra map $\gamma \in \ana$. Moreover, the second restriction from the previous theorem can also be discarded by imposing the following restrictions on the function symbols $(f_{iy})_{i\in\omega,y\in Y_i}$, which belong to the signature of our language $\Sprache$.



\begin{theorem}
        \label{thm:interpretation_hemicompact}
    Assume that $\Sprache$ given in \eqref{eq:Sprache_def} is such that
    each family $Y_i$ is endowed with a second countable Hausdorff topology $ \topy_i $ and the interpretation of function symbols $f_{iy}$ are jointly continuous with respect to $\topy_i$,
    then $ \quantTheory$ is topologisable by a second countable Hausdorff topology.
\end{theorem}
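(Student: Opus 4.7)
My plan is to construct the desired topology $\topy_\Sprache$ on $\SpracheFormulas$ by stratifying it along its shapes. Using the decomposition $\SpracheFormulas = \bigsqcup_{\psi \in \Shape(\Sprache)} \widehat{\psi}$ from \cref{ssec:shapes}, a straightforward induction on the structure of $\psi$ shows that each equivalence class $\widehat{\psi}$ is in canonical bijection with a finite product $Y_{i_1}\times\cdots\times Y_{i_k}$, with one factor for each occurrence of a function-symbol constructor $f_i$ appearing in $\psi$. Equipping each such finite product with the product topology built from the $\topy_i$'s, and then $\SpracheFormulas$ with the topological sum over $\Shape(\Sprache)$, yields a candidate topology $\topy_\Sprache$. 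Since $\Shape(\Sprache)$ is countable, each $Y_i$ is second countable Hausdorff, and both second countability and Hausdorffness are preserved by finite products and countable disjoint sums, $\topy_\Sprache$ is second countable and Hausdorff.

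The crucial technical step is a structural induction on $\psi \in \Shape(\Sprache)$ proving that for every $x \in X$ the evaluation $\widehat{\psi} \to \Omega$, $\varphi \mapsto \interpret{\varphi}(x)$, is continuous; universal measurability of $x \mapsto \interpret{\varphi}(x)$ for fixed $\varphi$ is already implicit in \cref{ssec:interpretation}. The cases for $\top$, $\neg$, and $\wedge$ follow at once from the continuity of the clauses in \cref{eq:logicSymb_interpret} combined with the inductive hypothesis; the function-symbol case $f_i$ uses the assumed joint continuity of $(y,\vec z)\mapsto f_{iy}(\vec z)$ composed with the inductively continuous subshape evaluations $\varphi_j \mapsto \interpret{\varphi_j}(x)$. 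The modality case reduces via sequential continuity (which suffices since $\topy_\Sprache$ is first countable) to bounded convergence: if $\varphi_n \to \varphi$ in $\widehat{\psi}$, then by the inductive hypothesis $\interpret{\varphi_n}(y) \to \interpret{\varphi}(y)$ pointwise, and since every value lies in $[0,1]$ we get $\int \interpret{\varphi_n} \diff \gamma_{a,x}^0 \to \int \interpret{\varphi} \diff \gamma_{a,x}^0$.

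To conclude topologisability, fix a subbasic open $V(K,(a,b))$ of the compact-open topology on $\Omega^\Sprache$ with $K \subseteq \SpracheFormulas$ compact and $(a,b) \subseteq \Omega$. Continuity of $\varphi \mapsto \interpret{\varphi}(x)$ on the compact $K$ forces extremal values to be attained, and separability of $K$ (inherited from second countability of $\SpracheFormulas$) lets us witness these extrema on a fixed countable dense $D \subseteq K$. Hence $\quantTheory^{-1}(V(K,(a,b)))$ equals $\{x : \inf_{\varphi \in D} \interpret{\varphi}(x) > a\} \cap \{x : \sup_{\varphi \in D} \interpret{\varphi}(x) < b\}$, which is universally measurable as a finite intersection of countable meets and joins of universally measurable functions. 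Passing from subbasic to arbitrary opens is then routine via the second countability of $\Omega^{\SpracheFormulas}$ furnished by \cref{lem:FctSp_2ndCount}.

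The main obstacle is the modality case of the structural induction, which hinges on ensuring bounded convergence is applicable: this requires universal measurability of the limit integrand and the integrability conventions set up in \cref{sec:bdistance} for universally measurable functions against the (in general only perfect) Giry-image measures $\gamma_{a,x}^0$. A secondary bookkeeping concern is that \cref{lem:FctSp_2ndCount} additionally wants local compactness of $\SpracheFormulas$, which is conferred by the shape decomposition as soon as each $Y_i$ is locally compact (a mild strengthening); alternatively, one can bypass this by arguing universal measurability of preimages of arbitrary opens directly from the countable base on $\SpracheFormulas$ using the separability argument from the previous paragraph.
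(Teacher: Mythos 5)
Your proposal is correct in substance but, for the key measurability step, follows a genuinely different and more elementary route than the paper. The construction of the topology on $\SpracheFormulas$ (product topologies on each shape class $\widehat{\psi}$, then the countable topological sum over $\Shape(\Sprache)$) and the structural induction showing that $\varphi \mapsto \interpret{\varphi}(x)$ is continuous on each $\widehat{\psi}$ --- with dominated convergence handling the modality case --- coincide with the paper's. You diverge afterwards: the paper runs a second structural induction proving that each component $\quantTheory|_{\widehat{\psi}}\colon \spSet \to \Omega^{\widehat{\psi}}$ is measurable, which forces it through a k-space/inverse-limit reduction to compact subsets of $\widehat{\psi}$, metrisability of $\Omega^{K}$, and an $\varepsilon$-ball subbase around a dense sequence in the image, glued together by a hereditary-Lindel\"of argument. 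You instead verify the definition of topologisability head-on: for a subbasic open $V(K,(a,b))$ you identify $\invImSet{\quantTheory}(V(K,(a,b)))$ with $\setBuilder{x}{\inf_{\varphi \in D} \interpret{\varphi}(x) > a} \cap \setBuilder{x}{\sup_{\varphi \in D} \interpret{\varphi}(x) < b}$ for a countable dense $D \subseteq K$, using continuity on the compact $K$ to replace quantification over $K$ by extrema over $D$, and then only need that countable infima and suprema of universally measurable predicates are universally measurable. This is shorter, bypasses the Fremlin machinery entirely, and trades the paper's componentwise Borel measurability for the universal measurability that the definition of topologisability actually asks for; its only extra input is that each individual $\interpret{\varphi}$ lies in $\predicates\measSp$, which is part of the standing setup (and which the paper's induction in effect re-derives). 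Two small points remain. The passage from subbasic to arbitrary opens needs a countable base (or hereditary Lindel\"ofness) of the compact-open topology on $\Omega^{\SpracheFormulas}$, and \cref{lem:FctSp_2ndCount} delivers this only for locally compact domains; you flag this honestly, and since the paper's own proof invokes \cref{lem:FctSp_2ndCount} under the same hypotheses this is a shared reliance rather than a defect of your route (and is harmless in the intended application where $Y_i = [0,1]$). However, your suggested bypass via a countable base on $\SpracheFormulas$ alone does not work as stated: second countability of the domain does not yield second countability of the compact-open function space without hemicompactness, cf.\ \cref{rem:lem:FctSp_2ndCount}. Finally, intervals form only a base, not all opens, of $\Omega$, so reducing a general $V(K,U)$ to the sets $V(K',(a,b))$ requires the standard argument of covering the compact image $\interpret{\blank}(x)[K]$ by finitely many intervals and splitting $K$ into the corresponding compact pieces; this is routine but deserves a sentence.
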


Recalling that $[0,1]$ is compact, thus the assumptions of \cref{thm:interpretation_hemicompact} are fulfilled.
\begin{lemma}
        \label{lem:standardAppl}
    For a language $\Sprache$ with the following signature in which the set $\actions$ of actions is countable, the theory map $\quantTheory$ is topologisable.
    \[
            \Sprache
        \Coloneqq
            {\wedge} \mid {\neg} \mid \top
            \mid \blank + r, r \in [0,1]
            \mid \blank - r, r \in [0,1]
            \mid \diamond_a, a \in \actions.
    \]
\end{lemma}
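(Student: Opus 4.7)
My plan is to reduce the claim to a direct application of \cref{thm:interpretation_hemicompact}; the sentence preceding the lemma statement already flags this by noting that $[0,1]$ is compact, so the natural indexing topology on the scalar shift families is second countable Hausdorff.

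First, I would identify the indexing families $Y_i$ appearing in this particular signature and equip each with a second countable Hausdorff topology. The logical symbols $\top$, $\neg$, $\wedge$, and the diamond modalities $\diamond_a$ for $a\in\actions$, are all singleton-indexed families, so each carries the trivial (one-point) topology, which is vacuously second countable and Hausdorff; the countability of $\actions$ is what lets us enumerate these as an $\omega$-indexed list, as demanded by the signature shape \eqref{eq:Sprache_def}. The two non-trivial families are scalar addition $\blank+r$ and scalar subtraction $\blank-r$, both indexed by $r\in[0,1]$, which I would equip with the standard Euclidean topology, second countable Hausdorff by compactness.

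Second, I would verify joint continuity of each interpretation, reading off \eqref{eq:logicSymb_interpret} and \eqref{eq:basicSymb_interpret}. The interpretations $1$, $x\mapsto 1-x$ and $(x,y)\mapsto\min\{x,y\}$ of the logical symbols are manifestly continuous on the appropriate powers of $[0,1]$; for the scalar operations, the maps $(x,r)\mapsto\min\{1,x+r\}$ and $(x,r)\mapsto\max\{0,x-r\}$ are continuous on $[0,1]\times[0,1]$ as compositions of continuous functions ($+$, $-$, $\min$, $\max$). Thus the hypothesis on joint continuity in \cref{thm:interpretation_hemicompact} is satisfied.

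With both hypotheses verified, \cref{thm:interpretation_hemicompact} immediately supplies a second countable Hausdorff topology on $\SpracheFormulas$ witnessing that $\quantTheory$ is topologisable. I do not foresee any substantive obstacle: the proof is purely a routine hypothesis check, the only conceptual ingredient being that compactness of $[0,1]$ provides the right topological structure on the scalar indexing families.
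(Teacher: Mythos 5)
Your proposal is correct and matches the paper's own (implicit) argument: the paper disposes of this lemma in the single sentence preceding it, namely that compactness of $[0,1]$ (hence second countability and Hausdorffness of the index families, with the singleton families being trivial) together with the evident joint continuity of the interpretations in \eqref{eq:logicSymb_interpret} and \eqref{eq:basicSymb_interpret} puts the signature within the scope of \cref{thm:interpretation_hemicompact}. Your hypothesis check, including the role of countability of $\actions$ in keeping the collection of families $\omega$-indexed, is exactly the intended reasoning.
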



Now combining \cref{lem:standardAppl,thm:adequacy,thm:expressivity} we get that the modal language $\Sprache$ defined in \cref{lem:standardAppl} is both adequate and expressive for bisimulation pseudo-metrics $\bdistance$ defined in \cref{cor:bdistance-exists} for MDPs.
\begin{corollary}
        \label{cor:standardAppl}
    Let $\Sprache$ be the language as given in \cref{lem:standardAppl} and let $\gamma\colon \measSp\to\BfunctorMDP\measSp \in \ana$ be an MDP. Then the bisimulation pseudometric (defined in \cref{cor:bdistance-exists}) coincides with the logical distance $\ldistance$.

\end{corollary}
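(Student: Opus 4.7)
The plan is to obtain the two inequalities $\bdistance \geq \ldistance$ and $\bdistance \leq \ldistance$ separately by checking that the language $\Sprache$ specified in \cref{lem:standardAppl} satisfies the hypotheses of \cref{thm:adequacy} and \cref{thm:expressivity}, respectively.

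For the adequacy direction, I would invoke the ``moreover'' clause of \cref{thm:adequacy}: the signature of our $\Sprache$ consists only of the logical symbols $\top, \neg, \wedge$, the modalities $\diamond_a$ (indexed by the countable set $\actions$), and scalar addition and subtraction, all of which are listed in that clause as being always adequate. Hence $\bdistance \geq \ldistance$ holds without any further assumption on $\gamma$.

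For the expressivity direction, I need to verify the three hypotheses of \cref{thm:expressivity}. First, since $\gamma\colon \measSp \to \BfunctorMDP \measSp$ is a morphism in $\ana$ and $\measSp$ is analytic, every probability measure $\gamma^0_{x,a} \in \giry\measSp$ is a probability measure on an analytic space, and it is standard (cf.\ the discussion after the definition of perfect measures in the remark following \cref{thm:WassersteinDist_cpoCont}) that such measures are perfect; this gives hypothesis (1). Second, hypothesis (2), the topologisability of $\quantTheory$, is exactly the conclusion of \cref{lem:standardAppl}, whose hypotheses are met because the set $\actions$ is countable and the interpretations of $\top$, $\neg$, $\wedge$, $\blank + r$, $\blank - r$ are (jointly) continuous in all of their arguments---including the scalar parameter $r \in [0,1]$, which ranges over the second countable Hausdorff space $[0,1]$. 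Third, scalar addition $\blank + r$ is explicitly in the signature, giving hypothesis (3). Applying \cref{thm:expressivity} yields $\bdistance \leq \ldistance$.

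Combining the two inequalities gives $\bdistance = \ldistance$. The main obstacle---or rather the only subtlety---is checking the perfectness of the measures $\gamma^0_{x,a}$; everything else is a direct bookkeeping check that the specific signature matches the abstract hypotheses of the two prior theorems.
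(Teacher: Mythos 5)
Your proposal is correct and follows essentially the same route as the paper, which obtains the corollary by combining \cref{lem:standardAppl} (topologisability), the ``moreover'' clause of \cref{thm:adequacy} (adequacy), and \cref{thm:expressivity} (expressivity), with perfectness of the measures $\gamma^0_{x,a}$ being automatic for coalgebras in $\ana$ exactly as you argue.
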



One may anticipate, following \cite{ChenClercPanaganden25}, to decompose the semantics of our diamond modality into two modalities: one modelling the expectation modality $\diamond_a' \varphi$ and the 0-ary reward modality $\reward_a$. The semantics of these two modalities given below in $\Sprache'$ is taken from \cite{ChenClercPanaganden25}. We argue next that this is unfortunately not possible without jeopardizing the adequacy result.


\begin{remark}
        \label{thm:Sprache_reward}
    Assume $c\in[0,1]$ and a language $\Sprache'$ with signature
    \begin{equation}
            \label{eq:Sprache_reward}
        \top
        \mid \neg \blank
        \mid \blank \land \blank
        \mid\reward_a, a \in \actions
        \mid\diamond'_a, a\in\actions
        \mid r\blank, r\in [0,1]
        \mid \blank + \blank
    \end{equation}
    and
    an interpretation depending on $\gamma\colon\measSp \to \BfunctorMDP\measSp \in \ana$ defined by:
    \begin{equation*}
        \interpret{\diamond_a \varphi}_\gamma(x)
    =   c \int \interpret{\varphi} \diff \gamma_{a,x}^0
     \qquad \text{and} \qquad
        \interpret{\reward_a} (x) =
        \gamma_{a,x}^1.
    \end{equation*}
Then the language $\Sprache'$  is expressive w.r.t. $\bdistance$ (i.e., $\bdistance \leq \mathbf{d}_{\Sprache'}$). However, $\Sprache'$ is not adequate since the binary (truncated) addition is not nonexpansive w.r.t. suprema distance.
\end{remark}


\section{Related work and concluding remarks}\label{sec:conc}

\subsection{Related work}
Our work is inspired by \cite{FernsPanangadenPrecup11} and establishes a quantitative version of Hennessy-Milner theorem for the therein defined bisimulation pseudometric.
To the best of our knowledge, such a generalisation is novel and has not been studied elsewhere in the literature. The key technical differences between the two works are as follows. First, our notion of conformance on continuous state MDPs is based on universal measurability; whilst, it is based on lower semi-continuity in \cite{FernsPanangadenPrecup11}. Note that every lower semi-continuous function is universally measurable. Second, our MDPs are coalgebras living in $\ana$ and the state space of an MDP is thus an analytic space in our paper; whilst, it is a Polish space in \cite{FernsPanangadenPrecup11}. Third, the bisimulation pseudometric $\bdistance$ defined in this paper is based on Wasserstein lifting; whilst, the bisimulation pseudometric (denoted $\bdistance_{FPP}$) of Ferns et al.\nolinebreak[3]\ is based on Kantorovich lifting. Note that the pseudometrics $\bdistance$ and $\bdistance_{FPP}$ are equivalent due to the Kantorovich-Rubinstein duality (\cref{lem:KantorovicRubinstein}). Finally, we employ the Kleene's fixpoint theorem to define $\bdistance$, whilst, Ferns et al.\nolinebreak[3]\ employed Banach fixed point theorem to define their bisimulation pseudometric $\bdistance_{FPP}$.

The recent work of Chen et al.\nolinebreak[3]\ \cite{ChenClercPanaganden25} on continuous time Markov processes (i.e. a family of $\BfunctorMP$-coalgebras indexed by non-negative real numbers) is also insightful, where a quantitative version of Hennessy-Milner theorem is also proven. The mathematical development followed in \cite{ChenClercPanaganden25} is worth comparing, especially when this family is restricted to a singleton coalgebra (say, for instance, $\gamma \colon \measSp \to \BfunctorMP\measSp$) and the $\sigma$-algebra $\sAlg$ is generated by a Polish topology on $X$. The functional $\mathcal F_c$ (for some $0<c<1$) defined in \cite{ChenClercPanaganden25} is not an endofunction in general on the lattice of lower semi-continuous functions on $\measSp$. Using the notations of this paper, $\mathcal F_c$ can be rewritten as:
$
\mathcal F_c (d) (x,y) = c \cdot \hat\sigma(d) (\gamma^0(x),\gamma^0(y))$ for every $x,y\in X$.

Nevertheless, to capture their bisimulation pseudometric (denoted $\bdistance_{CCP}$) by a fixpoint argument, the authors had to work with continuous distance functions on $X$. The usual Knaster-Tarski fixpoint theorem is inapplicable and the authors constructed $\bdistance_{CCP}$ as the limit of following pseudometrics $\delta_i$:
$
\delta_0=(\gamma^1\times\gamma^1)^* (d_E)$; $\delta_{i+1} = \mathcal F_c(\delta_i).$
As a result, the two bisimulation pseudometrics $\bdistance_{CCP}$ and $\bdistance$ (\cref{lem:dliftingForMP}) are different.

The recent works \cite{beohar_et_al:LIPIcs.STACS.2024.10,KupkeRot-CoindPred-2020,KomoridaKatsumataKu-ExpressivityofQuant,forster_et_al:LIPIcs.CSL.2023.22}  on developing expressive modal logic for a behavioural conformance that are defined by codensity lifting (called Kantorovich lifting in \cite{forster_et_al:LIPIcs.CSL.2023.22}) can, unfortunately, not be directly applied to the current setting. This is due to the underlying assumption of behavioural conformances defined internally in a complete lattice fibration (or equivalently using the language of topological functors \cite{forster_et_al:LIPIcs.CSL.2023.22}). To this end, we adopted a coupling-based lifting approach (inspired from \cite{BonchiKonigPetrisan18}) to define our bisimulation pseudometric. This adoption required significant effort in recasting old results from measure theory in our framework as outlined in Assumptions~\cref{ass:a1}-\cref{ass:a5}.

\subsection{Concluding remarks}
To summarise, we model both MRPs and MDPs with continuous state spaces as coalgebras in $\ana$ and define the notion of bisimulation pseudometric using the well known Kleene's fixpoint theorem. The latter was based on a given coalgebra $\gamma\colon \measSp \to \giry \measSp \in \ana$ and the fact that a functional $\gamma\circ \hat\sigma \colon \predicates \measSp \to \predicates\measSp$ is \ocpo-continuous (\cref{thm:WassersteinDist_cpoCont}), whose proof was in turn based on classical results from functional analysis. In addition, we also presented a `quantitative' modal logic $\Sprache$ whose formulae are interpreted as universally measurable predicates over the state space of an MDP and the logical distance $\ldistance$ generated by $\Sprache$ coincides with the bisimulation distance $\bdistance$. To prove the expressivity result (\cref{thm:expressivity}) is, certainly, more involved than the adequacy result (\cref{thm:adequacy}); nonetheless, they both require that the theory map $\quantTheory$ \eqref{eq:quantitativeTheory_def} is topologisable.

For future work the fact that in the expressivity result a topological structure on the formulas instead of any requirement on the statement was the key assumption may stipulate new perspectives.
A more concrete worthwhile enterprise would be to generalise the Stone-Weierstra\ss\ theorem for measurable spaces. This will help in directly invoking the argument to approximate a nonexpansive map $h$ by logical formulae in the proof of \cref{thm:expressivity}; thus, avoiding the topological arguments used here.



\bibliography{bibTeX,bibTeX_T2Aconverted}

\appendix

\section{Notations and background}
    \label{sec:notation_background}

The power set of a set $\spSet$ is denoted by $\powerSet (\spSet)$.
For a function $f\colon \spSet \to \spSet*$ we denote for a subset $A\subseteq \spSet$ its direct image under $f$ by $\dirImSet f (U)$ or $f[U]$ and for a $V \subseteq \spSet*$ the inverse image by $\invImSet f (V)$.

\subsection{Perfect measures}
    \label{ssec:perfectMeasures}
Perfect measures were introduced by Kolmogorov \cite[22--23]{GnedenkoKolmogorov49}.
The aim was to provide a convenient subclass of measures general enough for all applications.
There are many different equivalent definitions.
We choose the following one:
A measure space $\measdSp$ is called \definiendum{perfect}, if for any  separable metrisable space $(\spSet*, 	\topy) $ and every measurable map $f\colon \spSet\to\spSet*$ we have the following property: For every $A \in \sAlg $ and $ r < \meas(A) $ there is a compact set $K \subseteq \image f$ with $\meas( A \cap \invImSet f*K ) \geq r$, cf.\nolinebreak[3]\ \cite[451O(a)]{Fremlin}.
Also $\meas$ is called \definiendum{perfect} in this case.
As a direct consequence of the definition, perfectness is functorial, i.e.\ the push-forward of a perfect measure is perfect.

In typical real-world applications two points of a measurable space should only be distinguished by the \salg{} if they can be distinguished by an observation. 
As only a finite amount of observations with limited precision can be made per unit of time, there should be a countable subset $\paving S \subseteq \sAlg$ distinguishing as strong a $\sAlg$ does ($ \forall x,y \in \spSet \forall A \in \sAlg\colon x \in A \wedge y \notin A \implies (\exists S \in \paving S\colon \# S \cap \{x,y\} = 1  )$).
A measurable space enjoying this property is called \definiendum{countably fibered}.
Perfect countably fibered probability spaces are actually---despite being a very general notion, including, e.g.\nolinebreak[3]\ analytic spaces---quite close to standard spaces \cite[\S~8 Rem.]{Faden85}:
Any such space is almost pre-standard with respect to some sub-\salg{} $\sAlg' \subseteq \sAlg$.
Almost pre-standard means that a space is standard when restricted to a Borel set of full measure and identifying all point not distinguished by $\sAlg$ \cite{Faden85}.
For countably generated spaces perfectness can even be characterised equivalently by being almost pre-standard \cite[Thm.~6]{Faden85}.

\subsection{\SuslinText{} operation and smooth spaces}
    \label{ssec:SouslinOp_smoothSp}
\begin{subequations}
For a function $ f_{(\blank)}\colon \omega \to X $
let $ f_{(\blank)}|_{\leq i} $ denote the restriction to the first $k$ indices.
Further let $\omega^{<\omega}$ denote the set of all finite sequences in $\omega$.
The \SuslinText{} operation, cf.\nolinebreak[3]\ \cite[25.4]{Kechris12} or \cite[421B]{Fremlin}, is denoted by $\Souslin$;
we recall that it is defined by
\begin{equation}
        \Souslin \paving P
    =   \bigcup_{n_{(\blank)} \in \omega^\omega}
            \bigcap_{i \in \omega} A_{n_{(\blank)}|_{\leq i}}
\end{equation}
for a \SuslinText{} scheme $A_{(\blank)}\colon \omega^{<\omega}$.
Further, we remind of its elementary properties: that for any paving $\paving P$ and countable family $\paving S \subseteq \Souslin \paving P$
\begin{align}
        \label{eq:Souslin_countableUnion}
    \bigcup \paving S &\in \Souslin \paving P
\\
        \label{eq:Souslin_countableIntersection}
    \bigcap \paving S &\in \Souslin \paving P
\end{align}
\cite[421E]{Fremlin}, also for any function $f\colon \spSet \to \spSet*$ and paving $\paving Q$ on $\spSet*$ that
\begin{equation}
        \label{eq:Souslin_pullback}
    \invImSet f*[\Souslin \paving Q] = \Souslin( \invImSet f*[\paving Q] )
\end{equation}
\cite[421Cc]{Fremlin}, as well as monotonicity \cite[421Ca]{Fremlin} and idempotence \cite[421D]{Fremlin}, that for any paving $\paving P$
\begin{align}
        \label{eq:Souslin_monotone}
    \paving P &\subseteq \Souslin \paving P
\\
        \label{eq:Souslin_idempotent}
    \Souslin (\Souslin \paving P) &= \Souslin \paving P
\text.
\end{align}
\end{subequations}

Denote the Giry monad by $
    \giry\measSp = (\mathrm{M}_{\measSp}, \mathrm{A}_{\measSp})
$.

We also recall a generalisation of analytic spaces: smooth spaces as introduced by Falkner \cite{Falkner81}.
It can be defined as follows \cite[1.3]{Falkner81}:
\begin{definition}
        \label{def:smooth}
    A measurable space $\measSp$ is called \definiendum{smooth} if 
        for any measurable space $\measSp$ and any $A \in \Souslin(\sAlg*\measTimes\sAlg)$ the projection on the first component $\proj{\spSet*} A$ is in $\Souslin{\sAlg*}$.
\end{definition}

Note the following fact \cite[1.3]{Falkner81}:
\begin{lemma}
        \label{lem:analytic_=>_smooth}
    Every analytic space is smooth.
\end{lemma}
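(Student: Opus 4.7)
My plan is to reduce smoothness of the analytic space $\measSp$ to smoothness of an ambient standard space, and then to invoke the classical projection theorem for Suslin sets over standard Borel spaces. Since $\measSp$ is analytic, I first fix a standard space with underlying Polish space $Z$, a Suslin set $S \in \Souslin \Borel{Z}$, and an isomorphism $\measSp \cong (S,\Borel{Z}|_S)$; identifying $\spSet$ with $S$, I have $\sAlg = \Borel{Z}|_S$. Given any measurable space $\measSp*$ and any $A \in \Souslin(\sAlg*\otimes\sAlg)$, I need to show $\proj{\spSet*} A \in \Souslin \sAlg*$.

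To lift $A$ to a Suslin set over the ambient product, set $\iota\colon \spSet*\times S \hookrightarrow \spSet*\times Z$. Every generator $B \times (C\cap S)$ of $\sAlg*\otimes \sAlg$ equals $\invImSet{\iota}(B\times C)$ for some $B \in \sAlg*$ and $C\in\Borel{Z}$, so monotonicity of $\Souslin$ combined with \eqref{eq:Souslin_pullback} will give $A \in \invImSet{\iota}\bigl[\Souslin(\sAlg*\otimes \Borel{Z})\bigr]$; hence $A = \tilde A \cap (\spSet*\times S)$ for some $\tilde A \in \Souslin(\sAlg*\otimes \Borel{Z})$. Because $S \in \Souslin \Borel{Z}$, the slab $\spSet*\times S$ lies in $\Souslin(\sAlg*\otimes \Borel{Z})$ (e.g.\ as the preimage of $S$ under the measurable projection $\spSet*\times Z \to Z$), so closure under intersection \eqref{eq:Souslin_countableIntersection} places $A$ itself in $\Souslin(\sAlg*\otimes \Borel{Z})$, without changing $\proj{\spSet*} A$.

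It now suffices to show that every standard space $\BorelSp{Z}$ is smooth: that is, $\proj{\spSet*} \tilde A \in \Souslin \sAlg*$ for every $\tilde A \in \Souslin(\sAlg*\otimes \Borel{Z})$. Here I would take a witnessing Suslin scheme $(\tilde A_s)_{s \in \omega^{<\omega}}$ for $\tilde A$ and exploit a countable base of the Polish topology on $Z$ to re-present each $\tilde A_s$ via countable unions/intersections of basic rectangles; then I would interleave the two $\omega^\omega$-indexed families --- the one from the original Suslin operation on the product and the one from the Polish structure of $Z$ --- into a single Suslin scheme in $\sAlg*$ witnessing $\proj{\spSet*} \tilde A \in \Souslin \sAlg*$. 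Alternatively, one can directly invoke the standard descriptive-set-theoretic projection theorem, e.g.\ \cite[\S~423]{Fremlin}.

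The main obstacle will be this last reduction, standard $\Rightarrow$ smooth, which is the one place where the Polish (topological) structure of $Z$ is used substantively; everything preceding it is formal bookkeeping with the $\Souslin$ identities collected in \cref{ssec:SouslinOp_smoothSp}.
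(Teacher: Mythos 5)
The paper itself offers no proof of this lemma: it is imported verbatim from Falkner \cite[1.3]{Falkner81}, so any honest argument you give is necessarily a ``different route''. Your reconstruction is sound. The bookkeeping is all covered by the identities of \cref{ssec:SouslinOp_smoothSp}: since $\sAlg = \Borel{Z}|_S$, the trace $\sigma$-algebra $\sAlg*\otimes\sAlg$ is exactly $\invImSet{\iota}[\sAlg*\otimes\Borel{Z}]$, so \eqref{eq:Souslin_pullback} lifts $A$ to some $\tilde A\in\Souslin(\sAlg*\otimes\Borel{Z})$; the slab $\spSet*\times S$ is Suslin over the ambient product by \eqref{eq:Souslin_pullback} and monotonicity \eqref{eq:Souslin_monotone}; and \eqref{eq:Souslin_countableIntersection} puts $A=\tilde A\cap(\spSet*\times S)$ itself into $\Souslin(\sAlg*\otimes\Borel{Z})$ without altering $\proj{\spSet*}A$. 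You also correctly isolate the single substantive input, the projection theorem for Suslin sets along a Polish factor. One caution about your sketched direct proof of that step: merely ``interleaving'' the two $\omega^\omega$-indexed families is not enough, because projection does not commute with the intersections taken along a branch of a Suslin scheme. The classical argument additionally arranges the $Z$-side of the scheme to consist of closed sets whose diameters shrink to $0$ along each branch, so that completeness of the Polish metric converts ``all finite intersections along the branch are nonempty'' into ``the full intersection is nonempty''; only then can membership of a point of $\spSet*$ in the projection be certified by a branch read off on the $\sAlg*$ side alone. Since you hedge by offering to cite the standard theorem (the relevant material in Fremlin is in \S\S~421--423), the proposal stands, but if you intend to write out the last step yourself, that completeness argument is the part that must be made precise.
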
 


\section{Proof of lemmas and theorems from \cref{sec:MDP}}
\begin{proposition}
        \label{prop:pullbackSet_measurable}
    Let $ 
        \measSp \xrightarrow{f} 
        \measSp*
        \xleftarrow{f'} \measSp[']*
    $  be a cospan of measurable spaces and assume that $\measSp*$ is countably separated.
    Then $L \coloneqq \spSet \setTimes[f][f'] \spSet* \coloneqq \setBuilder{(x,x') \in \spSet\times\spSet*}{f(x) = f'(x')} \in \sAlg \otimes \sAlg' $.
\end{proposition}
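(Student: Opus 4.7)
The plan is to reduce the claim to the measurability of the diagonal $\Delta_Y \subseteq Y \times Y$ by observing that $L = (f \times f')^{-1}(\Delta_Y)$, so that measurability of $L$ follows from measurability of the product map $f \times f'$ provided $\Delta_Y \in \sAlg* \measTimes \sAlg*$.

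\textbf{Step 1: Unpack countable separation.} By hypothesis, there exists a countable family $\{B_n\}_{n \in \omega} \subseteq \sAlg*$ such that for any two distinct points $y, y' \in \spSet*$ there is some $n$ with $\# (\{y,y'\} \cap B_n) = 1$. Equivalently, the map $\spSet* \to \{0,1\}^\omega$ given by the characteristic functions of the $B_n$ is injective.

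\textbf{Step 2: Express the diagonal as a countable Boolean combination.} I claim that
\[
  \Delta_{\spSet*} \;=\; \bigcap_{n \in \omega} \bigl((B_n \times B_n) \cup (\complOp{B_n}* \times \complOp{B_n}*)\bigr).
\]
The inclusion $\subseteq$ is immediate since any $(y,y)$ lies in both factors of each term. For $\supseteq$, if $(y_1, y_2)$ lies in the right-hand side then for every $n$ the points $y_1$ and $y_2$ agree on membership in $B_n$; by countable separation this forces $y_1 = y_2$. Each set $(B_n \times B_n) \cup (\complOp{B_n}* \times \complOp{B_n}*)$ is in $\sAlg* \measTimes \sAlg*$, and a countable intersection of such sets stays in $\sAlg* \measTimes \sAlg*$, so $\Delta_{\spSet*} \in \sAlg* \measTimes \sAlg*$.

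\textbf{Step 3: Pull back along $f \times f'$.} The map $f \times f' \colon \spSet \times \spSet[']* \to \spSet* \times \spSet*$ is $(\sAlg \measTimes \sAlg') $-$(\sAlg* \measTimes \sAlg*)$-measurable, because $f$ and $f'$ are measurable and the product $\sigma$-algebra is generated by rectangles. Since $L = (f \times f')^{-1}(\Delta_{\spSet*})$, we conclude $L \in \sAlg \measTimes \sAlg'$.

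No step looks like a real obstacle; the only subtlety is Step 2, where one must invoke countable separation to get the inclusion $\supseteq$. Without countable separation of $\spSet*$ this argument cannot go through, which matches the classical fact that the diagonal of an arbitrary measurable space need not be measurable in the product.
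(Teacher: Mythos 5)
Your proof is correct and follows essentially the same route as the paper: both use the countable separating family to write the condition $f(x)=f'(x')$ as a countable intersection of measurable Boolean combinations of rectangles. The only cosmetic difference is that you factor through the measurability of the diagonal $\Delta_{\spSet*}$ and pull back along $f\times f'$, whereas the paper expresses $L$ directly as $\bigcap_{S}\bigl(\invImSet f(\complOp S)\times \spSet' \cup \spSet\times\invImSet{f'}(S)\bigr)$ over a complement-closed separating family.
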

\begin{proof}
    Let $\paving S  $ denote a countable separating family on $\sAlg*$ such that $\complOp S \in \paving S $ for any $S \in \paving S $.
    Observe that
    \begin{align*}
            L
        &=  \setBuilder{(x, x') \in \spSet  \times \spSet' }{f(x) = f'(x')}
    \\  &=  \setBuilder
                {(x, x') \in \spSet  \times \spSet' }
                {\forall\, S \in \paving S \colon f(x) \in S \implies f'(x') \in S}
    \\  &=  \bigcap_{S \in \paving S }
            \setBuilder
                {(x, x') \in \spSet  \times \spSet' }
                { f(x) \in S \implies f'(x') \in S}
    \\  &=  \bigcap_{S \in \paving S }
            \invImSet f*(\complOp S) \times \spSet'  \cup \spSet  \times \invImSet f*(S)
    \\  &\in\sAlg\otimes\sAlg'
    \text.
    \qedhere
    \end{align*}
\end{proof}

\begin{corollary}
        \label{cor:graph_measurable}
    For a measurable map $f\colon \measSp \to \measSp*$ from a measurable space to a countably separated measurable space, the graph $\graph f$ is in $\sAlg \otimes \sAlg*$.
\end{corollary}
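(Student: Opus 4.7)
The plan is to derive this corollary as an immediate instance of the preceding Proposition~\ref{prop:pullbackSet_measurable}. The idea is to realise the graph of $f$ as the fibered product of $f$ with the identity map on the target space.

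Concretely, I would form the cospan
\[
\measSp \xrightarrow{f} \measSp* \xleftarrow{\mathrm{id}_{\spSet*}} \measSp*,
\]
noting that $\mathrm{id}_{\spSet*}$ is trivially $\sAlg*$-$\sAlg*$-measurable. The middle object of this cospan is $\measSp*$, which is countably separated by hypothesis, so Proposition~\ref{prop:pullbackSet_measurable} applies and yields that
\[
\spSet \setTimes[f][\mathrm{id}_{\spSet*}] \spSet* \;=\; \setBuilder{(x,y) \in \spSet \times \spSet*}{f(x) = \mathrm{id}_{\spSet*}(y)} \;\in\; \sAlg \otimes \sAlg*.
\]

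The only step that needs to be made explicit is the identification of the resulting fibered product with the graph, which is immediate from the definition: the condition $f(x) = \mathrm{id}_{\spSet*}(y)$ is simply $y = f(x)$, so $\spSet \setTimes[f][\mathrm{id}_{\spSet*}] \spSet* = \graph f$. There is no real obstacle here; the work has all been done in the proposition, and the countably separated assumption on the codomain is used exactly once, to invoke that proposition. Hence $\graph f \in \sAlg \otimes \sAlg*$, as claimed.
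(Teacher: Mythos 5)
Your proof is correct and is essentially the paper's own argument: the paper likewise derives the corollary by instantiating Proposition~\ref{prop:pullbackSet_measurable} with the identity on $\measSp*$ as one leg of the cospan (the paper writes the cospan as $\measSp* \xrightarrow{\mathrm{id}} \measSp* \xleftarrow{f} \measSp$, yours swaps the two legs, which only affects the order of the product factors). Your orientation in fact lands directly in $\sAlg \otimes \sAlg*$ as stated, so nothing further is needed.
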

\begin{proof}
    Apply \cref{prop:pullbackSet_measurable} to $ \measSp* \xrightarrow{\id} \measSp* \xleftarrow{f} \measSp $.
\end{proof}

\subsection{Proof of \cref{lem:directImage}}

\begin{proof}
    The existence of countable meets and joints follows directly from the fact that the $\Souslin \sAlg$ (even for arbitrary subsets $\sAlg \subseteq \powerSet \spSet$) is closed under countable union and intersection \cref{eq:Souslin_countableUnion,eq:Souslin_countableIntersection}, e.g.\ for meets $ \invImSet*{\inf_{i\in \omega} P_i}([0,r]) = \bigcup_{i \in \omega} ( \invImSet{P_i} [0,r]) \in \Souslin \sAlg $ for any $ r \in [0,1] $ assuming that all $ P_i \in \spredicates \measSp $.

    For the direct image let us first verify that $\inf(f)$ is well-defined, i.e.\ hit its codomain.
    Take any $r\in [0,1]$ and observe
    \begin{align*}
            \invImSet*{\exists_f P}( \intervOpen  r )
        &=  \setBuilder y { (\inf(f) P) (y) \in \intervOpen  r}
    \\  &=  
            \setBuilder*y { \inf P[\invImSet f*(\{y\})] \in \intervOpen  r }
    \\  &=  \setBuilder y {\exists s \in P[\invImSet f*(\{y\})] \colon s \in \intervOpen  r }
    \\  &=  \setBuilder y {\exists x\colon f(x) = y \wedge P(x) \in \intervOpen r }
    \\  &=  \proj Y \left(\graph{f} \cap \invImSet P*(\intervOpen r ) \times \spSet*\right)
    \end{align*}
    Note that $ \in \Souslin(\sAlg \measTimes \sAlg*) $ due to \cref{cor:graph_measurable,eq:Souslin_countableIntersection}.
    Thus the projection of this set to $Y$ is in $\Souslin \sAlg*$ as $\measSp$ is smooth \cref{def:smooth}.

    As for the adjunctions we have to check
    \[
    \begin{prooftree}
        \hypo{P \preceq Q\circ f }
        \infer[double]1
            {\inf(f) P \preceq Q }
    \end{prooftree}
    \]
    for all $P \in \predicates^\udarrows  \measSp$ and $Q \in \predicates^\udarrows  \measSp*$.
    But these facts are known from the set case.
\end{proof}

\section{Proof of lemmas and theorems from \cref{sec:bdistance}}
    \label{ssec:predicate_lifting}

\subsection{Proof of \cref{thm:predLifting_semimeas}}

For $ r \in [0,1] $ let $ 
    \intervLowerOpen r = \setBuilder{s \in [0,1]}{s < r}
$, the \definiendum{lower interval}, and $ 
    \intervUpperOpen r = \setBuilder{s \in [0,1]}{s > r}
$, the \definiendum{upper interval}.
Also, let $
    \intervLowerCl r = \setBuilder{s \in [0,1]}{s \leq r}
$ and $
    \intervUpperCl r = \setBuilder{s \in [0,1]}{s \geq r}
$ the corresponding \definiendum{closed lower\textnormal /upper}.
Our next challenge is to prove that predicate lifting preserves semi-measurability, which reduces to a question whether the following generators are \SuslinText{} sets.
\begin{equation}
        \label{eq:giry_sAlg_generators}
    \bar E^A_r \coloneqq \invImSet{\eval{A}}* \intervLowerCl r 
\text{,}\quad
    E_A^r \coloneqq \invImSet{\eval{A}}* \intervUpperOpen r 
\end{equation}
(with $\eval{A}$ interpreted with respect to the outer measure, if $A$ not measurable, i.e.\nolinebreak[3]\ $\eval{A}(\meas) = \meas^*(A)$)
with $ A \subseteq \spSet$ and $r \in [0,1]$.
The fact $ \bar E^A_r \in \Souslin \giry\measSp $ is proved using an idea from \cite[2.6.2]{Doberkat14} but presented only in terms of measurable spaces.

\begin{lemma}
        \label{lem:countGen_closedGenPav}
    Let $\measSp$ be a countably generated measurable space.
    Then there is a countable subpaving $\paving S \subseteq \sAlg$ generating $ \sAlg $ such that for
    \begin{equation}
        \paving F = \setBuilder*{\textstyle \decrIntersecLim_n S_n}{ S_1, \ldots \in \paving S, S_1 \supseteq S_2 \supseteq \ldots, n \in \omega }   
    \end{equation}
    we have
    \begin{enumerate}
        \item\label{lst:lem:countGen_closedGenPav_bottom}
            $\emptyset \in \paving F$,
        \item\label{lst:lem:countGen_closedGenPav_finUnion} 
            $ \paving F $ is closed under finite unions,
        \item\label{lst:lem:countGen_closedGenPav_countIntersec} 
            $ \paving F $ is closed under countable intersections,
        \item\label{lst:lem:countGen_closedGenPav_compl} 
            $ \paving S $ is closed under complement.
    \end{enumerate}
\end{lemma}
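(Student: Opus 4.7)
The plan is to take $\paving S$ to be the countable Boolean subalgebra of $\sAlg$ generated by an arbitrary countable generating family of $\sAlg$. First I would fix a countable paving $\paving G = \{G_1, G_2, \ldots\}$ generating $\sAlg$ (which exists by hypothesis) and define $\paving S$ inductively by $\paving S_0 = \paving G \cup \{\emptyset, \spSet\}$ and $\paving S_{n+1} = \paving S_n \cup \{\complOp S : S \in \paving S_n\} \cup \{S \cup T : S, T \in \paving S_n\}$, letting $\paving S = \bigcup_n \paving S_n$. Each $\paving S_n$ remains countable, so $\paving S$ is countable; it contains $\paving G$, hence still generates $\sAlg$; and it is closed under complement, finite unions, and---via De Morgan---finite intersections. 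Property~\ref{lst:lem:countGen_closedGenPav_compl} is then immediate, and Property~\ref{lst:lem:countGen_closedGenPav_bottom} follows from the constant sequence $\emptyset \supseteq \emptyset \supseteq \cdots$ since $\emptyset \in \paving S$.

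The actual content lies in the two closure properties of $\paving F$. For Property~\ref{lst:lem:countGen_closedGenPav_finUnion}, given $F_1 = \bigcap_n S^1_n$ and $F_2 = \bigcap_n S^2_n$ built from decreasing sequences $(S^i_n)_n$ in $\paving S$, I would prove the distributive identity $F_1 \cup F_2 = \bigcap_n (S^1_n \cup S^2_n)$. The inclusion $\subseteq$ is routine. For $\supseteq$, assume $x \in \bigcap_n (S^1_n \cup S^2_n)$ with $x \notin F_1$: some index $n_1$ witnesses $x \notin S^1_{n_1}$, so by decreasingness $x \notin S^1_n$ for every $n \geq n_1$, which forces $x \in S^2_n$ for all such $n$; a second application of decreasingness (now to $(S^2_n)_n$) then yields $x \in S^2_n$ for every $n$, i.e.\ $x \in F_2$. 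Since $\paving S$ is closed under finite unions, $(S^1_n \cup S^2_n)_n$ is itself a decreasing sequence in $\paving S$, so $F_1 \cup F_2 \in \paving F$.

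For Property~\ref{lst:lem:countGen_closedGenPav_countIntersec}, given $F_i = \bigcap_n S^i_n \in \paving F$ for $i \in \omega$, I would reshuffle $\bigcap_i F_i = \bigcap_{(i,n) \in \omega \times \omega} S^i_n$ into a single decreasing intersection by setting $T_k := \bigcap_{i \leq k,\, n \leq k} S^i_n$. Each $T_k$ is a finite intersection of elements of $\paving S$ and therefore lies in $\paving S$ by closure under finite intersections; the family $(T_k)_k$ is decreasing because its defining index set grows with $k$; and $\bigcap_k T_k = \bigcap_{i,n} S^i_n = \bigcap_i F_i$, which places $\bigcap_i F_i$ in $\paving F$.

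I do not foresee any genuine obstacle: everything reduces to bookkeeping in a countable Boolean subalgebra, together with the observation that the binary distributivity of union over intersection---which fails in general---does hold when both intersections run over decreasing sequences. That distributivity step in Property~\ref{lst:lem:countGen_closedGenPav_finUnion} is the only real content of the proof.
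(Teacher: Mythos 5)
Your proposal is correct and follows essentially the same route as the paper: both take $\paving S$ to be a countable generating Boolean algebra (so that Properties~\ref{lst:lem:countGen_closedGenPav_bottom} and~\ref{lst:lem:countGen_closedGenPav_compl} are immediate) and then verify the two closure properties of $\paving F$ by elementary distributivity and reindexing. The only cosmetic difference is that the paper first establishes closure for the paving $\paving S_\delta$ of \emph{arbitrary} countable intersections from $\paving S$ and then identifies $\paving S_\delta$ with $\paving F$ via closure of $\paving S$ under finite intersections, whereas you stay inside decreasing sequences throughout by means of your diagonal distributivity identity.
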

\begin{proof}
    Let $\paving S$ be a countably generating set.
    We may assume that $\paving S $ is an algebra.
    Thus \cref{lst:lem:countGen_closedGenPav_compl,lst:lem:countGen_closedGenPav_bottom} hold.
    Further set $\paving S_\delta = \setBuilder{\bigcap_n S_n}{ S_1, \ldots \in \paving S, n \in \omega } $.
    We check that $ \paving S_\delta $ is closed under finite unions and countable intersection:
    For finite unions this follows from distributivity, i.e.\nolinebreak[3]\ $ 
                \bigcap_i A_i \cup \bigcap_j B_j 
            =   \bigcap_j (\bigcap_i A_i \cup  B_j)
            =   \bigcap_{i, j} A_i \cup  B_j
        $.
    Countable intersection is simply reindexing (actually, this argument works for arbitrary intersections considering that $\paving S$ is only countable).
    Finally, we conclude the proof by showing $\paving S_\delta = \paving F$.
    Obviously, $ \paving S_\delta \supseteq \paving F $.
    For the other inclusion note that $\paving S$ is closed under finite intersections.
\end{proof}

\begin{lemma}
        \label{lem:countGen_innerReg}
    Let $\measSp$ be a measurable space and $\paving S \subseteq \sAlg$  a generating paving subject to \cref{lst:lem:countGen_closedGenPav_bottom,lst:lem:countGen_closedGenPav_compl,lst:lem:countGen_closedGenPav_countIntersec,lst:lem:countGen_closedGenPav_finUnion} of \cref{lem:countGen_closedGenPav} (with $\paving F$ as defined therein).
    Then any finite measure on $\measSp$ is $\paving F$-inner regular.
\end{lemma}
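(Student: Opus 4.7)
The plan is a standard good-sets argument based on a bi-directional approximation. I would define
\[
    \paving G = \setBuilder
        { A \in \sAlg }
        { \forall \varepsilon > 0,\ \exists F, F' \in \paving F\ \text{with}\ F \subseteq A,\ F' \subseteq \complOp{A},\ \meas(A \setminus F), \meas(\complOp{A} \setminus F') < \varepsilon }
\]
and show that $\paving G = \sAlg$; the existence of $F$ alone is precisely $\paving F$-inner regularity of $A$, so this suffices. Tracking an inner approximation of $A$ \emph{together with} an inner approximation of $\complOp{A}$ is the essential trick, since it makes closure of $\paving G$ under complements immediate.

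First, I would verify $\paving S \subseteq \paving G$. For $S \in \paving S$, the constant sequence $S \supseteq S \supseteq \ldots$ witnesses $S \in \paving F$, and by \cref{lst:lem:countGen_closedGenPav_compl} also $\complOp{S} \in \paving S \subseteq \paving F$; the choices $F = S$, $F' = \complOp{S}$ give zero error. The only real work is closure of $\paving G$ under countable unions, where the two closure properties of $\paving F$ play complementary roles. Given $(A_n) \subseteq \paving G$ and $\varepsilon > 0$, pick inner witnesses $F_n \in \paving F$ for $A_n$ and $F_n' \in \paving F$ for $\complOp{A_n}$, each with error at most $\varepsilon / 2^{n+1}$, and set $A = \bigcup_n A_n$. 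For the inner approximation of $A$, finiteness of $\meas$ lets me choose $N$ with $\meas\bigl(A \setminus \bigcup_{n \leq N} A_n\bigr) < \varepsilon / 2$; then $\bigcup_{n \leq N} F_n$ belongs to $\paving F$ by \cref{lst:lem:countGen_closedGenPav_finUnion} and approximates $A$ from inside within $\varepsilon$. For the inner approximation of $\complOp{A} = \bigcap_n \complOp{A_n}$, the intersection $\bigcap_n F_n'$ belongs to $\paving F$ by \cref{lst:lem:countGen_closedGenPav_countIntersec}, and a union bound controls the error: $\meas(\complOp{A} \setminus \bigcap_n F_n') \leq \sum_n \meas(\complOp{A_n} \setminus F_n') < \varepsilon / 2$.

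The only subtlety is the asymmetric use of the two closure properties: finite unions plus finiteness of $\meas$ on the union side, and countable intersections plus a geometric-series bound on the complement side. Once both directions are in place, $\paving G$ is a $\sigma$-algebra containing the generating paving $\paving S$, hence coincides with $\sAlg$, which gives the claim. Finiteness of $\meas$ enters essentially once, in the truncation step needed on the union side, and cannot be dropped without additional hypotheses.
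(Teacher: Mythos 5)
Your argument is correct, and it matches the paper's route in substance: the paper's own ``proof'' of this lemma is just the citation \cite[412C]{Fremlin}, and your good-sets argument---tracking inner $\paving F$-approximations of both $A$ and $\complOp{A}$ simultaneously, using finite unions on the union side and countable intersections on the complement side---is precisely the standard proof of that cited result. The only blemish is cosmetic: with per-set errors $\varepsilon/2^{n+1}$ for $n\in\omega$ the union bound on the complement side sums to $\varepsilon$ rather than $\varepsilon/2$ as you claim, but since any bound that can be made arbitrarily small suffices for inner regularity, this does not affect the conclusion.
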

\begin{proof}
    \cite[412C]{Fremlin}.
\end{proof}

\begin{subequations}
For a subpaving $\paving F \subseteq \sAlg$ as provided by \cref{lem:countGen_closedGenPav} set
\begin{align}
        \label{eq:F_wittnesses}
        \mathfrak F
    &=  \setBuilder{ S_{(\blank)}\colon \omega \to \paving S}{\text{decreasing, } S_0 = \spSet}
\text,
\\
        \label{eq:FA_wittnesses}
        \mathfrak F_A
    &=  \setBuilder*
            { S_{(\blank)} \in \mathfrak F }
            {\textstyle A \supseteq \bigcap_{i\in \omega} S_i }
&&\text{for any } A \subseteq X \text{ and}
\\
        \label{eq:FA_wittnesses_finite}
        \mathfrak K_A
    &=  \setBuilder*{ S_{(\blank)} \in \mathfrak F_A }{ \setBuilder{S_i}{i\in\omega} \text{ finite} }
\text.
\end{align}
\end{subequations}
Note that $\mathfrak K_A$ is countable as $\paving S$ is.

\begin{lemma}
        \label{lem:generatorMeasurable_embedding}
    There exists an map
    \begin{equation}
            \Phi
        \colon 
                \complOp{\mathfrak K_A} = \mathfrak F_A \setminus \mathfrak K_A
            \to \omega^\omega
    \end{equation}
    such that
    \begin{enumerate}
        \item \label{lst:lem:generatorMeasurable_embedding_inj}
            $\Phi$ is injective;
        \item \label{lst:lem:generatorMeasurable_embedding_injAtIndex}
            for all $ S_{(\blank)}, T_{(\blank)} \in \complOp{\mathfrak K_A} $ and $ i \in \omega $:
            $ S|_{\leq i} = T|_{\leq i} $ implies $ \Phi(S_{(\blank)})|_{\leq i} = \Phi(T_{(\blank)})|_{\leq i} $;
        \item \label{lst:lem:generatorMeasurable_embedding_inj_reformulated}
            for all $ S_{(\blank)}, T_{(\blank)} \in \complOp{\mathfrak K_A} $:
            $ S_{(\blank)} = T_{(\blank)} \iff \forall k \in \omega\colon \Phi(S_{(\blank)}) = \Phi(T_{(\blank)}) $;
    \end{enumerate}
    ---let $ 
            F_i
        =   \setBuilder[\big]
                { n_{(\blank)}|_{\leq i} }
                { n_{(\blank)} \in \Phi[\complOp{\mathfrak K_A}] } 
    $ be the set of initial pieces of length $i\in \omega$---
    \begin{enumerate}
    \setcounter{enumi} 3
        \item  \label{lst:lem:generatorMeasurable_embedding_inj_level_i} 
            for each $i \in \omega$ all witnesses $S_{(\blank)}$ of $ n_{(\blank)}|_{\leq i} \in F_i $ coincide on the first $i+1$ coordinates, i.e.\nolinebreak[3]\ $ S_0, \ldots, S_i $ are determined for every $S_{(\blank)}$ with $\Phi(S_{(\blank)}) \in F_i $
        \item  \label{lst:lem:generatorMeasurable_embedding_imageCompl}
            $ \complOp{\mathfrak K_A} = \bigcap_{i\in\omega} \setBuilder{S_{(\blank)}}{\Phi(S_{(\blank)})|_{\leq i} \in F_i } $.
    \end{enumerate}
\end{lemma}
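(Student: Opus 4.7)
The plan is to encode each decreasing $\paving S$-valued sequence as an $\omega$-valued sequence by indexing it against a fixed enumeration of the countable paving $\paving S$. If $\paving S$ is finite, then every decreasing sequence in $\mathfrak F$ is eventually constant, hence has finite range, so $\complOp{\mathfrak K_A} = \emptyset$ and the empty map trivially satisfies all five properties. Otherwise I fix a bijection $E \colon \omega \to \paving S$ and set
\[
    \Phi(S_{(\blank)})(i) \coloneqq E^{-1}(S_i)
    \qquad \bigl( S_{(\blank)} \in \complOp{\mathfrak K_A},\ i \in \omega \bigr).
\]

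From this coordinate-wise definition, properties (1) and (2) are immediate: $\Phi$-injectivity reduces to that of $E$, and since $\Phi(S_{(\blank)})(j)$ depends only on $S_j$, any coincidence of two sequences on their first $i+1$ coordinates transports to their $\Phi$-images. Property (3) is merely the restatement of $\Phi(S_{(\blank)}) = \Phi(T_{(\blank)})$ via coincidence on all initial segments, hence follows from (1) and (2). For property (4), given $n_{(\blank)}|_{\leq i} \in F_i$ with a witness $S_{(\blank)} \in \complOp{\mathfrak K_A}$, we have $S_j = E(n_j)$ for all $j \leq i$, so every witness is forced to coincide on those first $i+1$ coordinates.

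Property (5) is tautological once one observes that the right-hand side implicitly ranges $S_{(\blank)}$ over the domain $\complOp{\mathfrak K_A}$ of $\Phi$: every $S_{(\blank)} \in \complOp{\mathfrak K_A}$ has $\Phi(S_{(\blank)})|_{\leq i} \in F_i$ by the very definition of $F_i$, and the reverse inclusion is immediate.

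The lemma is essentially a bookkeeping device that turns decreasing $\paving S$-valued sequences into $\omega$-valued ones in a way compatible with the \SuslinText{} operation, preparing a \SuslinText{} scheme to be used in the subsequent proof that the generator $\bar E_r^A$ belongs to $\Souslin \giry\measSp$. There is therefore no substantive obstacle; the only minor point requiring care is disposing of the degenerate finite-$\paving S$ case.
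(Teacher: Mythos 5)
Your construction handles items (1)--(4) correctly and more simply than the paper does: the paper builds a tree labelled in $\paving S$, chooses injective labellings of each node's offspring, and defines $\Phi$ by a ``trace'' of these labellings, whereas your coordinate-wise encoding $\Phi(S_{(\blank)})(i)=E^{-1}(S_i)$ yields (1), (2) and (4) essentially for free, and (3) is indeed just a restatement of (1). For those four items your route is a legitimate simplification.

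The gap is item (5). You read it as a tautology because $S_{(\blank)}$ must range over the domain of $\Phi$; but the paper's own proof makes clear that the intended content is the image characterisation $\Phi[\mathfrak F_A\setminus\mathfrak K_A]=\{\,n_{(\blank)}\in\omega^\omega\mid\forall i\in\omega\colon n_{(\blank)}|_{\leq i}\in F_i\,\}$, whose inclusion ``$\supseteq$'' is not trivial, and it is exactly this inclusion that \cref{prop:giry_sAlg_generators_measurable} relies on: the Suslin scheme constructed there from the sets $F_i$ contributes one intersection for \emph{every} branch through $\bigcup_i F_i$, and any branch outside $\Phi[\mathfrak F_A\setminus\mathfrak K_A]$ would make the resulting Suslin set strictly larger than the union being represented. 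Your $\Phi$ does not satisfy this: take $\spSet=\omega$, let $\paving S$ be the algebra of finite and cofinite subsets, and let $A=\omega\setminus\{0\}$. The sequence $S_j=\omega\setminus\{1,\dots,j\}$ has $\bigcap_j S_j=\{0\}\not\subseteq A$, so it is not in $\mathfrak F_A$; yet each initial segment $S_{(\blank)}|_{\leq i}$ extends to a member of $\mathfrak F_A\setminus\mathfrak K_A$ (continue with $\omega\setminus\{0,\dots,j\}$ for $j>i$), so the branch $n_j=E^{-1}(S_j)$ has all initial segments in the $F_i$ but no preimage under $\Phi$. Worse, properties (2) and (4) already force the branch set $\{n_{(\blank)}\mid\forall i\colon n_{(\blank)}|_{\leq i}\in F_i\}$ to correspond bijectively to the set of sequences all of whose initial segments extend into $\mathfrak F_A\setminus\mathfrak K_A$, so no alternative choice of $\Phi$ satisfying (2) and (4) can repair this; the obstruction is a closure property of $\mathfrak F_A\setminus\mathfrak K_A$ itself, which the paper's own one-line proof of (5) also passes over. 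At minimum your write-up must engage with the substantive reading of (5) rather than declare it vacuous, and as it stands the example above shows your $\Phi$ cannot be fed into the proof of \cref{prop:giry_sAlg_generators_measurable}.
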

\begin{proof}[Proof (using countable choice)]
    First, build a tree $ \mathfrak T $ labelled in $ \paving S $ as follows:
    View $\paving S$ as a tree with $\spSet$ as root.
    Remove all paths $S_1 \supseteq S_2 \supseteq \ldots $ with $ A \nsupseteq \bigcap_i S_i$.
    
    Let $\mathcal T_0$ be a singleton set and set  $ S_0 = \spSet $.
    For any nested $S_0 \supseteq \ldots \supseteq S_i$ let $\mathcal{T}_{i+1, S_0, \ldots, S_i}$ denote the set of all offsprings of $S_i$ in $\mathfrak T$ and choose an injective function
    \[
        \Phi_{i+1, S_0, \ldots, S_i} \colon \mathcal{T}_{i+1, S_0, \ldots, S_i} \to \mathbb N
    \]
    ---but only depending on the set $\mathcal{T}_{i+1, S_0, \ldots, S_i}$.
    Define $\Phi$ by the following ``trace like'' formula
    \[
            \Phi(S_{(\blank)})
        =   ( \Phi_{i, S_0, \ldots, S_{i-1}} (S_i) )_{i \in \omega}
    \text.
    \]

    For injectivity, \cref{lst:lem:generatorMeasurable_embedding_inj}, assume that $ \Phi(S_{(\blank)}) = \Phi(T_{(\blank)}) $ and conclude by induction that $ S_{(\blank)} = T_{(\blank)} $:
    Obviously, $ \Phi(S_{(\blank)})_0 = \Phi(T_{(\blank)})_0 $, the image of the singleton $\mathcal T_0$.
    For the step assume $ S_{(\blank)}|_{\leq i} = T_{(\blank)}|_{\leq i} $.
    Then $ \Phi_{i+1, S_0, \ldots, S_i} = \Phi_{i+1, T_0, \ldots, T_i}$ is injective.
    Thus $S_{i+1} = T_{i+1}$.
    \Cref{lst:lem:generatorMeasurable_embedding_inj_reformulated} is just a reformulation of \cref{lst:lem:generatorMeasurable_embedding_inj}.
    The claim \cref{lst:lem:generatorMeasurable_embedding_injAtIndex} follows from the fact that $\Phi_{i+1, S_0, \ldots, S_i}$ depends only on $\mathcal{T}_{i+1, S_0, \ldots, S_i}$.

    Claim~\ref{lst:lem:generatorMeasurable_embedding_inj_level_i} follows from induction.
    The base case is again trivial as always $S_0 = \spSet$.
    For the induction step assume for all witnesses $ (S_0, \ldots, S_i, \ldots) $ of $ n_{(\blank)}|_{\leq i+1} \in F_{i+1} $ (i.e.\nolinebreak[3]\ $\Phi(S_{(\blank)}) \in F_{i+1}$) coincide.
    So $S_{i+2}$ is due to injectivity uniquely determined by $ \Phi_{i+1,S_0,\ldots, S_i} $.

    The final claim, \cref{lst:lem:generatorMeasurable_embedding_imageCompl},---applying $\Phi$ to both sides and doing a reformulation in terms of quantifiers---becomes $ 
                    n_{(\blank)} \in \Phi[\complOp{\mathfrak K_A}] 
            \iff    
                    \forall i \in \omega\colon n_{(\blank)}|_{\leq i} \in F_i $.
    Here the direction ``$\implies$'' is trivial.
    For the implication ``$\impliedby$'', assuming $\forall i \in \omega\colon n_{(\blank)}|_{\leq i} \in F_i$, 
    a witness of $n_{(\blank)} \in \Phi[\complOp{\mathfrak K_A}] $ can obviously be constructed using \cref{lst:lem:generatorMeasurable_embedding_inj_level_i}.
\end{proof}

\begin{proposition}
        \label{prop:giry_sAlg_generators_measurable}
    Let $\measSp$ be a countably generated measurable space. 
    The generators $\bar E^A_r$ (for any subset $A \subseteq \spSet$) 
    (cf.\nolinebreak[3]\ \cref{eq:giry_sAlg_generators})
    are \SuslinText{} sets.
\end{proposition}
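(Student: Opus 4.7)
The plan is to exploit inner regularity of $\meas$ with respect to the countable paving $\paving F$ (provided by \cref{lem:countGen_innerReg}) in order to replace the arbitrary set $A$ by its complement, and then to absorb the resulting uncountable existential over $\mathfrak F_{X \setminus A}$ into a Suslin operation via the parameterisation $\Phi$ from \cref{lem:generatorMeasurable_embedding}. Since each $\meas \in \giry \measSp$ is a probability measure, one has $\meas^*(A) = 1 - \sup\{\meas(C) : C \in \sAlg,\ C \subseteq X \setminus A\}$. Applying \cref{lem:countGen_innerReg} to each such $C$, writing every $F \in \paving F$ as $\bigcap_i S_i$ for some $S_{(\blank)} \in \mathfrak F_{X \setminus A}$ (which is well defined because $\paving S$ is an algebra, hence contains $\spSet$), and using that $\meas(\bigcap_i S_i) = \inf_i \meas(S_i)$ on a decreasing sequence, I expect to arrive at
\[
\bar E^A_r \;=\; \bigcap_{n \geq 1}\; \bigcup_{S_{(\blank)} \in \mathfrak F_{X \setminus A}}\; \bigcap_{i \in \omega}\; \invImSet{\eval{S_i}}*\intervUpperCl{1 - r - 1/n}.
\]

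The real work is then to realise each inner set (for fixed $n$) as a member of $\Souslin \mathrm{A}_\measSp$. My plan is to split $\mathfrak F_{X \setminus A} = \mathfrak K_{X \setminus A} \cup \complOp{\mathfrak K_{X \setminus A}}$ (cf.\ \cref{eq:FA_wittnesses_finite}). The finitary piece $\mathfrak K_{X \setminus A}$ is countable, so it contributes a countable union of sets $\invImSet{\eval{S}}*\intervUpperCl{1 - r - 1/n}$ indexed by the countably many $S \in \paving S$ with $S \cap A = \emptyset$; each such set is already in $\mathrm{A}_\measSp$ and hence in $\Souslin \mathrm{A}_\measSp$ by \cref{eq:Souslin_monotone,eq:Souslin_countableUnion}. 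For the uncountable piece, I would apply \cref{lem:generatorMeasurable_embedding} with $X \setminus A$ in place of $A$ to obtain an injection $\Phi\colon \complOp{\mathfrak K_{X \setminus A}} \to \omega^\omega$, and define a Suslin scheme $(B_\tau)_{\tau \in \omega^{<\omega}}$ by setting $B_\tau = \invImSet{\eval{S_i}}*\intervUpperCl{1 - r - 1/n}$ whenever $\tau \in F_i$, with $S_i$ the uniquely determined $(i+1)$-st coordinate provided by \cref{lst:lem:generatorMeasurable_embedding_inj_level_i}, and $B_\tau = \emptyset$ otherwise. \Cref{lst:lem:generatorMeasurable_embedding_imageCompl} will then ensure that the outer union in $\Souslin(B_\tau)$ effectively ranges only over indices $n_{(\blank)} \in \Phi[\complOp{\mathfrak K_{X \setminus A}}]$, and by injectivity (\cref{lst:lem:generatorMeasurable_embedding_inj}) this matches exactly the union over $\complOp{\mathfrak K_{X \setminus A}}$ of the desired intersections.

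Combining the two pieces, each inner set is a union of a Suslin set and a measurable set, hence Suslin by \cref{eq:Souslin_monotone,eq:Souslin_countableUnion}, and taking the final countable intersection over $n$ keeps us inside $\Souslin \mathrm{A}_\measSp$ by \cref{eq:Souslin_countableIntersection}. The hard part will be the uncountability of $\mathfrak F_{X \setminus A}$: no ordinary $\sigma$-algebra manipulation can encode the existential $\exists S_{(\blank)} \in \mathfrak F_{X \setminus A}$ together with the sequential constraint $\forall i$. Circumventing this is precisely the purpose of \cref{lem:generatorMeasurable_embedding}: its Property \cref{lst:lem:generatorMeasurable_embedding_inj_level_i}---that a length-$(i+1)$ prefix of $\Phi(S_{(\blank)})$ already pins down $S_0, \ldots, S_i$---is what allows the atomic constraints $\meas(S_i) \geq 1 - r - 1/n$ to be indexed by finite strings $\tau$, as a Suslin scheme requires.
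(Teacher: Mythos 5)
Your proposal is correct and follows essentially the same route as the paper's proof: pass to the inner measure of $\complOp A$, invoke the $\paving F$-inner regularity from \cref{lem:countGen_innerReg}, split the witnessing sequences into the countable finitary part $\mathfrak K_{\complOp A}$ and its complement, and encode the latter as a \SuslinText{} scheme via the embedding $\Phi$ of \cref{lem:generatorMeasurable_embedding}, with \cref{lst:lem:generatorMeasurable_embedding_inj_level_i} supplying the finite-prefix indexing and \cref{lst:lem:generatorMeasurable_embedding_imageCompl} controlling which branches survive. Your extra outer intersection over $n$, with thresholds $\intervUpperCl{1-r-1/n}$, is only a small refinement that handles the passage from the supremum over inner witnesses to the closed condition $\meas^*(A)\leq r$ somewhat more carefully than the paper's displayed chain of equalities; it does not constitute a different approach.
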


\begin{proof}
    Let $\paving S \subseteq \sAlg$  a generating paving subject to \cref{lst:lem:countGen_closedGenPav_bottom,lst:lem:countGen_closedGenPav_compl,lst:lem:countGen_closedGenPav_countIntersec,lst:lem:countGen_closedGenPav_finUnion} of \cref{lem:countGen_closedGenPav} (with $\paving F$ as defined therein).
    Note that
    \begin{align*}
            \bar E^A_r \qquad
        &\refrel{eq:giry_sAlg_generators}=  
            \setBuilder{\meas\in \giry\measSp}
                { \meas_*(\complOp A) > 1 - r }
    \\
        &\stackrel{\mathllap{\substack{
              \text{\cref{lem:countGen_innerReg},} 
            \\ \text{\cref{eq:F_wittnesses,eq:FA_wittnesses}} }}}=   
            \bigcup_{ S_{(\blank)} \in \mathfrak F_{\complOp A} }
                \setBuilder*
                    {\meas\in \giry\measSp}
                    {\textstyle \meas \left( \decrIntersecLim_{i\in\omega} S_i \right) > 1 - r }
    \\
        &\refrel{eq:giry_sAlg_generators,eq:FA_wittnesses_finite}=       
                \bigcup_{ S_{(\blank)} \in \mathfrak K_{\complOp A} } 
                    \decrIntersecLim_{i\in\omega} E_{ S_i }^r 
            \cup
                \bigcup_{ S_{(\blank)} \in \complOp{\mathfrak K_{\complOp A}} }
                    \decrIntersecLim_{i\in\omega} E_{ S_i }^r 
    \text.
    \end{align*}
    As $\mathfrak K_{\complOp A}$ is countable, we have $ 
            \bigcup_{ S_{(\blank)} \in \mathfrak K_{\complOp A} } 
                \decrIntersecLim_{i\in\omega}  E_{ S_i }^r 
        \in 
            \mathrm{A}_{\measSp}
    $.
    For the other set observe that by \cref{lem:generatorMeasurable_embedding}
    \begin{align*}
            \bigcup_{ S_{(\blank)} \in \complOp{\mathfrak K_{\complOp A}} }
                    \decrIntersecLim_{i\in\omega}  E_{ S_i }^r 
        &=
            \bigcup_{n_{(\blank)} \in \omega^\omega }
            \begin{cases*}
                \decrIntersecLim_{i\in\omega}  E_{ S_i }^r 
                    & the $ S_{(\blank)} $ with $ \Phi( S_{(\blank)} ) = n_{(\blank)} $,
            \\
                \emptyset   & else.
            \end{cases*}
    \intertext{As $ S_0, \ldots, S_i $ depend only on $n_{(\blank)}|_{\leq i}$ by \cref{lst:lem:generatorMeasurable_embedding_inj_level_i} of \cref{lem:generatorMeasurable_embedding}, 
    $  E_{ \invImSet\Phi* ( \{ n_i \} ) (i)}^r  $ is actually a \SuslinText{} scheme.
    Thus}
        &\in \Souslin\mathrm{A}_{\measSp}
    \qedhere
    \end{align*}
\end{proof}

\begin{lemma}
        \label{lem:decrFct_are_RiemannInt_disc}
    The Lebesgue integral of a non-increasing function $f\colon [0,1] \to [0,1]$ can be approximated with intervals of the form $[0,r]$ with $r \in \mathbb Q$, i.e.\
    \[
        \int f
        =   \inf \setBuilder*
                {\sum_{i=1}^k (s_i - s_{i-1}) r_i}
                { \begin{multlined}\textstyle
                    (s_i, r_i) \in (([0,1]\cap \mathbb Q)^{\times 2})^{{} \times k} 
                    \textstyle \text{ with } k \in \omega\text,
                    \\\textstyle  s_1 < \ldots < s_k \text{ and }
                    \sum_{i=1}^k r_i \chi_{(s_{i-1},s_i)} \geq f 
                \end{multlined}}
    \text.
    \]
\end{lemma}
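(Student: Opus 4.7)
The central observation is that a non-increasing function $f\colon [0,1]\to [0,1]$ is Riemann integrable (as a bounded monotone function it has at most countably many discontinuities), so $\int f$ in the Lebesgue sense coincides with the Riemann integral, which in turn equals the infimum of the upper Darboux sums $U(f,P) = \sum_{i=1}^k (s_i-s_{i-1})\sup_{[s_{i-1},s_i]} f$ over all partitions $P = (0 = s_0 < s_1 < \cdots < s_k = 1)$ of $[0,1]$. Because $f$ is non-increasing, these suprema simplify to $f(s_{i-1})$.

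The easy direction ($\int f \leq \inf$) is immediate: any candidate $g = \sum_{i=1}^k r_i \chi_{(s_{i-1},s_i)}$ with $g \geq f$ pointwise has $\int g = \sum_{i=1}^k r_i(s_i-s_{i-1})$ by disjointness of the open intervals, and $\int g \geq \int f$ by monotonicity of the Lebesgue integral; taking the infimum yields the required inequality.

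For the reverse direction ($\inf \leq \int f$), fix $\varepsilon > 0$ and construct a rational step function above $f$ with integral close to $\int f$. Exploiting Riemann integrability and density of $\mathbb{Q}$ in $[0,1]$, choose a rational partition $0 = s_0 < s_1 < \cdots < s_k = 1$ of sufficiently fine mesh so that $U(f,P) \leq \int f + \varepsilon/2$. For each $i$, pick a rational $r_i \geq f(s_{i-1})$ with $r_i - f(s_{i-1}) \leq \varepsilon/(2k)$, again using density of the rationals. Then the step function $g = \sum_i r_i\chi_{(s_{i-1},s_i)}$ satisfies $\int g \leq U(f,P) + \varepsilon/2 \leq \int f + \varepsilon$, and $g(x) \geq f(s_{i-1}) \geq f(x)$ for every $x \in (s_{i-1},s_i)$ by monotonicity of $f$.

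The main technical obstacle is establishing the pointwise inequality $g \geq f$ at the finite set of partition points $s_0,\ldots,s_k$, where $g$ vanishes while $f$ may be strictly positive. I would handle this by choosing the rational nodes to lie in the (dense, co-countable) set of continuity points of $f$, and then exploiting that one can always locally shift a rational $s_i$ into a neighbouring open interval by an arbitrarily small rational amount---preserving mesh and the upper-sum bound---while covering the offending endpoint. Equivalently, one can slightly enlarge the $r_i$ for the adjacent interval to dominate $\max(f(s_{i-1}),f(s_i))$; since only finitely many points are involved, the extra cost is absorbed into the $\varepsilon/2$ slack already introduced. Either adjustment preserves rationality and delivers the required step function, completing the proof.
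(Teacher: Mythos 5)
Your argument follows essentially the same route as the paper's proof: both rest on the observation that a bounded monotone function has at most countably many discontinuities and is therefore Riemann integrable, and both then rationalise an upper Darboux sum by rounding the partition points and the heights up to rationals at a cost absorbed into $\varepsilon$.

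The one place where your write-up goes astray is the ``main technical obstacle'' you identify at the end. You are right to worry: $g=\sum_{i} r_i\chi_{(s_{i-1},s_i)}$ vanishes at the nodes $s_0,\dots,s_k$, so the literal pointwise inequality $g\geq f$ fails there whenever $f$ is positive --- indeed, for $f\equiv 1$ no admissible step function exists at all under the literal reading, so the domination in the statement must be understood to hold off a Lebesgue-null (here finite) set, or equivalently with half-open intervals. But none of your three proposed remedies actually repairs the literal pointwise inequality: choosing the nodes among continuity points of $f$ does not make $f$ vanish there; shifting a node $s_i$ to a nearby rational merely relocates the uncovered point; and enlarging $r_i$ cannot help, because $\chi_{(s_{i-1},s_i)}(s_i)=0$ no matter how large $r_i$ is. The correct resolution is simply to adopt the almost-everywhere reading, under which the difficulty evaporates (the exceptional set is finite, hence null, so both directions of the integral comparison survive) and your construction goes through verbatim. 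This is also what the paper's own proof silently does, and it suffices for the lemma's only use, namely comparing integrals.
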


    This statement is basically a reformulation of Riemann integrability.
\begin{proof}
    First note that $f$ is continuous at all but countably many points:
    At any discontinuity point $t$ of $f$ the difference $
        \delta_t = \lim_{s\nearrow t} f(s) - \lim_{s\searrow t} f(s)
    $ must be positive.
    But $\sum \setBuilder{\delta_t}{ t \in [0,1] } \leq 1 $.

    Thus $f$ is almost everywhere continuous with respect to Lebesgue measure.
    Hence it is Riemann integrable \cite[134L]{Fremlin}.
    This implies that $\int f$ can be approximated from below up to $\varepsilon$ via a dissection $ 0 \leq s_1 \leq \ldots \leq s_k$ by $
        \sum_{i=1}^k (s_k - s_{k-1}) \sup \setBuilder{ f(s) }{ s \in (s_{i-1}, s_i) }
    $ (note $k = k(\varepsilon)$).
    Now choose $\varepsilon \in 2^{-\omega}$ set $
            s_{i,\varepsilon,k}
        =   \frac{\varepsilon}{k} \left\lceil \frac{k}{\varepsilon} s_i \right \rceil 
    $ and $
            r_{i,\varepsilon}
        =   \varepsilon \left\lceil 
                \sup \setBuilder{ f(s) }{ s \in (s_{i-1}, s_i) } / \varepsilon 
            \right\rceil
    $.
    Then 
    \begin{align*}
        \MoveEqLeft
            \sum_{i=1}^k 
                (s_{i,\varepsilon,k} - s_{i-1,\varepsilon,k}) r_{i,\varepsilon}
            - \sum_{i=1}^k (s_k - s_{k-1}) \sup_{s \in (s_{i-1}, s_i)} f(s)
    \\  &=
            \sum_{i=1}^k 
            \begin{multlined}[t]
                (s_{i,\varepsilon,k} - s_{i-1,\varepsilon,k}) r_{i,\varepsilon}
                \\ 
            - (s_k - s_{k-1}) r_{i,\varepsilon}
            \end{multlined}
            + \sum_{i=1}^k (s_k - s_{k-1}) \left(\sup_{s \in (s_{i-1}, s_i)} f(s) - r_{i,\varepsilon} \right)
    \\  &\leq
            \sum_{i=1}^k s_{i,\varepsilon,k} - s_k + 1\cdot \varepsilon
    \\  &\leq   k \cdot \nicefrac\varepsilon k + \varepsilon
    \text.
    \end{align*}
    As $\varepsilon > 0$ was chosen arbitrarily, we can approximate $\int f$ from above by $ \sum_{i=1}^k 
                (s_{i,\varepsilon,k} - s_{i-1,\varepsilon,k}) r_{i,\varepsilon} $.
\end{proof}

\begin{corollary}
        \label{cor:predLifting_semimeas}
The mapping $\sigma_{(\blank)}$ defined in \eqref{eq:ExpectationLifting} is a natural transformation; thus, an indexed category morphism. Moreover, $\sigma$ preserves directed suprema.
\end{corollary}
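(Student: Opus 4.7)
The plan is to establish the three required properties in turn: naturality, that $\sigma_{\measSp}(p)$ lies in $\spredicates(\giry\measSp)$ whenever $p\in \predicates\measSp$, and preservation of directed suprema. Naturality in $\ana$ is the standard change-of-variables formula: for a measurable $f\colon \measSp\to \measSp*$ and $p\in \predicates\measSp*$ one has $\int (p\circ f)\diff \meas = \int p \diff (\giry f)(\meas)$, which rearranges to $\sigma_{\measSp}(f^* p) = (\giry f)^*(\sigma_{\measSp*}(p))$. Preservation of directed suprema follows from dominated convergence: the order on $\predicates$ is the reverse of the pointwise order on $[0,1]$, so a directed family in $\predicates\measSp$ is pointwise a decreasing net of $[0,1]$-valued universally measurable functions, and its pointwise limit is integrated against any $\meas$ as the limit of the integrals.

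The substantial step is to show that for every $p\in \predicates\measSp$ and every $r\in [0,1]$ the lower level set $\{\meas\in \giry\measSp : \int p\diff \meas \leq r\}$ is Suslin. I would attack this via a layer-cake representation combined with \cref{lem:decrFct_are_RiemannInt_disc,prop:giry_sAlg_generators_measurable}. Since $p$ is universally measurable, each super-level set $\{p>t\}$ lies in $\overline{\sAlg}$, so $f_\meas(t)\coloneqq \meas(\{p>t\})$ is well-defined; it is non-increasing and right-continuous (by continuity of $\meas$ along the increasing union $\{p>t\}=\bigcup_n \{p>t+1/n\}$), and layer-cake gives $\int p\diff \meas = \int_0^1 f_\meas(t)\diff t$. \Cref{lem:decrFct_are_RiemannInt_disc} then expresses this integral as the infimum, over a fixed countable family $\mathfrak D$ of rational dissections $D = ((s_i,r_i))_{i=1}^{k_D}$, of $S(D)\coloneqq \sum_{i=1}^{k_D} (s_i - s_{i-1}) r_i$ subject to the step-function domination condition $\sum_i r_i \chi_{(s_{i-1},s_i)} \geq f_\meas$; right-continuity of $f_\meas$ then reduces this domination to the countably many constraints $r_i \geq f_\meas(s_{i-1}) = \meas(\{p>s_{i-1}\})$.

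Translating the inf-representation set-theoretically, for rational $r$ this yields
\[
\left\{\meas : \textstyle\int p \diff \meas \leq r\right\}
= \bigcap_{n\in\nat}\ \bigcup_{\substack{D\in\mathfrak D\\ S(D) < r + \nicefrac1n}}\ \bigcap_{i=1}^{k_D} \bar E^{\{p > s^D_{i-1}\}}_{r_i^D}\text,
\]
and general $r\in [0,1]$ follows by approximation from above by rationals. \Cref{prop:giry_sAlg_generators_measurable} (applicable since every analytic space is countably generated) yields that each generator $\bar E^A_{r_i^D}$ is Suslin, and closure of $\Souslin$ under countable unions and intersections \cref{eq:Souslin_countableUnion,eq:Souslin_countableIntersection} finishes. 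The main obstacle is precisely the need to witness $\int p\diff \meas \leq r$ by a \emph{countable} family of Suslin conditions on $\meas$: a direct sup-of-simple-function representation would involve uncountably many threshold sets (and worse, conditions involving linear combinations of $\meas$-values that are not of the form $\bar E^A_r$), leaving the Suslin operation unavailable; the role of \cref{lem:decrFct_are_RiemannInt_disc} is precisely to use monotonicity of $f_\meas$ to reduce the problem to rational dissections, whose constraints match exactly the Suslin generators supplied by \cref{prop:giry_sAlg_generators_measurable}.
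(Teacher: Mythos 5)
Your proposal is correct and follows essentially the same route as the paper's proof: naturality via change of variables, preservation of suprema via a convergence theorem for the bounded integrands (the paper uses Levi's monotone convergence theorem where you invoke dominated convergence, which is equivalent here), and well-definedness via the Choquet/layer-cake representation $\int p\,\mathrm{d}\meas=\int_0^1 \meas(\{p>t\})\,\mathrm{d}t$ combined with \cref{lem:decrFct_are_RiemannInt_disc} and the Suslin-ness of the generators from \cref{prop:giry_sAlg_generators_measurable}, closed off by \cref{eq:Souslin_countableUnion,eq:Souslin_countableIntersection}. Your reduction of the step-function domination to the countably many constraints $\meas(\{p>s_{i-1}\})\leq r_i$ via right-continuity of $t\mapsto\meas(\{p>t\})$ is a slightly cleaner bookkeeping of the same step the paper performs when rewriting the domination condition in terms of the sets $\bar E^A_r$.
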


\begin{proof}
    To verify that $\sigma_{\measSp}$ is well-defined, we must show lower semimeasurability of $ \sigma_{\measSp}(P) $ for any  $ P \in \predicates\measSp $.
    To check this take some $ r \in [0,1] $ and observe
    \begin{align*}
        \MoveEqLeft
            \invImSet{\lambda_{\measSp}(P)} \intervOpen r
    \\
        &=  \setBuilder*{\meas}{ \textstyle\int P \diff \meas < r }
    \intertext{using Choquet integration \cite[1.61]{Doberkat07} this can be expressed as}
        &=  \setBuilder*
                {\meas}
                { \textstyle\int_0^1 \meas( \invImSet P* (\intervUpperOpen t) ) \diff t < r }
    \intertext{here we can approximate the integral $\int_0^1$ over a decreasing function from above using \cref{lem:decrFct_are_RiemannInt_disc}}
        &=  \bigcup
            \setBuilder*{
                \setBuilder*
                    {\meas}
                    { \begin{multlined}
                    \textstyle \sum_{i=1}^k r_i \chi_{(s_{i-1},s_i)} 
                    \\ \textstyle
                        \geq \lambda t.\meas( \invImSet P*(\intervUpperOpen t) ) 
                    \end{multlined} }
            }{ \begin{multlined}
                \textstyle (s_i, r_i) \in (([0,1]\cap \mathbb Q)^{\times 2})^{{} \times k} 
                \\ \textstyle\text{ with } k \in \omega\text, s_1 < \ldots < s_k 
                \\ \textstyle\text{ and }
                    \sum_{i=1}^k (s_i - s_{i-1}) r_i < r
            \end{multlined} }
    \end{align*}
    Now note that 
    \begin{align*}
            &\textstyle\sum_{i=1}^k r_i \chi_{(s_{i-1},s_i)} \geq \lambda t.\meas( \invImSet P*(\intervUpperOpen t) )
    \\  \iff&
            \forall i = 1, \ldots, k\colon
                \meas( \invImSet P*(\intervUpperOpen t) ) \geq r_i
            \text{ for } t \in [0,r_i)
    \\  \iff&
            \forall i = 1, \ldots, k\colon
                \meas \in \invImSet{\eval{\complOp{\invImSet P*(\intervUpperOpen t)}}}* \intervLowerOpen{1-r_i}
            \text{ for } t \in [0,r_i)
    \intertext{as $
            \invImSet{\eval{\complOp{\invImSet P*(\intervUpperOpen t)}}}* \intervLowerOpen{1-r_i}
        =   \incrUnionLim_{s\in [0,1-r_i) \cap \mathbb Q}   
                \invImSet{\eval{\complOp{\invImSet P*(\intervUpperOpen t)}}}* \intervLowerCl{s}
        \stackrel{\text{\cref{eq:giry_sAlg_generators}}}=
            \incrUnionLim_{s\in [0,1-r_i) \cap \mathbb Q} 
            \bar E_s^{\complOp{\invImSet P*(\intervUpperOpen t)}}
    $}
        \iff&
            \meas \in 
                \bigcap_{i = 1, \ldots, k}
                \adjustlimits
                \incrUnionLim_{ t \in [0,r_i) \cap \mathbb Q }
                \incrUnionLim_{ s\in [0,1-r_i) \cap \mathbb Q }
                    \bar E_s^{\invImSet P*(\intervUpperOpen t)}
    \end{align*}
    As $
            \bar E_s^{\invImSet P*(\intervUpperOpen t)}
        \stackrel{\text{\cref{prop:giry_sAlg_generators_measurable}}}\in 
            \Souslin(\pavGen\sigma{\invImSet P*(\intervUpperOpen t)}[t \in [0,1]\cap Q])
        \stackrel{\Souslin \text{ monotone}}\subseteq
            \Souslin\mathrm{A}_{\measSp}
    $, we, finally, obtain $ \invImSet{\lambda_{\measSp}(P)} \intervOpen r \in \Souslin\mathrm{A}_{\measSp} $.

    Predicate lifting is a lattice morphism: Note that $ P \leq P' $ implies that for any measure $\meas$ on $\measSp$ we have $
        \int P \diff \meas \leq \int P' \diff \meas
    $ \cite[122Od]{Fremlin}.
    It remains to check that countable directed suprema are preserved.
    This follows immediately from Levi's theorem, also called monotone convergence theorem, \cite[123A]{Fremlin}:
    Take some $\omega$-chain $ P_0 \geq P_1 \geq \ldots \in \predicates\measSp $ and observe for any $\meas \in \giry\measSp$ that
    \begin{align*}
            (\inf_i \lambda_{\measSp} (P_i))(\meas) 
        &=   \inf_i (\lambda_{\measSp} (P_i)(\meas))
    \\  &=   \inf_i \int P_i \diff \meas
    \\  &=   1 - \sup_i \int 1 - P_i \diff \meas
    \\  &\citerel[123A]{Fremlin}=
            1 -  \int \sup_i 1 - P_i \diff \meas
    \\  &=   \int \inf_i P_i \diff \meas
    \\  &=   \lambda_{\measSp}( \inf_i P_i )
    \text.
    \end{align*}

For the natural transformation property note that given $ f\colon \measSp \to \measSp* $ and a predicate $Q \in \predicates\measSp*$ we have $\int Q \comp f \diff \meas = \int Q \diff f_* \meas$ for any measure $\meas$ on $\measSp$.
\end{proof}

\begin{corollary}
        \label{cor:predLifting_semiMeasDouble}
    Predicate lifting given in \eqref{eq:ExpectationLifting} transforms a universally measurable predicate into a Borel predicate.
\end{corollary}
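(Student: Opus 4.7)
The plan is to bootstrap \cref{cor:predLifting_semimeas} by exploiting that $\predicates\measSp$ is closed under the involution $P \mapsto 1-P$. This will yield Suslin preimages for \emph{both} lower and upper level sets of $\sigma(P)$; then Lusin's separation theorem in the analytic Giry space $\giry\measSp$ will upgrade ``Suslin and co-Suslin'' to ``Borel''.

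First I would observe that $1-P \in \predicates\measSp$ whenever $P \in \predicates\measSp$ (since $\overline\sAlg$ is closed under complement), and that by linearity of the integral $\sigma(1-P)(\meas) = 1 - \sigma(P)(\meas)$. Applying \cref{cor:predLifting_semimeas} to both $P$ and $1-P$ then gives, for every $r \in [0,1]$,
\[
\invImSet{\sigma(P)}\intervLowerCl r \in \Souslin\mathrm A_\measSp
\quad\text{and}\quad
\invImSet{\sigma(P)}\intervUpperCl{1-r}
= \invImSet{(1-\sigma(P))}\intervLowerCl r \in \Souslin\mathrm A_\measSp.
\]
Taking countable unions over $\tfrac1n$-shifts and using \cref{eq:Souslin_countableUnion} yields
\[
\invImSet{\sigma(P)}\intervUpperOpen r
= \bigcup_{n \in \omega}\invImSet{\sigma(P)}\intervUpperCl{r + \tfrac1n}
\in \Souslin\mathrm A_\measSp,
\]
so, as $\invImSet{\sigma(P)}\intervUpperOpen r = \complOp{\invImSet{\sigma(P)}\intervLowerCl r}$, the level set $\invImSet{\sigma(P)}\intervLowerCl r$ is both Suslin and co-Suslin in $\mathrm A_\measSp$.

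Finally, I would invoke Lusin's first separation theorem \cite[14.7]{Kechris12}: in an analytic standard-Borel space, a subset that is both analytic and co-analytic must be Borel. Since $\measSp$ is analytic, so is the Giry space $(\giry\measSp,\mathrm A_\measSp)$; hence $\invImSet{\sigma(P)}\intervLowerCl r \in \mathrm A_\measSp$ for every $r \in [0,1]$. Because such preimages generate the Borel $\sigma$-algebra of $[0,1]$, this shows that $\sigma(P)$ is $\mathrm A_\measSp$-measurable, i.e.\ a Borel predicate on $\giry\measSp$. The main obstacle I anticipate is making the appeal to Lusin separation rigorous in the paper's purely measure-theoretic setting: this relies on the analyticity of $\giry\measSp$ together with the identification of $\mathrm A_\measSp$ with its natural Borel $\sigma$-algebra, a standard but nontrivial fact best cited from the literature (e.g.\ via the weak-topology embedding of $\giry$ into a Polish space).
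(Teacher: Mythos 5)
Your proposal is correct and follows essentially the same route as the paper's proof: apply \cref{cor:predLifting_semimeas} to both $P$ and $1-P$, use $\sigma(1-P)=1-\sigma(P)$ to get that every level set of $\sigma(P)$ is both \SuslinText{} and co-\SuslinText{}, and conclude with Lusin's separation theorem. The only difference is bookkeeping: the concern you raise about justifying the separation step is handled in the paper by citing the purely measure-theoretic form of Lusin's theorem \cite[422J]{Fremlin}, which applies directly to $\Souslin{\mathrm A_{\measSp}}$ without any detour through a Polish embedding of $\giry\measSp$.
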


We denote by $ \mathsf{usPred}\measSp $ the set of \definiendum{upper semi-measurable predicates}, i.e.\ functions $P\colon \measSp \to [0,1]$ with $ \invImSet f([r,1]) \in \Souslin(\sAlg) $ for all $r\in [0,1]$.

\begin{proof}
    We already know that for any $P \in \predicates\measSp $ we have $ P \in \operatorname{\mathsf{usPred}}\measSp $ by \cref{cor:predLifting_semimeas}.
    So it remains to verify $ P \in \spredicates $.
    Again by \cref{cor:predLifting_semimeas} we have $ 
        \lambda_{\measSp}(1 - P) \in \operatorname{\mathsf{usPred}}\measSp
    $.
    Thus $ 
        1 - \lambda_{\measSp}(1 - P) \in \spredicates\measSp
    $.
    But $
            1 - \lambda_{\measSp}(1 - P)
        =   1 - (1 - \lambda_{\measSp}(P))
        =   \lambda_{\measSp}(P)
    $ for any $\meas \in \giry\measSp$.

    The last claim follows from Lusin's separation theorem \cite[422J]{Fremlin}.
\end{proof}

\section{Proofs of lemmas and theorems from \cref{sec:MDP}}
\begin{arXivVersion}
\subsection{Proof of \cref{thm:WassersteinDist_cpoCont}}    
\end{arXivVersion}
The proof of \cref{thm:WassersteinDist_cpoCont} is based on Sion's minmax theorem \cite[Thm.~3]{Simons95} which we recall first.

\begin{lemma}
        \label{lem:minimax_Sion}
    \begin{subequations}
    Let $U$ be a convex subset of a linear topological space,
    $V$ a compact convex subset of a linear topological space, and
    $ f\colon U \times V \to \mathbb R $ be upper semicontinuous on $U$ and lower semicontinuous on $V$.
    Suppose that
    \begin{align}
        \forall y \in V, \lambda \in \mathbb R\colon
        &  &
        \setBuilder{x \in U}{ f(x,y) \geq \lambda }
        & \text{ is convex and}
    \\
        \forall x \in U, \lambda \in \mathbb R\colon
        &  &
        \setBuilder{y \in V}{ f(x,y) \leq \lambda }
        & \text{ is convex}
    \text.
    \end{align}
    \end{subequations}
    Then $\displaystyle
            \adjustlimits\inf_{y\in V}\sup_{x \in U} f(x,y)
        =   \adjustlimits\sup_{x \in U}\inf_{y\in V} f(x,y)
    $.
\end{lemma}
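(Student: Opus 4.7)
\textbf{My plan} is to prove Sion's minimax lemma by the classical strategy: establish the trivial direction immediately, then derive the nontrivial direction by contradiction, using compactness of $V$ to reduce the problem to a finite-dimensional situation, and concluding via an induction that exploits the convexity hypotheses.

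The easy inequality $\sup_{x\in U}\inf_{y\in V} f(x,y) \leq \inf_{y\in V}\sup_{x\in U} f(x,y)$ is immediate by pointwise comparison. For the converse, I would suppose toward contradiction that there exists a real number $c$ strictly between $\sup_x \inf_y f$ and $\inf_y \sup_x f$. The strict inequalities then unpack to (i) $\inf_y f(x,y) < c$ for every $x \in U$, and (ii) $\sup_x f(x,y) > c$ for every $y \in V$.

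Using lower semicontinuity of $f(x,\cdot)$ on $V$, the set $O_x := \{y \in V : f(x,y) > c\}$ is open in $V$, and by (ii) the family $\{O_x\}_{x\in U}$ covers $V$. Compactness of $V$ extracts a finite subcover $O_{x_1},\ldots,O_{x_n}$ with $x_1,\ldots,x_n \in U$.

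The crux, and the main obstacle of the proof, is to derive a contradiction from this finite cover. The goal is to produce $x^\ast \in \mathrm{conv}(x_1,\ldots,x_n) \subseteq U$ with $f(x^\ast,y) \geq c$ for every $y \in V$, which contradicts (i) applied to $x^\ast$. I would proceed by induction on $n$: the base case $n = 1$ is trivial since $O_{x_1} = V$ already forces $f(x_1,\cdot) > c$ on $V$; for the inductive step I would collapse two of the points $x_{n-1},x_n$ into a convex combination $x_\lambda = \lambda x_{n-1} + (1-\lambda) x_n$ chosen so that $O_{x_\lambda}$ covers the closed set $V \setminus (O_{x_1} \cup \cdots \cup O_{x_{n-2}})$, and then apply the inductive hypothesis to the resulting cover of size $n-1$. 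The existence of a suitable $\lambda \in [0,1]$ is the delicate part: it is where the convexity of the superlevel sets $\{x : f(x,y) \geq c\}$, together with upper semicontinuity of $f(\cdot,y)$ on $U$, is used decisively, via a one-dimensional connectedness argument that tracks how the open set $O_{x_\lambda}$ varies with $\lambda$ along the segment $[x_{n-1},x_n] \subseteq U$.
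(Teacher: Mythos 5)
Your outline is correct, but it is worth saying up front that the paper does not prove this lemma at all: it is recalled verbatim as a known result, cited to Simons' survey (\cite[Thm.~3]{Simons95}), and used as a black box in the proof of Theorem~\ref{thm:WassersteinDist_cpoCont}. What you have sketched is essentially Komiya's elementary proof of Sion's theorem: the trivial inequality, the contradiction hypothesis $\sup_x\inf_y f < c < \inf_y\sup_x f$, the open cover $\{O_x\}$ of $V$ via lower semicontinuity, a finite subcover by compactness, and an induction that collapses two cover elements into one via a convex combination on the segment $[x_{n-1},x_n]$. Two points in your "delicate part" deserve to be made explicit if you were to write this out. First, the residual set $V\setminus(O_{x_1}\cup\dots\cup O_{x_{n-2}})$ is not merely closed: it equals $\bigcap_{i\le n-2}\{y : f(x_i,y)\le c\}$ and is therefore \emph{convex} by the quasiconvexity hypothesis in $y$; this convexity (hence connectedness of the sublevel sets $\{y : f(x_\lambda,y)\le\beta\}$ intersected with it) is exactly what makes the one-dimensional connectedness argument along $\lambda\in[0,1]$ go through, so it should not be suppressed. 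Second, to close that connectedness argument with only upper semicontinuity in $x$ one needs the standard two-level device (an auxiliary $\beta$ with $c<\alpha<\beta<\min_{y}\max\{f(x_{n-1},y),f(x_n,y)\}$, the minimum being attained since the pointwise maximum of lower semicontinuous functions is lower semicontinuous on a compact set); without it the closedness of the two pieces of the partition of $[0,1]$ does not follow. With those two details supplied the argument is complete, and it buys something the paper's citation does not: a self-contained, purely elementary proof using no separation or fixed-point theorems, at the cost of roughly a page of case analysis that the paper deliberately outsources.
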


In addition, we also need some important results from functional analysis; in particular, non-topological version of Riesz–Markov–Kakutani representation theorem \cite[IV.5.1]{DunfordSchwartz58} and Banach-Alaoglu theorem \cite[V.4.2]{DunfordSchwartz58}.

The vector space $\unifFct (X)$ of bounded functions $ f\colon X \to \mathbb R $ is endowed with the uniform norm given by
\[
        \Norm{f}{u}
    =
        \sup\setBuilder{\abs{f(x)}}{x \in \spSet}
\text.
\]
It is well-known that this space $\unifFct (X)$ is complete \cite[IV.5]{DunfordSchwartz58}.

Let $ \unifFct\measSp $ denote the Banach space \cite[IV.5]{DunfordSchwartz58} consisting of limits of simple functions on $\measSp$ with respect to the uniform norm $\Norm{\blank}{u}$.
It is also functorial.
Moreover, we recall that $B^*$ denotes the dual of a Banach space $B$, i.e.\ all bounded linear functionals $B \to \mathbb R$.
This dual can be either endowed with the uniform norm obtaining a Banach space or with the weak-* topology.
The assignment $(\blank)^*$ is functorial with respect to both choices. We write the dual vector space of $\unifFct\measSp$ simply as $\unifFct^*\measSp$.

Recall that a \definiendum{charge} is the same thing as a measure but only finitely additive.
Denote by $ \charges\measSp $ the set of all charges on $\measSp$.
We may also view $\charges$ as a functor $\Cat{Meas} \to \Cat{Meas}$ by endowing $\charges\measSp$ with the same \salg{} as $\giry\measSp$, i.e.\ the one making all $\eval{A}$ with $A \in \sAlg$ measurable. 
There is the following classical duality between charges and positive functionals given by the isometric (with respect to the uniform norm) isomorphism \cite[IV.5.1]{DunfordSchwartz58}
\begin{align}
        \label{eq:duality_charges_posFunctionals}
        \operatorname{Int} \colon \charges\measSp \to \unifFct^*\measSp
    \text,
    \quad
        \meas \mapsto \lambdCalc f.\, \int f \diff \meas
\text.
\end{align}
The above duality given by $\operatorname{Int}$ can be viewed as a non-topological version of the Riesz-Markov-Kakutani representation theorem. Note that probability charges $\meas$ are characterised by being positive, i.e.\ $
    \int f \diff \meas = \operatorname{Int}(\meas)(f) \geq 0
$ for any $f \in \unifFct\measSp$ with $f\geq 0$, and normed, i.e.\ $\int 1 \diff \meas = \operatorname{Int}(\meas)(1) = 1$.

We are now ready to prove \cref{thm:WassersteinDist_cpoCont}. Actually, we prove the following slight generalisation (recall that every probability measure on an analytic space is perfect).
\begin{theorem}
Let $\gamma\colon \measSp \to \giry\measSp \in \meas$ such that $\gamma(x)$ is a perfect measure (for each $x\in X$). Then the $\hat\sigma$ is \ocpo-continuous with respect to $\leq$.
\end{theorem}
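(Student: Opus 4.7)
The plan is to prove the inequality in two steps. The easy direction, $\inf_{\meas[c]} \int \sup_i d_i \diff \meas[c] \geq \sup_i \inf_{\meas[c]} \int d_i \diff \meas[c]$, follows from the monotone convergence theorem applied pointwise: for each coupling $\meas[c] \in K(\gamma_x,\gamma_y)$ the increasing sequence $d_i$ of universally measurable functions satisfies $\int \sup_i d_i \diff \meas[c] = \sup_i \int d_i \diff \meas[c]$ (interpreting the integral via the $\meas[c]$-completion), and this dominates $\sup_i \inf_{\meas[c]'} \int d_i \diff \meas[c]'$. After applying MCT inside the outer infimum, the remaining claim reduces to the minimax identity $\inf_{\meas[c]} \sup_i \int d_i \diff \meas[c] = \sup_i \inf_{\meas[c]} \int d_i \diff \meas[c]$, and this is where Sion's theorem (\cref{lem:minimax_Sion}) will be invoked.

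To set up the minimax framework, I would let $V = K(\gamma_x, \gamma_y)$ and take $U$ to be the convex hull $\operatorname{conv}\{d_i\}_{i\in\nat}$, and consider the bilinear pairing $f(d,\meas[c]) = \int d \diff \meas[c]$, which is automatically quasi-concave in $d$ and quasi-convex in $\meas[c]$. The crucial structural fact is that $V$ embeds into $\unifFct^*(X\times X)$ via the Riesz-Markov-Kakutani isometry \eqref{eq:duality_charges_posFunctionals}, sitting inside the unit ball, and is cut out by weak-$*$ closed affine conditions (the marginal equations $\int \mathbf 1_{A\times X}\diff\meas[c] = \gamma_x(A)$ and symmetrically for $\gamma_y$); so Banach-Alaoglu gives weak-$*$ compactness of $V$. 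Bilinearity of $f$ combined with the monotonicity of $(d_i)$ yields $\sup_{d\in U} f(d,\meas[c]) = \sup_i \int d_i \diff \meas[c]$ and $\sup_{d\in U}\inf_{\meas[c]\in V} f(d,\meas[c]) = \sup_i \inf_{\meas[c]} \int d_i \diff \meas[c]$ (since any finite convex combination is dominated by its largest component). Sion's theorem then closes the argument.

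The main obstacle lies in reconciling universal measurability of $d_i$ with the space $\unifFct(X\times X)$ of uniform limits of $\sAlg\measTimes\sAlg$-simple functions, which is required both for the Riesz pairing and for weak-$*$ continuity of $\meas[c] \mapsto \int d_i \diff \meas[c]$. Strictly speaking, a universally measurable $d_i$ need not lie in $\unifFct(X\times X)$. This is precisely where perfectness of $\gamma_x, \gamma_y$ intervenes: perfectness of both marginals propagates to a weak-$*$ closure of $K(\gamma_x,\gamma_y)$ consisting of honest countably additive probability measures, and it also lets one replace each $d_i$ by an $\sAlg\measTimes\sAlg$-measurable function that agrees with it $\meas[c]$-a.e.\ uniformly in $\meas[c]\in V$ (via Lusin-type regularity). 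Once this replacement is in place, all pairings $f(d,\meas[c])$ are genuinely weak-$*$ continuous in $\meas[c]$ and norm-continuous in $d$, the hypotheses of \cref{lem:minimax_Sion} are verified, and the claimed identity follows. The restriction to perfect $\gamma(x)$ is essential here and accounts for the relaxation from $\ana$ to arbitrary $\meas$ in the statement.
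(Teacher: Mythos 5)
Your proposal follows essentially the same route as the paper's proof: reduce to a minimax identity via monotone convergence, identify couplings with positive normalised functionals on $\unifFct(X\times X)$ through the Riesz--Markov--Kakutani duality \eqref{eq:duality_charges_posFunctionals}, obtain weak-$*$ compactness from Banach--Alaoglu, and close with Sion's theorem (\cref{lem:minimax_Sion}), with perfectness entering exactly where the paper uses it, namely to guarantee that every finitely additive coupling of $\gamma_x$ and $\gamma_y$ is countably additive. One caveat on presentation: the compactness of $K(\gamma_x,\gamma_y)$ does not follow from its being ``cut out by weak-$*$ closed affine conditions'' alone---countable additivity is not such a condition, and those conditions only carve out the set of \emph{charges} with the given marginals---so the charge-versus-measure identification you invoke later is the step that must come first (as in the paper); moreover the proposed ``uniform Lusin replacement'' of the $d_i$, agreeing almost everywhere simultaneously for all couplings, is neither justified as stated nor needed for the rest of the argument.
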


\begin{proof}
Assume an $\leq$-increasing sequence $(d_i)_{i\in\nat}$ of pseudometrics over $\measSp$, then we need to show that the following equation for every $x,y\in X$.
\begin{equation}\label{eq:thm:Wasser_cpoCont}
  \inf_{\meas[c] \in K(\gamma_x,\gamma_y)}
            \int \sup_{i \in \nat} d_i \diff \meas[c]
        =
            \adjustlimits
            \sup_{i \in \nat}
            \inf_{\meas[c] \in K(\gamma_x,\gamma_y)}
            \int d_i \diff \meas[c]
    \text.
\end{equation}

\noindent
Let $x,y\in X$ and let $\meas_1 = \gamma_x $ and $\meas_2 = \gamma_{y}$.
    Recall the notion of charge which is the same thing as a measure but only finitely additive and note that
    \begin{align}
    \nonumber
    \MoveEqLeft
            \setBuilder
                { \meas[c] \text{ measure on } \sAlg \measTimes \sAlg }
                { \giry (\proj {1}) \meas[c] = \meas_1, \giry (\proj {2}) \meas[c] = \meas_2 }
    \\ \nonumber
        &=   \setBuilder*
                { \meas[c]  \text{ charge on } \sAlg \measTimes \sAlg  }
                { \giry (\proj {1}) \meas[c] = \meas_1, \giry (\proj {2}) \meas[c] = \meas_2 }
    \intertext{as $\meas_1,\meas_2$ are perfect \cite[D5]{RamachandranRüschendorf00} (see also \cite[Prop.\ 3]{Pachl79}, original result \cite[Thm.\ VIII]{RyllNardzewski53}). 
    Now applying the duality \eqref{eq:duality_charges_posFunctionals} and by suppress $\sAlg$ below in favour of readability  (i.e. $X\times X$ as a shorthand for the measurable space $\measSp \times \measSp$).
    }\nonumber
        &\stackrel{\operatorname{Int}}\cong
            \setBuilder*
                { c \in (\unifFct^* ( X\times X ) )  }
                { \begin{multlined}
                    c \text{ is positive}, c(1) = 1 \text{ and}  \\
                \text{for } i=1,2\colon
                (\unifFct^*\proj i) c = \operatorname{Int}(\meas_i)
                \end{multlined}}
    \\
    \label{eq:thm:WassersteinDist_cpoCont}
        &=  \begin{multlined}[t]
            \bigcap
                \setBuilder*
                    {\invImSet{\eval{f}}([0,\infty))}
                    { f\in \unifFct \bigl(X \times X \bigr), f \geq 0 }
            \cap \invImSet{\eval{1}}(\{1\}) \\
            {}\cap \bigcap\nolimits_{i=1,2}
                    \invImSet{(\unifFct^*\proj i)}(\{ \operatorname{Int} \meas_i \})
        \end{multlined}
    \intertext{where $\eval{f}\colon c \mapsto c(f)$ is the evaluation function.}
    \nonumber
        &\eqqcolon
            V_{xx'}
    \end{align}
    From the last representation it is apparent that $V_{xx'}$ is an intersection of weak-*-closed subsets of $ \unifFct^* ( X \times X ) $ as $
        \unifFct(\proj 1), \unifFct(\proj 2)\colon \unifFct X \to \unifFct ( X \times X)
    $ are continuous by functoriality.
    Hence $V_{xx'}$ is closed.
    Moreover the set $
        \setBuilder{\phi\in \unifFct^* ( X \times X ) }{ \phi(f) \in [-1,1] \text{ for any } f \text{ with } \Norm{f}{u} \leq 1 }
    $ is compact by Banach-Alaoglu theorem \cite[V.4.2]{DunfordSchwartz58}.
    This set also contains any positive $
        c \in \unifFct^* ( X \times X )
    $ with $ c(1) = 1 $
    (for any $f$ with $\Norm{u}{f} \leq 1$ note that $
            \abs{c (f)}
        \stackrel[{c \text{ linear}}]{}\leq
            \abs{c(\abs{f})}
        \stackrel[{c\text{ positive}}]{}=
            \int \abs{f} \diff c
        \hspace{-4em}
        \stackrel
            [\substack{c\text{ positive,} \\ \text{\cite[III.2.22, p.\ 119; III.1.5]
            {DunfordSchwartz58}} }]{}\leq
        \hspace{-4em}
            \int 1 \diff c
        =   1
    $).
    Thus $V_{xx'}$ is compact.

Set $U = [0,\infty) $ and define $\tilde d_x\colon U \to [0,1]$ by $
        \tilde d_x = d_i
    $ if $ x \in [i, i+1) $.
    Note that $\tilde d_x$ is an increasing function in $x$.
    Further define $ f\colon U \times V_{xx'} \to [0,1] \in \set$ by $
        f(x, c) \coloneqq c(\tilde d_x) = \int \tilde d_x \diff c
    $.
    For upper semi-continuity of $f$ in its first argument fix a $c \in V_{xx'}$ and take any $ r \in \mathbb R $.
    Observe that $
            \invImSet{f(\blank, c)} ( [0, r) )
        =   \setBuilder
                {x \in U }
                { \exists i \in \omega\colon x < i+1 \wedge c (d_i) < r }
    $ is open.
    As $r$ was chosen arbitrarily, upper-semicontinuity is proven.
    For lower semi-continuity of $f$ in its second argument fix an $x \in U$ and take again any $ r \in [0,\infty) $.
    Let $i$ be the element of $\omega$ with $ x \in [i, i+1) $.
    Observe that $
            \invImSet{f(x, \blank)} ( (r,\infty) )
        =   \invImSet{f(i, \blank)} ( (r,\infty) )
        =   \setBuilder{c \in V_{xx'}}{ c(d_i ) > r }
        =   \invImSet{\eval{d_i}} ( (r,\infty) )
    $ is open by definition of weak-*-topology.
    Since $x$ and $r$ were chosen arbitrarily, lower-semicontinuity is proven.
    For each $ c $ and $\lambda \in \mathbb R$ the level set $\setBuilder{x \in U}{ f(x, c) \geq \lambda }$ is of form  $[i, \infty) \subseteq \mathbb R$ and thus obviously convex.
    Convexity of $V_{xx'}$ is also quickly confirmed by noting that every operand in \cref{eq:thm:WassersteinDist_cpoCont} is closed under convex combination.
    For each $ x \in U $ and $\lambda \in \mathbb R$ the level set $
        \setBuilder
            {c \in \unifFct^* ( X \times X ) }
            { c(\tilde d_x) \leq \lambda}
    $ is convex by linearity of the $c$'s.
    As the intersection of convex sets is convex, the level set $
        \setBuilder
            {c \in V_{xx'} }
            { c(\tilde d_x) \leq \lambda}
    $ is convex.

    Then by \cref{lem:minimax_Sion},
    $
            \inf_{c\in V_{xx'}} \sup_{x \in [0,\infty)} c(\tilde d_x)
        =
            \sup_{x \in [0,\infty)} \inf_{c\in V_{xx'}} c(\tilde d_x)
    $.
    By definition of $\tilde d$ this becomes
    $
            \inf_{c\in V_{xx'}} \sup_{i \in \nat} c(d_i)
        =
            \sup_{i \in \nat} \inf_{c\in V_{xx'}} c(d_i)
    $.
    By the canonical isomorphism and
    applying Levi's theorem (monotone convergence theorem) \cite[123A]{Fremlin} the claim in \eqref{eq:thm:Wasser_cpoCont} follows. 
\end{proof}

\begin{arXivVersion}

\subsection{Proof of \cref{lem:dliftingForMP}}

Here one should point out that the triangle inequality actually requires perfectness (which is automatic as we are working over $\ana)$.

\begin{proof}
   The mapping $\sigma^c$ is well defined as predicate lifting is a map to $\spredicates$ (cf.~\cref{thm:predLifting_semimeas}), direct images are defines thereon (cf.~\cref{lem:directImage}) and convex combination is a measurable function.
   By unfolding the definition of $\sigma^c$ we obtain
   \[
            \dliftMP_X(d)((\meas,r),(\meas*,s))
        =   c\left(\inf_{\meas[c]\in K(\meas,\meas*)} \int d \diff \meas[c]\right) + (1-c)|r-s|
    \]

   It remains to verify that for any pseudo-distance $\Dist*$ the resulting in a pseudo-distance.
   Both summands $\inf_{\meas[c]\in K(\meas,\meas*)} \int d \diff \meas[c]$ and $|r-s|$ constitute distances in their own right as is known for the latter and for there former follows immediately from its dual characterisation by \cref{lem:KantorovicRubinstein}.
   The convex combination of pseudo-distances is again a pseudo-distance.
\end{proof}

\subsection{Proof of \cref{lem:dliftingForMDP}}

\begin{proof}
    The explicit formula is again just unfolding.
    From this formulas we see, that we are just dealing with a countable suprema of pseudo-distances.
    Measurability is preserved by countable suprema and pseudo-metrics even by arbitrary suprema.
\end{proof}

\end{arXivVersion}

\section{Topology}
    \label{sec:topology}
Recall that a topological space is \definiendum{$\mathrm R_2$} if any pair of topologically distinct point (i.e.\nolinebreak[3]\ $ 
    \exists U \in \topy \colon U \cap \{x,y\} \in \{\{x\}, \{y\} \}
$) are separated by disjoint open sets (i.e.\nolinebreak[3]\ $
    \exists U, V \in \topy\colon x \in U \wedge y \in V \wedge U \cap V = \emptyset 
$).
Also recall that a topological space is called $ \mathrm T_2 $ if it is Hausdorff and $\mathrm T_0$ if any pair of distinct point is separated by an open set.
Obviously, a $ \mathrm T_2 $-space is precisely an $\mathrm R_2$-space that is $\mathrm T_0$.
Any topological space can be transformed canonically into a $ \mathrm T_0 $ space by identifying all point that are not topologically distinct point, resulting in the so-called \definiendum{Kolmogorov quotient} $\operatorname{Kol}$.
We now generalise a bit the well-known Stone-Weierstraß theorem using the fact that $\mathrm T_0$-spaces form a reflective subcategory of topological spaces by the Kolmogorov quotient construction.
\begin{CALCOversion}
    Proofs of the following theorems are found in \cite{arXiv}.
\end{CALCOversion}

\begin{lemma}{Stone-Weierstraß}  
        \label{lem:StoneWeierstraß}
    Let $\topSp $ a compact $\mathrm R_2$-space.
    Let $ L $ be a set of continuous functions $\spSet\to \mathbb R$ such that $
        \min\{f,g\}, \max\{f,g\} \in L
    $ for all $f,g \in L$.
    If some continuous function $f \colon \spSet\to \mathbb R$ can be approximated at each pair of points by functions in $L$,
    then $f$ itself can also be approximated by functions in $L$ with respect to the uniform norm $\Norm{\blank}u$.
\end{lemma}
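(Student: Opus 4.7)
The plan is to reduce the statement to the classical lattice version of the Stone-Weierstraß theorem over a compact Hausdorff space, via the Kolmogorov quotient construction alluded to right before the lemma. Let $\eta\colon \topSp \to \topSp' = \operatorname{Kol}(\topSp)$ denote the Kolmogorov quotient. Because $\topSp$ is $\mathrm R_2$, the quotient $\topSp'$ is $\mathrm T_2$ (Hausdorff), and as the continuous image of a compact space it is compact. Moreover, every continuous function $h\colon \topSp \to \mathbb R$ factors (uniquely) through $\eta$, since topologically indistinguishable points of $\topSp$ cannot be separated by open sets while their images in $\mathbb R$ (Hausdorff) could be. Hence we obtain induced functions $\bar f$ and $\bar g$ (for $g\in L$) on $\topSp'$; $\bar L = \{\bar g : g\in L\}$ remains closed under pointwise $\min$ and $\max$; and the assumption that $f$ is approximable at every pair $x,y \in \topSp$ transfers directly to $\bar f$ being approximable at every pair $\eta(x),\eta(y) \in \topSp'$ by $\bar L$. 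Since $\eta$ is a surjection, the uniform norms coincide, so a $\bar L$-approximation of $\bar f$ on $\topSp'$ pulls back to an $L$-approximation of $f$ on $\topSp$.

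Thus it suffices to give the classical argument on the compact Hausdorff space $\topSp'$; I will describe it in the original notation to avoid clutter. Fix $\varepsilon>0$. First, for each ordered pair $x,y\in\spSet$ use the hypothesis to pick $g_{x,y}\in L$ with $|g_{x,y}(x)-f(x)|,|g_{x,y}(y)-f(y)|<\varepsilon$. Fixing $x$, the sets $U_y = \{z : g_{x,y}(z) < f(z)+\varepsilon\}$ are open (by continuity of $g_{x,y}-f$) and contain $y$; by compactness finitely many $y_1,\dots,y_n$ cover $\spSet$, and $h_x \coloneqq \min\{g_{x,y_1},\dots,g_{x,y_n}\} \in L$ (iterating the $\min$-closure) satisfies $h_x < f+\varepsilon$ on all of $\spSet$ and $h_x(x) > f(x)-\varepsilon$. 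Second, the sets $V_x = \{z : h_x(z) > f(z) - \varepsilon\}$ are open and contain $x$; by compactness finitely many $x_1,\dots,x_m$ cover $\spSet$, and $g \coloneqq \max\{h_{x_1},\dots,h_{x_m}\} \in L$ then satisfies $f-\varepsilon < g < f+\varepsilon$ on all of $\spSet$, i.e.\ $\Norm{f-g}{u}<\varepsilon$.

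The main technical obstacle is not the compactness double-covering trick, which is completely standard, but the factorisation through the Kolmogorov quotient. One must check: (i) that $\topSp'$ is actually Hausdorff — this is where $\mathrm R_2$ (rather than merely $\mathrm T_0$) is used, since in a $\mathrm T_0$ quotient of an $\mathrm R_2$ space any two distinct points lift to topologically distinguishable — hence $\mathrm R_2$-separable — points of $\topSp$; (ii) that the lattice structure of $L$ descends to $\bar L$, which is immediate since $\min$ and $\max$ of real-valued functions commute with quotienting; and (iii) that approximation-at-pairs is preserved under quotienting, which is immediate since each pair of quotient points admits a preimage pair. Everything else is routine, and the uniform norm is preserved under $\eta^*$ because $\eta$ is surjective.
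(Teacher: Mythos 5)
Your proposal is correct and follows essentially the same route as the paper: both reduce to the compact Hausdorff case via the Kolmogorov quotient (using $\mathrm R_2$ to get $\mathrm T_2$, surjectivity of $\eta$ to preserve the uniform norm, and the evident descent of the lattice structure and of pairwise approximation). The only difference is that you spell out the classical two-stage compactness argument, whereas the paper cites it from Ash's textbook.
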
 

\begin{arXivVersion}
\begin{proof}
    Let $\mathrm C( \topSp, \mathbb R )$ denote the set of continuous function $\topSp \to \mathbb R$ with uniform norm.
    Let $ \eta_{\blank} $ denote the unit witnessing the fact, that $ \mathrm T_0 $ spaces form a reflective subcateogory of topological spaces \cite[4.17(5)]{AdamekHerrlichStrecker04}.
    Note that $ \eta{\mathbb R} $ is a homeomorphism as $\mathbb R$ is already $\mathrm T_0$.
    Thus there is a bijection of functions $ f\colon \spSet \to \mathbb R $ with functions $f^\flat\colon \operatorname{Kol}\topSp \to \mathbb R$.
    View $ \mathrm C( \blank, \mathbb R ) $ as a contreavariant functor from $\Cat{Top}$ to normed spaces and bounded linear maps.
    A way to phrase the penultimate sentence is to say that $ \mathrm C( \eta_{\spSet}, \mathbb R ) $ is bijective.
    Moreover this map has operator norm $ \Norm{ \mathrm C( \eta_{\spSet}, \mathbb R ) }{} = 1$.
    Elementarily, a value $f^\flat(\hat x)$ is given by any $f(x)$ with $ x \in \hat x $.
    
    The theorem is found in this form in \cite[Lemma A.7.2]{Ash72} with ``$\mathrm R_2 $'' replaced by ``$\mathrm T_2 $''.
    So see that ``$\mathrm T_2 $'' can be weekend to ``$\mathrm R_2 $'' take any $L$ and $f$ as in the assumption of the theorem together with some $\varepsilon > 0$.
    Obviously, $ L' \coloneqq [L]^\flat $ is again closed under $\min$ and $\max$ as can be checked immediately by the elementary characterisation of $ (\blank)^\flat $.
    Likewise, the family $L'$ also approximates $ f^\flat $ as each pair of points.
    Moreover $\operatorname{Kol}\topSp$ is compact as the image of the compact space $\topSp$ and $\mathrm T_2$.
    Thus we can approximate $f^\flat$ up to $\varepsilon$ by some $ g' \in L' $.
    But so $ g \comp \eta_{s\spSet} $ approximates $f$ up to $\varepsilon$ as $ 
            \Norm{f-g \comp \eta_{s\spSet}}{u}
        =   \Norm{f^\flat \comp \eta_{\spSet} -g \comp \eta_{s\spSet}}{u}
        \leq\Norm{f^\flat - g}{u}
            \cdot \Norm{ \mathrm C( \eta_{\spSet}, \mathbb R ) }{}
        \leq\Norm{f^\flat - g}{u}
    $.
\end{proof}
\end{arXivVersion}

\begin{lemma}
        \label{eq:Kolmogorov_unit_proper}
    The unit of the Kolmogorov construction $\eta_{\spSet}\colon \spSet \to \operatorname{Kol}\spSet$ is proper, i.e.\nolinebreak[3]\ preimages of compact sets are compact.
\end{lemma}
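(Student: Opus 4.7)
The plan is to exploit the fact that every open set of $\spSet$ is saturated with respect to the topological indistinguishability relation, so open covers of preimages transfer directly to open covers of their images in $\operatorname{Kol}\spSet$.

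First I would record the key observation: if $x, y \in \spSet$ satisfy $\eta_{\spSet}(x) = \eta_{\spSet}(y)$, then $x$ and $y$ are topologically indistinguishable, so every open $U \subseteq \spSet$ contains $x$ iff it contains $y$. Hence every open set $U \subseteq \spSet$ is saturated, i.e.\ $U = \eta_{\spSet}^{-1}(\eta_{\spSet}(U))$. Since $\eta_{\spSet}$ is a quotient map, this shows $\eta_{\spSet}(U)$ is open in $\operatorname{Kol}\spSet$ whenever $U$ is open in $\spSet$, so $\eta_{\spSet}$ is an open map.

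Next, let $K \subseteq \operatorname{Kol}\spSet$ be compact and take an arbitrary open cover $\{U_i\}_{i \in I}$ of $\eta_{\spSet}^{-1}(K)$ in $\spSet$. By the previous step, each $\eta_{\spSet}(U_i)$ is open in $\operatorname{Kol}\spSet$, and the family $\{\eta_{\spSet}(U_i)\}_{i \in I}$ covers $K$: indeed, for any $\hat{x} \in K$ we may pick $x \in \eta_{\spSet}^{-1}(\{\hat{x}\}) \subseteq \eta_{\spSet}^{-1}(K)$, find some $U_i$ containing $x$, and conclude $\hat{x} = \eta_{\spSet}(x) \in \eta_{\spSet}(U_i)$. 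By compactness of $K$ we extract a finite subcover $\eta_{\spSet}(U_{i_1}), \ldots, \eta_{\spSet}(U_{i_n})$. Then $U_{i_1}, \ldots, U_{i_n}$ covers $\eta_{\spSet}^{-1}(K)$, because each $U_{i_k}$ is saturated and thus equals $\eta_{\spSet}^{-1}(\eta_{\spSet}(U_{i_k}))$. This establishes compactness of $\eta_{\spSet}^{-1}(K)$.

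I do not anticipate any real obstacle here; the only subtle point is verifying that open sets in $\spSet$ are saturated, which is immediate from the very definition of the Kolmogorov quotient (identifying topologically indistinguishable points). Everything else is routine transfer of covers along a surjective open map.
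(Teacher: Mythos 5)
Your proposal is correct and follows essentially the same route as the paper's proof: push an open cover of the preimage forward along $\eta_{\spSet}$, use compactness to extract a finite subcover, and pull back, with the whole argument resting on the fact that open sets of $\spSet$ are saturated under topological indistinguishability. You state this saturation property explicitly up front, whereas the paper only invokes it implicitly at the end ("in contradiction to ... the construction of the Kolmogorov quotient"), so your write-up is if anything slightly cleaner.
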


\begin{arXivVersion}
\begin{proof}
    Given some compact subset $ L \subseteq \operatorname{Kol}\spSet $ set $ K \coloneqq \invImSet{\eta_{\topSp}}(L) $.
    To check compactness of $K$, take any open cover $\paving U \subseteq \topy \spSet$ of $K$, i.e.\nolinebreak[3]\ $ \bigcup \paving U \supseteq K  $.
    Set $ \paving V \coloneqq \dirImSet{f}[\paving U] $.
    As $\operatorname{Kol}\spSet$ has the quotient topology of $\eta_{\spSet}$, $ \paving V $ consists of open sets.
    By compactness of $L$ and $\bigcup \paving V = L $ we find a finite subpaving $\paving G \subseteq \paving V$ with $\bigcup \paving G = K$.
    Set $\paving F \coloneqq \invImSet{f}[\paving G]$.
    Obviously, $K \subseteq \bigcup \paving F$.
    We have to check that $ \paving F \setminus\{\emptyset\} \subseteq \paving U $.
    Assume that there is some nonempty $ F \in \paving F \setminus \paving U $.
    Let $U \in \paving U$ be such that $ F = \invImSet{f}(\dirImSet f (U)) $.
    Then also $U$ is nonempty.
    Also there is some $x \in F \setminus U $ as $ U \subseteq \invImSet{f}(\dirImSet f (U)) $.
    But this is in contradiction to nonemptyness of $F, U$, openness of $U$ and the construction of the Kolmogorov quotient.
\end{proof}
\end{arXivVersion}

\section{Proofs of lemmas and theorems from \cref{ssec:shapes}}

\subsection{Proof of \cref{lem:logicalDist_measurable}} 
\begin{proof}
    Let $\topy_\Sprache$ denote the topology given by topologisability of $\quantTheory$ and
    note that we have the following factorization of $\ldistance$
    \begin{multline*}
            \spSet\times\spSet
        \xrightarrow{\quantTheory\times\quantTheory}
                \Omega^{(\SpracheFormulas, \topy_{\Sprache})}
            \times \Omega^{(\SpracheFormulas, \topy_{\Sprache})}
        \cong
            (\Omega\times\Omega)^{(\SpracheFormulas, \topy_{\Sprache})}
    \\  \xrightarrow{\abs{ (\blank) - (\blank) }^{(\SpracheFormulas, \topy_{\Sprache})}}
            \Omega^{(\SpracheFormulas, \topy_{\Sprache})}
        \xrightarrow{( (\SpracheFormulas, \powerSet(\SpracheFormulas)) \to (\SpracheFormulas, \topy_{\Sprache}) )^*}
            \Omega^{(\SpracheFormulas, \powerSet(\SpracheFormulas))}
        \subseteq
            \textstyle\prod_{\SpracheFormulas} \Omega
    \\  \xrightarrow{ \sup_{\varphi \in \SpracheFormulas} \proj\varphi \colon f \mapsto \sup_{\varphi \in \SpracheFormulas} f(\varphi) }
            \Omega
    \text,
    \end{multline*}
    where the first map is measurable by assumption,
        the other maps with exception of the last one are continuous \cite[3.4.5, 3.4(1), 3.4(2), 2.6.3]{Engelking89} and
        the last one is lower semicontinuous \cite[1.7.14(a)]{Engelking89}.
    This implies the claim as the composition of measurable, continuous and semi-continuous functions is measurable.
\end{proof}

\allowdisplaybreaks
\subsection{Proof of \cref{thm:adequacy}}
\begin{proof}
    Set $\interpret{\blank} = \interpret{\blank}_\gamma$.
    We prove $ \bdistance \geq \interpret{\varphi}^\bullet \Dist[E] $ for each $\varphi\in\SpracheFormulas$ by structural induction over $\varphi$, where $\interpret{\varphi}^\bullet = \invImSet{(\interpret{\varphi} \times\interpret{\varphi})}$.
    Recall again that all logical symbols are also function symbols,
    so the proof consists of two cases: Function symbols and modal operators.

    Take a formula $\varphi = f(\varphi_1, \ldots, \varphi_n)$ for an arbitrary $n$-ary function symbol $f$---interpreted by a function also denoted by $f\colon [0,1]^n \mapsto [0,1]$---and formulas $\varphi_i$ with $i=1,\ldots,n$ with $ \bdistance \geq \interpret{\varphi_i}^\bullet \Dist[E] $ for each $i$.
    Given two states $x,y \in \spSet$ we find
    \begin{align*}
            \interpret{\varphi}^\bullet \Dist[E](x,y) \qquad\qquad\quad
        &\refrel{ssec:interpretation}=
            \abs{
                f(\interpret{\varphi_1}(x), \ldots, \interpret{\varphi_n}(x))
                -f(\interpret{\varphi_1}(y), \ldots, \interpret{\varphi_n}(y))
            }
    \\  &\textrel{f \text{ nonexpansive}}*\leq
            \Dist[\infty]*\bigl(
                (\interpret{\varphi_1}(x), \ldots, \interpret{\varphi_n}(x)),
                \interpret{\varphi_1}(y), \ldots, \interpret{\varphi_n}(y)
            \bigr)
    \\  &\textrel{\forall i\colon \bdistance \geq \interpret{\varphi_i}^\bullet \Dist[E]\text,}*[\text{definition of } \Dist[\infty]*]*\leq
            \bdistance(x,y)
    \text.
    \end{align*}
    As $x,y$ had been chosen arbitrarily, it follows that $ \bdistance \geq \interpret{\varphi}^\bullet \Dist[E] $.

    Turning in the final step to modal operators, take any $a \in \actions$ and assume that for a formula $\psi$ we already know that $ \bdistance \geq \interpret{\psi}^\bullet \Dist[E] $.
    Observe for any two states $x,y\in \spSet$
    \begin{align*}
            \qquad\interpret{\diamond_i \psi}^\bullet \Dist[E](x,y) \quad
        &\refrel{eq:modalOp_explicitly}=
            \abs*{
                \int \interpret{\psi} \diff \meas_{a,x} +_c \mathit{rew}_a(x)
                -\left(\int \interpret{\psi} \diff \meas_{a,y} +_c \mathit{rew}_a(y)\right)
            }
    \\  &\textrel{\text{definition of } +_c\text,}*[linearity of integral]=
            \abs*{\left(
                    \int \interpret{\psi} \diff \meas_{a,x}
                -   \int \interpret{\psi} \diff \meas_{a,y}
            \right) +_c \left(
                \mathit{rew}_a(x) - \mathit{rew}_a(y)
            \right)
            }
    \\  &\textrel{c, 1-c  \geq 0 \text{; definition of }+_c \text;}*[triangle inequality]\leq
            \left|
                    \int \interpret{\psi} \diff \meas_{a,x}
                -   \int \interpret{\psi} \diff \meas_{a,y}
            \right| +_c \left|
                \mathit{rew}_a(x) - \mathit{rew}_a(y)
            \right|
    \\  &=
            \left|
                    \int \interpret{\psi} \diff \meas_{a,x}
                -   \int \interpret{\psi} \diff \meas_{a,y}
            \right| +_c \left|
                \mathit{rew}_a^\bullet (\Dist[E])(x,y)
            \right|
    \intertext{let $\meas[c]$ always range over all coupling $ \couplings(\meas_{a,x}, \meas_{a,y}) $}
        &\textrel{definition of coupling}[\cref{def:pmetric_coupling}]=
            \left|
                \inf_{\meas[c] }
                    \int \interpret{\psi}(x) - \interpret{\psi}(y)
                    \meas[c](\diff x', \diff y')
            \right| +_c
                \mathit{rew}_a^\bullet (\Dist[E])(x,y)
    \\  &\textrel{Jensen's inequality}\leq
            \inf_{\meas[c]}
                \int \left|\interpret{\psi}(x) - \interpret{\psi}(y)\right|
                \meas[c](\diff x', \diff y')
             +_c
                \mathit{rew}_a^\bullet (\Dist[E])(x,y)
    \\  &=
            \inf_{\meas[c]}
                \int \interpret{\psi}^\bullet \Dist[E] \diff \meas[c]
             +_c
                \mathit{rew}_a^\bullet (\Dist[E])(x,y)
    \\  &\leq
                \adjustlimits
                \sup_{a \in \actions}
                \inf_{\meas[c]}
                    \int \interpret{\psi}^\bullet \Dist[E] \diff \meas[c]
             +_c
                \mathit{rew}_a^\bullet (\Dist[E])(x,y)
    \\  &\textrel*{\text{definition of }\sigma^B}*=
            \bigl(\gamma_B \circ \sigma^B( \interpret{\psi}^\bullet \Dist[E] )\bigr)(x,y)
    \\  &\textrel*{monotonicity from}
                [\text{\cref{thm:WassersteinDist_cpoCont}; }\bdistance \geq \interpret{\psi}^\bullet \Dist[E]]*\leq
            \bigl(\gamma_B \circ \sigma^B( \bdistance )\bigr)(x,y)
    \\  &=
            \bdistance(x,y)
    \text.
    \end{align*}
    As $x,y$ had been chosen arbitrarily, it follows that $ \bdistance \geq \interpret{\diamond_a \psi}^\bullet \Dist[E] $.
\end{proof}

Note that the usage of Jensen's inequality could be avoided using \cref{eq:dist_Sprache_def}.
But we chose to conduct the proof this way in view of possible further research into logics without negation.

\section{Proofs of lemmas and theorems from \cref{ssec:expressivity}}

\subsection{Proof of \cref{lem:approxAtPairOfPoints}} 
\begin{proof}
    Let $\varphi$ witness $f\in \interpret\Sprache$, i.e.\nolinebreak[3]\ $ \interpret\varphi = f $.
    Set $
        g = \interpret{((\varphi - fy)\wedge((hx-hy)\top))+ hy}
    $ which is defined due to $ 0 \leq hx - hy $.
    Note $
            \interpret{((\varphi - fy)\wedge((hx-hy)\top))+ hy}z
        =   \min\{1, ((fz - fy) \wedge (hx - hy) ) + hy \}
    $.
    Finally, check that
    \begin{align*}
            hx - gx
        &=  hx - \min\{1, ((fx - fy) \wedge (hx - hy) ) + hy \}
    \\  &\subseteq
            hx - (hx - \varepsilon , hx]
    \\  &=   [0, \varepsilon )
    \end{align*}
    and
    \[
            hy - gy
        =   hy - \min\{1, (0 \wedge (hx - hy) ) + hy \}
        =   0
    \text.
    \]
    Thus \cref{eq:lem:approxAtPairOfPoints_approxExact} follows.
    Finally, $ \cref{eq:lem:approxAtPairOfPoints_approxExact} \implies \cref{eq:lem:approxAtPairOfPoints_approx} $ follows from definition of approximation of a pair of points.
\end{proof}

\subsection{Proof of \cref{lem:KantorovicRubinstein}} 
Define the \definiendum{Kantorovic distance} for any probability measures $\meas, \meas* $ on $\measSp$
\begin{equation}
        \label{eq:KantorovicDist}
        K_1(\Dist*)(\meas, \meas*)
    =   \sup_{h \in \mathcal{L}_1(\spSet, \sAlg, \meas, \meas*, \Dist*)}
            \int h \diff (\meas - \meas*)
\text.
\end{equation}

\begin{proof}
    From \cite[(D*)]{RamachandranRüschendorf95} we have
    (using symmetry, triangle inequality and integrability of $\Dist*(\blank, x_0)$)
    that
    \begin{multline*}
            W_1(\Dist*)(\meas, \meas*)
        \coloneqq
            \inf_{\couples{\meas[c]}{\meas}{\meas*}} \int \Dist* \diff \meas[c]
        \\=
            \sup\setBuilder*
                {\int f \diff \meas + \int g \diff \meas*}
                {\begin{multlined}
                    \text{for } f \in \mathcal{L}^1\measdSp, g \in \mathcal{L}^1(\spSet, \sAlg, \meas*) \text{ with} \\
                        \forall x, y\in \spSet\colon f(x) + g(y) \leq \Dist*(x,y)
                \end{multlined}}
        \\\eqqcolon
            m(\Dist*)(\meas, \meas*)
    \end{multline*}
    As $ h(x) - h(y) \leq \Dist*(x,y) $ implies $ h(x) + (-h)(y) \leq \Dist*(x,y) $ we have certainly $
        K_1(\Dist*)(\meas, \meas*) \leq m(\Dist*)(\meas, \meas*)
    $.
    For the reverse direction take any $ f \in \mathcal{L}^1\measdSp $ and $ g \in \mathcal{L}^1(\spSet, \sAlg, \meas*)$ with $\forall x, y\in \spSet\colon f(x) + g(y) \leq \Dist*(x,y)$.
    Set
    \[
            h
        \coloneqq
            \lambdCalc x \in \spSet. \inf_{y\in \spSet} \Dist*(x,y) - g(y)
    \text.
    \]
    Then $f \leq h \leq -g $ and
    \begin{multline*}
            h(x) - h(x')
        =   \inf_{y\in \spSet} \Dist*(x,y) - g(y)
            + \sup_{y\in \spSet} g(y) - \Dist*(x',y)
    \\  \leq
            \sup_{y\in \spSet} \Dist*(x,y) - \Dist*(x',y)
        \stackrel{\text{triangle inequality}}\leq
            \Dist^*(x,x')
    \text.
    \end{multline*}
    So especially, $h$ is continuous with respect to the topology $\topy_{\Dist*}$.
    In case $ \topy_{\Dist*} \subseteq \sAlg $, it follows directly that $h$ is measurable.
    Otherwise note that $ g $ is nothing but $ \exists_{\proj 1} (\lambdCalc x\, y. d(x,y) - g(y)) $, and use that $\measSp$ is analytic (or smooth) to conclude measurability of $h$.
    Moreover $
            \int f \diff \meas + \int g \diff \meas*
        \leq\int h \diff \meas -h \meas*
    $ by $f \leq h \leq -g $.
    Hence $
        m(\Dist*)(\meas, \meas*) \leq K_1(\Dist*)(\meas, \meas*)
    $.
\end{proof}

\subsection{Proof of \cref{thm:interpretation_hemicompact}}
\begin{proof}
We use structural induction on $\psi \in \Shape(\Sprache)$ to assign a topology $\topy_{\psi}$ on the set $\widehat{\psi}$.
    \begin{itemize}
        \item Let $\psi = \top$ (base case). Then we let $\topy_\top$ be the singleton space.
        \item Let $\psi = \diamond_a \psi'$, for some $\psi'\in \Shape(\Sprache)$ and $a\in\actions$. Then we have $\widehat{\psi} \cong  \widehat {\psi'}$  canonically in $\set$. Thus, we let $\topy_{\widehat{\psi}} = \topy_{\widehat{\psi'}}$ since the latter exists by induction hypothesis.
        \item Let $\psi = f_i(\psi_1,\cdots,\psi_{n_i})$ for some $n_i$-ary function symbols $f_i$ and $\psi_j\in\Shape(\Sprache)$. Then we have the following bijection:
        \[
        \widehat \psi \ \cong\  Y_i \times \widehat{\psi_1} \cdots \widehat{\psi_{n_i}} \quad \text{in $\set$.}
        \]
        Using this bijection, we let $\topy_{\psi}$ is the product topology of $\topy_{Y_i}$ and $\topy_{\psi_j}$ (for $1\leq j \leq n_i$).
    \end{itemize}
    Finally, topologise $\SpracheFormulas$ as a coproduct $
        \coprod_{\psi \in \Shape(\Sprache)}
            (\widehat{\psi}, \topy_{\psi} )
    $ in the category $\Top$ of topological spaces.
    As $\Shape(\Sprache)$ is countable, $\SpracheFormulas$ is again second countable.
    In case of locally compact spaces the product is even locally compact again regardless of the size of the index set.


    In the next step use the fact \cite[3.4.4]{Engelking89}:
    \[
            \Omega^{\SpracheFormulas}
        =   \Omega^{\bigsqcup_{\psi \in \Shape(\Sprache)}
            (\widehat{\psi}, \topy_{\psi} )}
        \cong
            \prod_{\psi \in \Shape(\Sprache)}
                \Omega^{(\widehat{\psi}, \topy_{\psi} )}
    \text.
    \]
    By \cref{lem:FctSp_2ndCount} each
    $\Omega^{\widehat{\psi}}$ 
    is second countable.
    Thus so is the product $
        \prod_{\psi \in \Shape(\Sprache)} \Omega^{\widehat{\psi}}$. 
    As $\SpracheFormulas$ is second countable, $\Omega^{\SpracheFormulas}$ is hereditarily Lindel\"of \cite[3.8.D]{Engelking89}.
    Thus by \cite[4A3Dc(ii)]{Fremlin} to check measurability of $\quantTheory$ amounts to showing the measurability of each composition
    \begin{equation*}
         \spSet \xrightarrow{\quantTheory} \Omega^{\SpracheFormulas} \xrightarrow{\Omega^{\iota}} \Omega^{\widehat{\psi}}
    \text,\qquad
        \text{for each $\psi \in \Shape(\Sprache)$ and inclusion $\iota \colon \widehat{\psi} \hookrightarrow \Sprache$.}
    \end{equation*}
    Let $
            \quantTheory|_{\widehat{\psi}}
        \coloneqq
            \Omega^\iota \circ \quantTheory
    $ for each $ \psi \in \Shape(\Sprache) $.

    We prove the above composition is measurable by structural induction of $\psi \in \Shape(\Sprache)$.
    The base case is trivial as $\interpret{\top} \equiv 1$ is constant.
    For the induction step we distinguish the following cases:
    \proofsubparagraph{function symbols (including $\neg, \wedge, \vee$)}
    function symbols (including $\neg, \wedge, \vee$):
    Consider a formula $ \psi = f(t_1, \ldots, t_n) $ for an $n$-ary function symbol $f$ index by $Y$.
    Let $ \proj i $ denote the canonical projection from $
            \widehat{\psi}
        =   \widehat{t}_1 \times \ldots \times \widehat{t}_n \times Y
    $ to $
            \widehat{t_i}
    $ for each $i = 1,\ldots, n $. Further let $\quantTheory_n \colon X \to \prod_{1\leq i \leq n} \Omega^{\widehat{t}_i}$ (we abbreviate the subscripts to $\prod_{i\in n}$) be the diagonal map. This map is measurable since each $\quantTheory|_{\widehat{t}_i}$ (for each $i\in n$) is measurable by inductive hypothesis.


    We have that $\quantTheory|_{\widehat{\varphi}}$ can be expressed as a composition of a measurable with two continuous maps
    \begin{multline*}
        \spSet
            \xrightarrow{
                        \quantTheory_n
                }
                \Omega^{\widehat{t_1}} \times \ldots \times \Omega^{\widehat{t_n}}
             \xrightarrow{
                    (\proj 1)^* \times\ldots\times (\proj n)^*
                }
                        \Omega^{\widehat{t_1} \times\ldots\times \widehat{t_n} \times Y}
                    \times \ldots \times
                        \Omega^{\widehat{t_1} \times\ldots\times \widehat{t_n} \times Y}
            \\ \to
            (\Omega^{{} \times n})^{\widehat{t_1} \times\ldots\times \widehat{t_n} \times Y}
            \xrightarrow{ f^{\widehat{t_1} \times\ldots\times \widehat{t_n} \times Y} }
            \Omega^{\widehat{t_1} \times\ldots\times \widehat{t_n} \times Y}
    \end{multline*}
    by standard result for the compact-open topology \cite[3.4(1), 3.4.5, 3.4(2)]{Engelking89}.
    Thus this composition is measurable.
    \proofsubparagraph{Case $\psi' = \diamond_a \psi$, for some $\psi \in \Shape (\Sprache)$.} By induction hypothesis, $\quantTheory(\blank)|_{\widehat{\psi}}$ is measurable and we must show that $\quantTheory(\blank)|_{\diamond_a \psi}$ is measurable as a function to $\Omega^{\topSp*} $, where $\topSp* = \widehat{\psi'} = \widehat{\psi}$.

    As convex combination $\blank +_c \blank$ is measurable, it suffices to show that both $\psi_1 \coloneqq \int \interpret{\psi} \diff \gamma_{x,a} $ and $ \psi_2(x) \coloneqq \gamma^1_{a}(x)$ are continuous functions for each $x$ and measurable in $x$ considered as $\Omega^{\topSp*} $-valued functions.

    In case of $\psi_2 $ we use 
    the following composition (below $k\colon \widehat{\psi}_2 \to 1=\{\bullet\}$ given by the constant mapping $t\in\widehat{\psi}_2 \mapsto \bullet$)
    \[
           \spSet \xrightarrow{\interpret{\reward_a \psi}} \Omega
            \xrightarrow{\cong} \Omega^{1}
            \xrightarrow{\Omega^k}
                \Omega^{\widehat{\psi}_2}
    \]
    consisting again of a measurable map followed by continuous maps, cf.\nolinebreak[3]\ \cite[2.6.8, 4.2.17, 3.4(2)]{Engelking89}.
    Thus $ \quantTheory|_{\widehat{\psi}_2} $ is measurable.

    Finally, for the formula $\psi_1$ involving integration we have to check that
    \[
        \quantTheory|_{\widehat{\psi}_1}
       = \lambdCalc x. \lambdCalc t \in \widehat{\psi}_1.
            \int \interpret{t} \diff \gamma_{x,a}.
        \eqqcolon
            \lambdCalc x. \lambdCalc t \in \widehat{\psi}_1.
            \mathrm E_a t
    \]
    is valued in $\Omega^{\topSp*} $ and measurable.
    \begin{claim}
        $\lambdCalc x. \lambdCalc t \in \widehat{\psi}_1.
            \mathrm E_a t$ is valued in $\Omega^{\topSp*} $.
    \end{claim}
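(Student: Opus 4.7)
The plan is to fix an arbitrary state $x\in\spSet$ and prove continuity of the map $\mathrm E_a\colon \widehat\psi \to \Omega$ sending $t$ to $\int \interpret{t}\diff\gamma_{x,a}$, which is precisely what ``valued in $\Omega^{\topSp*}$'' demands. The overall strategy is to first reduce full continuity to sequential continuity via a first-countability argument on $\widehat{\psi}$, and then pass the limit through the integral using Lebesgue's dominated convergence theorem.

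First I would invoke the induction hypothesis: the statement that $\quantTheory|_{\widehat{\psi}}$ takes values in $\Omega^{\widehat{\psi}}$ unpacks to the fact that for every fixed $y\in\spSet$ the evaluation $e_y\colon t\mapsto \interpret{t}(y)$ is continuous from $\widehat{\psi}$ to $\Omega$. Next I would look back at the preceding inductive topologisation of shapes: by structural induction $\widehat{\psi}$ is a finite product of the spaces $Y_i$ (the $\diamond_a$-case inherits the topology, the $\top$-case is a singleton, and each function-symbol case multiplies by one further $Y_i$), each of which is second countable Hausdorff by the assumption of the theorem. Finite products preserve both second countability and Hausdorffness, so $\widehat{\psi}$ is itself second countable Hausdorff and in particular first countable, which justifies the reduction of continuity of $\mathrm E_a$ to sequential continuity.

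The main step is then to exchange a sequential limit with the integral. For any sequence $t_n \to t$ in $\widehat{\psi}$, the continuity of each $e_y$ gives pointwise convergence $\interpret{t_n}(y)\to\interpret{t}(y)$ for every $y\in\spSet$; and since all $\interpret{t_n}$ are universally measurable and bounded by the constant function $1$ (which is trivially integrable against the probability measure $\gamma_{x,a}$), Lebesgue's dominated convergence theorem delivers $\mathrm E_a(t_n)\to\mathrm E_a(t)$, as required. The only minor subtlety worth flagging is that the integrands are merely universally measurable rather than $\sAlg$-measurable; this is harmless because one may pass to the $\gamma_{x,a}$-completion of $\sAlg$, on which each $\interpret{t_n}$ becomes truly measurable and DCT applies without modification.
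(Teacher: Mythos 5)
Your proposal is correct and follows essentially the same route as the paper: reduce continuity of $t\mapsto\int\interpret{t}\diff\gamma_{x,a}$ to sequential continuity via (second, hence first) countability of $\widehat\psi$, obtain pointwise convergence of the integrands from the induction hypothesis, and conclude by dominated convergence with the constant dominating function. The extra details you supply (why $\widehat\psi$ is second countable as a finite product of the $Y_i$, and the passage to the $\gamma_{x,a}$-completion to handle universal measurability) are harmless elaborations of steps the paper leaves implicit.
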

    \begin{claimproof}
        We observe that $ \quantTheory|_{\widehat{\psi}_1}(x) $ is actually continuous for each $x \in X$ by a sequence argument (which suffices due to second countability \cite[1.6.14-15]{Engelking89}):
        Given a sequence $t_i \in \widehat{\mathrm E_a \psi}$ converging to some $t \in \widehat{\mathrm E_a \psi}$ we have $
                \quantTheory(x)(t_i)
            \xrightarrow{i\to\infty}
                \quantTheory(x)(t)
        $ for every $x \in \spSet$ by assumption (convergence with respect to compact-open topology implies convergence at each point, as points are compact).
        Thus by dominated convergence theorem \cite[123C]{Fremlin} we conclude
        \[
                \quantTheory(x)( \mathrm E_a t_i)
            \xrightarrow{i\to\infty}
                \quantTheory(x)( \mathrm E_a t)
        .\]    
    \end{claimproof}
    
    \begin{claim}
        $\lambdCalc x. \lambdCalc t \in \widehat{\psi}_1.
            \mathrm E_a t$ is measurable.
    \end{claim}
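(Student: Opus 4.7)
The plan is to leverage second countability of $\Omega^{\widehat{\psi}_1}$ with the compact-open topology to reduce measurability of $\quantTheory|_{\widehat{\psi}_1}$ to a countable family of point-evaluation conditions, each of which reduces in turn to standard Giry-monad style measurability.

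For each fixed $t \in \widehat{\psi}_1$, the point evaluation $x \mapsto \mathrm E_a t(x) = \int \interpret{t} \diff \gamma^0_{a,x}$ is measurable: it factors as the composition of the measurable coalgebra component $x \mapsto \gamma^0_{a,x}\colon \measSp \to \giry\measSp$ with the Borel measurable map $\sigma_{\measSp}(\interpret t)\colon \giry\measSp \to \Omega$, the latter being Borel measurable by \cref{cor:predLifting_semiMeasDouble} since $\interpret t \in \predicates\measSp$.

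Under the outer assumption that each $Y_i$ is (locally compact) second countable Hausdorff, the space $\widehat{\psi}_1$, as a finite product of $Y_i$'s, is itself locally compact and second countable, so \cref{lem:FctSp_2ndCount} applies and $\Omega^{\widehat{\psi}_1}$ is second countable. Its compact-open topology admits a countable subbasis $\{S(K_n, U_m)\}_{n,m\in\omega}$, where $S(K,U) = \setBuilder{f}{f[K]\subseteq U}$, the $K_n$ are the compact closures of a countable basis of $\widehat{\psi}_1$ with relatively compact members (available by local compactness), and the $U_m$ are a countable basis of $\Omega$. It therefore suffices to show that $\invImSet{\quantTheory|_{\widehat{\psi}_1}}(S(K_n, U_m))$ is measurable for each $n, m \in \omega$.

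Fix $n, m$, choose a countable dense subset $D_n \subseteq K_n$, and, using normality of $\Omega$, write $U_m = \bigcup_{k\in\omega} C_k$ with each $C_k$ open and $\bar C_k \subseteq C_{k+1} \subseteq U_m$. By the preceding claim, $t \mapsto \mathrm E_a t(x)$ is continuous, so $\invImSet{(\mathrm E_a(\blank)(x))}(\bar C_k)$ is closed. Combining continuity, density of $D_n$ in $K_n$, and compactness of the continuous image $\mathrm E_a(\blank)(x)[K_n]$ against the increasing open cover $\{C_k\}$, one obtains
\[
\mathrm E_a(\blank)(x)[K_n] \subseteq U_m
\iff
\exists k \in \omega\colon \forall t \in D_n\colon \mathrm E_a t(x) \in \bar C_k.
\]
The preimage therefore equals $\bigcup_{k \in \omega} \bigcap_{t \in D_n} \setBuilder{x}{\mathrm E_a t(x) \in \bar C_k}$, a countable combination of measurable sets by the first step. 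The main point of care is this displayed equivalence: the forward direction invokes compactness of the continuous image against the increasing cover $\{C_k\}$, while the backward direction uses that the closed set $\invImSet{(\mathrm E_a(\blank)(x))}(\bar C_k)$ contains $D_n$, hence its closure $K_n$. Everything else is routine verification.
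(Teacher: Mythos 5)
Your proof is correct and reaches the same endpoint as the paper --- reducing measurability of the function-space-valued map to countably many point evaluations $x \mapsto \mathrm E_a t(x)$, each measurable because it is the interpretation of a single formula $\diamond_a t$ --- but it gets there by a genuinely different route. The paper first reduces to a compact piece $K$ of the index space via the k-space decomposition $\Omega^{Y} \cong \varprojlim_{K} \Omega^{K}$ and a Lindel\"of argument, then exploits that the compact-open topology on $\Omega^{K}$ coincides with the (metrizable) uniform topology, and tests measurability on $\varepsilon$-balls around a countable dense sequence in the image of $\quantTheory|_{\widehat{\psi}}$, using a dense sequence in $Y$ to turn each ball-preimage into a countable intersection of point-evaluation conditions. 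You instead work directly with a countable subbasis $S(K_n, U_m)$ of the compact-open topology and replace the metric argument by the exhaustion $U_m = \bigcup_k C_k$ together with compactness of the continuous image of $K_n$; your displayed equivalence is sound, and the continuity it needs is exactly the preceding claim. The trade-off is that your subbasis exists only when $\widehat{\psi}_1$ is locally compact, an assumption you flag parenthetically but which is not literally among the hypotheses of \cref{thm:interpretation_hemicompact} (only second countable Hausdorff); the paper's k-space detour is designed precisely to avoid relying on it inside this claim. In practice the cost is small, since the outer proof already invokes \cref{lem:FctSp_2ndCount} (which does require local compactness) and the intended application in \cref{lem:standardAppl} uses compact index spaces. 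Conversely, your route is more elementary --- it avoids the inverse-limit machinery, the separability-of-the-image step and the metrizability detour --- and it makes explicit, via the factorisation through $\giry\measSp$ and \cref{cor:predLifting_semiMeasDouble}, why the point evaluations are measurable, something the paper's proof merely asserts.
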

    \begin{claimproof}
        We first observe that being second countable $ Y $ is a k-space \cite[3.3.20]{Engelking89}.
        This is to say that there is some locally compact space $ Y' $ such that $Y$ is a quotient thereof.
        In this case $Y $ is the direct limit of it compact subspaces $ \paving K $ ordered by inclusion.
        In this case $\Omega^{\topSp*}$ is canonically homeomorphic to the inverse limit $ \varprojlim{}_{K\in \paving K } \Omega^K $ \cite[3.4.11]{Engelking89}.
        As the space $ \Omega^Y $ is also Lindelöf \cite[3.8.D]{Engelking89},
        is suffices to prove measurablity for any projection to some $ \Omega^K $ with $K \in \paving K $ \cite[4A3Db]{Fremlin}.
        This is to say that we can assume $\topSp*$ to be compact.
        For a compact space, the compact-open topology coincides with the uniform topology \cite[4.2.17]{Engelking89}.
        Thus it is metrisable.
        Recall that for metrisable spaces separability, second countability and Lindelöf are equiveridical \cite[4.1.16]{Engelking89} and hereditary properties.
        Thus by \cite[3.8.D]{Engelking89} $ \image \quantTheory|_{\widehat{\psi}} \subseteq \Omega^Y $ is second countable and separable.
        So it suffices to verify measurability of $ \quantTheory|_{\widehat{\psi}} $ on a countable subbase of $\Omega^Y$ (observe that any open can be formed by a countable union of elements of the base generated by the subbase).
        A countable subbase for the uniform topology of $ \image ( \quantTheory|_{\widehat{\psi}} ) \subseteq \Omega^Y $ is given by all open $\varepsilon$-balls around a sequence $
            \quantTheory|_{\widehat{\psi}}(x_i)
        $ dense in $\image \quantTheory|_{\widehat \psi}$ indexed by $i \in \omega$ and $\varepsilon \in 1/{\mathbb N}$ (this uses that $\Omega^Y$ is metrisable).
        Choose also a dense sequence $ y_j \in Y $.
        Set $ r_{i,j} \coloneqq \quantTheory|_{\widehat{\psi}}(x_i)(y_j) $.
        Observe
        \begin{align*}
        \MoveEqLeft[8]
                \invImSet{\quantTheory|_{\widehat{\psi}}}* \left(\varepsilon\text{-ball around } \quantTheory|_{\widehat{\psi}}(x_i)\right)
        \\  &\textrel{r_{i,j} = \quantTheory|_{\widehat{\psi}}(x_i)(y_j)}*=
                \invImSet{\quantTheory|_{\widehat{\psi}}}*\left(
                    \setBuilder
                        {f \in \Omega^Y}{ \forall y \in Y\colon \abs{r_{i,j} - f(y)} < \varepsilon }
                    \right)
        \\  &\textrel{triangle inequ.\nolinebreak[3]\ in $\Omega$}=
                    \invImSet{\quantTheory|_{\widehat{\psi}}}*\left(\setBuilder
                        {f \in \Omega^Y}
                        { \forall j \in \omega\colon \abs{r_{i,j} - f(y_j)} < \varepsilon }
                \right)
        \\  &=   \bigcap_{j \in \omega}
                    \invImSet{\interpret[\big]{(\mathrm E_a \hat\psi)_{y_j}}}*
                        \bigl( (r_{i,j} - \varepsilon, r_{i,j} + \varepsilon) \bigr)
        \intertext{as each $\interpret{(\mathrm E_a \hat\psi)_{y_j}}(\blank)$ is measurable}
            &\in
                \sAlg
        \text.
        \end{align*}
        As $\varepsilon$ and $i$ were arbitrary, $M$ is measurable.
    \end{claimproof}
\end{proof}

\end{document}